\newcommand{\pagenumbaa}{1}
\theoremstyle{plain}
\newtheorem{mythm}{Theorem}[section]
\newtheorem{myprop}[mythm]{Proposition}
\newtheorem{mycor}[mythm]{Corollary}
\theoremstyle{definition}
\newtheorem{mydef}[mythm]{Definition}
\newtheorem{myex}[mythm]{Example}
\newtheorem{myremark}[mythm]{Remark}
\newcommand{\prlsection}[1]{{\bf{#1}.} }
\newcommand{\ket}[1]{| #1  \rangle}
\newcommand{\bra}[1]{\langle #1 |}
\newcommand{\norm}[1]{\left\lVert#1\right\rVert}
\newcommand{\braket}[2]{|#1\rangle \langle #2 |}
\newcommand{\ketbra}[2]{\langle#1 | #2 \rangle}
\def\Hb{\ensuremath{H}_b}
\newcommand{\W}{\mathsf{W}} 
\newcommand{\V}{\mathsf{V}} 
\newcommand{\Rc}{\mathsf{R}} 
\def\setC{\mathsf{c}}
\newcommand{\Hh}[1]{H\!\left({#1}\right)} 
\newcommand{\Hc}[2]{H\!\left({#1}\!\left|{#2}\right.\right)} 
\newcommand{\I}[2]{I\!\left({#1};{#2}\right)} 
\newcommand{\Ic}[3]{I\!\left({#1};{#2}\!\left|{#3} \right. \right)} 
\newcommand{\Prob}[1]{\,{\mathbb P} \!\left[#1\right]} %
\newcommand{\Bernoulli}[1]{\textnormal{Bernoulli}\left(#1\right)}  
\newcommand{\F}[1]{F({#1})} 
\newcommand{\Tr}[1]{{\rm tr}\!\left[{#1}\right]} 
\newcommand{\Trp}[2]{{\rm tr}_{{#1}}\!\left[\,{#2}\right]} 
\DeclareMathOperator{\Trs}{tr}
\DeclareMathOperator{\enc}{enc}
\DeclareMathOperator{\dec}{dec}
\DeclareMathOperator{\st}{s.t.}
\makeatletter \renewcommand{\todo}[2][]{\tikzexternaldisable\@todo[#1]{#2}\tikzexternalenable} \makeatother
\newcommand{\BSC}{\textnormal{BSC}}
\newcommand{\BEC}{\textnormal{BEC}}
\newcommand{\brackett}[1]{ \left \lbrace \text{#1}\right \rbrace }
\newcommand{\markovDavid}{\small{\mbox{$-\hspace{-1.3mm} \circ \hspace{-1.3mm}-$}}}
\long\def\symbolfootnote[#1]#2{\begingroup\def\thefootnote{\fnsymbol{footnote}}\footnote[#1]{#2}\endgroup}
\begin{document}

\title{Alignment of Polarized Sets}

 \author{Joseph M.\ Renes}
 \email[]{$\brackett{renes,\,suttedav}$@phys.ethz.ch}
 \affiliation{Institute for Theoretical Physics, ETH Zurich, Switzerland}
  
  \author{David Sutter}
 \email[]{$\brackett{renes,\,suttedav}$@phys.ethz.ch}
 \affiliation{Institute for Theoretical Physics, ETH Zurich, Switzerland}

  \author{S.\ Hamed Hassani }
 \email[]{hamed@inf.ethz.ch}
 \affiliation{Department of Computer Science, ETH Zurich, Switzerland}


\begin{abstract}
Ar{\i}kan's polar coding technique is based on the idea of synthesizing $n$ channels from the $n$ instances of the physical channel by a simple linear encoding transformation. Each synthesized channel corresponds to a particular input to the encoder. For large $n$, the synthesized channels become either essentially noiseless or almost perfectly noisy, but in total carry as much information as the original $n$ channels. Capacity can therefore be achieved by transmitting messages over the essentially noiseless synthesized channels. 

Unfortunately, the set of inputs corresponding to reliable synthesized channels is poorly understood, in particular how the set depends on the underlying physical channel. In this work, we present two analytic conditions sufficient to determine if the reliable inputs corresponding to different discrete memoryless channels are \emph{aligned} or not, i.e.\ if one set is contained in the other. 
Understanding the alignment of the polarized sets is important as it is directly related to universality properties of the induced polar codes, which are essential in particular for network coding problems. We demonstrate the performance of our conditions on a few examples for wiretap and broadcast channels. 
Finally we show that these conditions imply that the simple quantum polar coding scheme of Renes \emph{et al.} [Phys.\ Rev.\ Lett.\ 109, 050504 (2012)] requires entanglement assistance for general channels, but also show such assistance to be unnecessary in many cases of interest. 
\end{abstract}

 \maketitle

 \setcounter{page}{\pagenumbaa}  
 \thispagestyle{plain}

\section{Introduction}
In Ar{\i}kan's celebrated \emph{polarization phenomenon}~\cite{arikan09}, applying a specific linear transformation called the \emph{polar transform} to $n$ instances of a binary-input output-symmetric discrete memoryless channel (DMC) $\W$ induces $n$ synthesized channels which become either ideal or useless channels as $n$ grows large. More precisely, when assigned with an index, the $n$ induced synthesized channels can be classified into two categories, defining two index sets: the set $\mathcal{D}(\W)$ of indices corresponding to good channels and the set $\mathcal{R}(\W)$ of indices that belong to bad channels. 
Polarization is the property that the sizes of these sets satisfy $\lim_{n\to \infty} \tfrac{1}{n}|\mathcal{D}(\W)| = I(\W)$ and $\lim_{n\to \infty} \tfrac{1}{n}|\mathcal{R}(\W)| = 1-I(\W)$, and this ensures that polar codes are capacity achieving \cite{arikan09}.

 However, the structure of $\mathcal{D}(\W)$ and $\mathcal{R}(\W)$ is poorly understood. In particular, the dependency on $\W$ is difficult to analyze in general. 
 For $\V$ a binary-input output-symmetric DMC different from $\W$, it is unclear if $\mathcal{D}(\W)$ and $\mathcal{D}(\V)$ are \emph{aligned} or not, i.e.\ whether $\mathcal{D}(\W) \subseteq \mathcal{D}(\V)$ or $\mathcal{D}(\W) \supseteq \mathcal{D}(\V)$.
 An exception is the case when $\V$ is assumed to be a \emph{degraded} version of $\W$ (cf.\ Definition~\ref{def:MC_LN_DEG}) which implies that $\mathcal{D}(\V) \subseteq \mathcal{D}(\W)$ \cite{korada_phd}. 
The methods introduced in \cite{hassani09} can be used to detect nonalignment of $\mathcal{D}(\W)$ and $\mathcal{D}(\V)$, but not their alignment.  
 
Understanding the structure (and the relation) of the polarized sets $\mathcal{D}(\W)$ and $\mathcal{D}(\V)$ is important in several respects. First, this is directly linked to the universality of polar codes, if one fixed code can be used for reliable communication over each member of a given class of channels $\mathcal{W}$.  Universal codes are important in different coding scenarios, for instance when the statistics of the actual channel are not known precisely. Second, several different channels are simultaneously involved in network coding tasks such as wiretap or broadcast channels, and alignment is helpful in designing efficient polar coding schemes. Third, knowledge of the structure and relation of polarized sets can be helpful in other aspects of polar coding, e.g.\ in the  construction of polar codes (see \cite[Chap.~5]{hassani_phd}).

Polar coding with successive cancellation (SC) decoding is not universal in general \cite{hassani09}. However, universality holds for certain classes of channels with a specific ordering, such as less noisy comparable channels (cf.\ Definition~\ref{def:MC_LN_DEG}) as explained in Proposition~\ref{prop:LN}. There has been recent progress in slightly modifying standard polar codes such that they become universal, however at the cost of larger blocklengths \cite{hassani13,sasogluLele13}.
Therefore it is of interest to have a computationally efficient way to determine if for a given class of channels $\mathcal{W}$ standard polar codes using SC decoding are universal on $\mathcal{W}$ or not.

Recently, the polarization phenomenon has been used to construct efficient codes, quantum polar codes, for transmitting quantum information. These codes inherit several desirable features of (classical) polar codes. In particular, quantum polar codes achieve high rates while allowing for an efficient encoding and decoding \cite{renes12,wilde_polar_2011}. An important open question regards the necessity of \emph{preshared entanglement}: Specifically, whether the coding scheme requires the sender and receiver to share a nonzero amount of maximally entangled states before the protocol begins. 

\vspace{2mm}
\prlsection{Contributions}
In this article, we introduce a condition for alignment (Theorem~\ref{thm:UB}) and a condition for nonalignment (Theorem~\ref{thm:LB}) of two arbitrary binary input symmetric channels. Applied to several examples of interest, we show that these conditions are sometimes close in the sense that it can be conclusively determined if there is an alignment of the polarized sets or not. 

Since aligned polarized sets imply that the corresponding polar codes are universal with SC decoding, our conditions can be used to determine if for a given set of DMCs polar codes are universal or not. We also show how alignment leads to simple polar coding schemes for a range of non-degradable wiretap and broadcast channels. 

In addition, we show that the two conditions can be used to determine whether quantum polar codes require entanglement assistance or not. We provide examples of quantum channels where no preshared entanglement is needed (e.g., a low-noise BB84 channel) and examples where entanglement assistance provably is needed (e.g., a high-noise depolarizing channel).

\vspace{2mm}

\prlsection{Structure} Section~\ref{sec:prelim} introduces basic concepts of polar codes and provides some background on wiretap and broadcast channel coding. In Section~\ref{sec:alignmentBounds} we present and prove the main results which are two conditions that can be used to analyze the alignment of polarized sets for arbitrary DMCs. Section~\ref{sec:applications} discusses a few applications of the two conditions. In particular we cover a BSC/BEC pair, BSC-BEC wiretap channels and a BSC-BEC broadcast channel. Section~\ref{sec:entanglement} shows how ideas developed in the previous sections can be used to answer the question if quantum polar codes need entanglement assistance or not. We conclude in Section~\ref{sec:conclusion} with a summary and potential subjects of further research. 

\vspace{2mm}

\prlsection{Notation}
Let $[k]:=\left \lbrace 1,\ldots,k \right \rbrace$ for $k\in \mathbb{Z}^+$. For $x \in \mathbb{Z}_2^k$ and $\mathcal{I}\subseteq [k] $ we have $x[\mathcal{I}]=[x_i:i\in \mathcal{I}]$, $x^i=[x_1,\ldots,x_i]$ and $x_j^i=[x_j,\ldots,x_i]$ for $j\leq i$. For two sets $\mathcal{A},\mathcal{B}\subseteq [n]$  we write $\mathcal{A}$ $\scriptsize{\overset{\boldsymbol{\cdot}}{\subseteq}}$ $\mathcal{B}$ meaning that $\mathcal{A}$ is essentially contained in $\mathcal{B}$ or more precisely $|\mathcal{A}\backslash \mathcal{B}|=o(n)$. The complement of as set $\mathcal{A} \subseteq [n]$ is denoted by $\bar{\mathcal{A}}:=[n]\backslash \mathcal{A}$. All logarithms in this article are with respect to the basis $2$. For $\alpha \in [0,1]$, $\Hb(\alpha):=-\alpha \log \alpha - (1-\alpha) \log(1-\alpha) $ denotes the binary entropy function. We denote the Bhattacharyya parameter of a binary-input discrete memoryless channel $\W:\{0,1\} \to \mathcal{Y}$ by $Z(\W):= \sum_{y \in \mathcal{Y}} \sqrt{\W(y|0) \W(y|1)} \in [0,1]$. For some binary string $b \in \{0,1\}^k$ we denote its binary complement by $\bar b$. The logical \emph{and} is denoted by $\wedge$ and the logical \emph{or} by $\vee$. The binary symmetric channel with transition probability $\alpha \in [0,\tfrac{1}{2}]$ is 
abbreviated by $\BSC(\alpha)$ and the binary erasure channel with erasure probability $\beta \in [0,1]$ is denoted by $\BEC(\beta)$. The space of all Hermitian operators in a finite dimensional Hilbert space $\mathcal{H}$ is denoted by $\mathrm{H}$. We denote the set of density operators on a Hilbert space $\mathcal{H}$ by $\mathcal{D}(\mathcal{H}):= \{\rho \in \mathrm{H}: \rho \geq 0,  \Tr{\rho} = 1\}$. For a density operator $\rho \in \mathcal{D}(\mathcal{H})$ we define its von Neumann entropy by $H(\rho):=-\Tr{\rho \log \rho}$. The space of trace class operators acting on some Hilbert space $\mathcal{H}$ is denoted by $\mathcal{S}(\mathcal{H})$.
The Pauli matrices are denoted by $\sigma_X, \sigma_Y$ and $\sigma_Z$. For a matrix $A \in \mathbb{C}^{m \times n}$ the trace norm is defined as $\norm{A}_{\Trs}:=\mathrm{tr}[\sqrt{A^{\dagger}A}]$. For two maps $\Phi: A \to B$ and $\Theta:B\to C$ the map $\Theta \circ \Phi: A\to C $ denotes the concatenation of $\Phi$ with $\Theta$.
\section{Preliminaries} \label{sec:prelim}
Given a binary-input output-symmetric DMC $\W: \{ 0,1 \} \to \mathcal{Y}$, following \cite{arikan09}  we define a \emph{channel splitting} map $(\W,\W) \to (\W_0,\W_1)$ where the synthesized channels $\W_0: \{0,1 \}\to \mathcal{Y}^2$ and $\W_1:\{0,1\}\to \{0,1\}\times \mathcal{Y}^2$ are given by 
\begin{alignat}{2}
&\W_0(y_1,y_2|u_1) &&=\sum_{u_2 \in\{0,1\}} \frac{1}{2} \W(y_1|u_1 \oplus u_2) \W(y_2|u_2) \quad \textnormal{and} \\
&\W_1(y_1,y_2,u_1|u_2) &&= \frac{1}{2} \W(y_1|u_1 \oplus u_2) \W(y_2|u_2),
\end{alignat}
where $u_1,u_2$ are (for symmetric channels) assumed to be i.i.d.\ Bernoulli$(\tfrac{1}{2})$ distributed. The channel splitting map outputs two synthesized channels where $\W_0$ is more noisy and $\W_1$ more reliable than the original channel $\W$. By applying the transform $k = \log n$ times we get $n$ synthesized channels such that in the limit $n\to \infty$ essentially all synthesized channels are either almost noiseless or very noisy \cite{arikan09}. A recursive application of the rate splitting can be visualized in a \emph{polarization tree} that defines the notation of the synthesized channels (cf.\ Figure~\ref{fig:tree}).
\begin{figure}[!htb]
\centering
\def \x{0.7}
\def \y{0.8}
\def \la{0.25}
\def \l{4}

\begin{tikzpicture}[scale=1,auto, node distance=1cm,>=latex']
	

\node at (0,0) {$\W$};
\node at (-2*\x,-\y) {$\W_0$};
\node at (2*\x,-\y) {$\W_1$};
\node at (-3*\x,-2*\y) {$\W_{00}$};
\node at (-\x,-2*\y) {$\W_{01}$};
\node at (1*\x,-2*\y) {$\W_{10}$};
\node at (3*\x,-2*\y) {$\W_{11}$};
\node[rotate=70] at (-3*\x-\la,-2*\y-2*\la) {$\ldots$};
\node[rotate=110] at (3*\x+\la,-2*\y-2*\la) {$\ldots$};
\node[rotate=90] at (-\x,-2*\y-2*\la) {$\ldots$};
\node[rotate=90] at (\x,-2*\y-2*\la) {$\ldots$};

\draw[] (-\la,-0.5*\la) -- (-2*\x+\la,-\y+0.75*\la);
\draw[] (\la,-0.5*\la) -- (2*\x-\la,-\y+0.75*\la);

\draw[] (-2*\x-\la,-\y-0.5*\la) -- (-3*\x+0.5*\la,-2*\y+0.75*\la);
\draw[] (-2*\x+\la,-\y-0.5*\la) -- (-1*\x-0.5*\la,-2*\y+0.75*\la);

\draw[] (2*\x-\la,-\y-0.5*\la) -- (1*\x+0.5*\la,-2*\y+0.75*\la);
\draw[] (2*\x+\la,-\y-0.5*\la) -- (3*\x-0.5*\la,-2*\y+0.75*\la);

\node[] at (\l,0) {level $0$};
\node[] at (\l,-\y) {level $1$};
\node[] at (\l,-2*\y) {level $2$};
\end{tikzpicture}
\caption{Polarization tree up to level $2$.}
\label{fig:tree}
\end{figure}
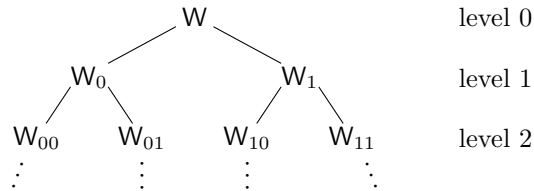 

Let $X^n$ be a vector with i.i.d.\ Bernoulli($p$) distributed entries for $p\in[0,1]$ and  $n=2^k$ with $k\in \mathbb{N}$. Then, define $U^n = G_n X^n$, where $G_n:=\left(\begin{smallmatrix}
1 & 1\\
0 & 1
\end{smallmatrix} \right)^{\otimes  \log n}$ denotes the polarization (or polar) transform. Furthermore, let $Y^n = \W^n X^n$, where $\W^n$ denotes $n$ independent uses of a DMC $\W:\mathcal{X}\to \mathcal{Y}$ and let $Z^n=\V^n X^n$, where $\V:\mathcal{X} \to \mathcal{Z}$ denotes another DMC. For any $\varepsilon \in (0,1)$ we consider the four sets
\begin{subequations}  \label{eq:polarizedSets}
\begin{align}
\mathcal{D}_{\varepsilon}^n(\W)&:= \left \lbrace i \in[n]: Z(\W_{b(i-1)})\leq \varepsilon \right \rbrace \label{eq:DsetW}\\
\mathcal{R}_{\varepsilon}^n(\W)&:= \left \lbrace i \in[n]: Z(\W_{b(i-1)})\geq 1-\varepsilon \right \rbrace \label{eq:RsetW}\\
\mathcal{D}_{\varepsilon}^n(\V)&:= \left \lbrace i \in[n]: Z(\V_{b(i-1)})\leq \varepsilon \right \rbrace\label{eq:DsetV}\\
\mathcal{R}_{\varepsilon}^n(\V)&:= \left \lbrace i \in[n]: Z(\V_{b(i-1)})\geq 1-\varepsilon \right \rbrace,\label{eq:RsetV}
\end{align}
\end{subequations}
where $b(i)$ for $i \in [n]$ denotes the binary representation of the integer $i$ with $\log n$ bits. The sets $\mathcal{D}_{\varepsilon}^n(\W)$ and $\mathcal{D}_{\varepsilon}^n(\V)$ define a polar code for $\W$ respectively $\V$ that is reliable using SC decoding. 
Within this article the parameter $\varepsilon \in (0,1)$ can be arbitrary. As discussed in \cite{arikan09} the error probability of the polar codes for $\W$ and $\V$ will decay faster for small $\varepsilon$. Therefore this parameter should be chosen as small as possible. As a result, for most applications it is convenient to assume that $\varepsilon=O(2^{-n^{\nu}})$ for some $\nu < \tfrac{1}{2}$. We note that in general for an arbitrary DMC $\W$ and $\varepsilon \in (0,\frac{1}{2})$ we have $\overline{\mathcal{D}_{\varepsilon}^n(\W)}=\mathcal{R}_{1-\varepsilon}^n(\W)\supsetneq \mathcal{R}_{\varepsilon}^n(\W)$.

Recall that we call two sets, e.g., $\mathcal{D}_{\varepsilon}^n(\W)$ and $\mathcal{D}_{\varepsilon}^n(\V)$ being aligned if $\mathcal{D}_{\varepsilon}^n(\W) \subseteq \mathcal{D}_{\varepsilon}^n(\V)$ or $\mathcal{D}_{\varepsilon}^n(\W) \supseteq \mathcal{D}_{\varepsilon}^n(\V)$. We say that these two sets are \emph{essentially aligned} if $\mathcal{D}_{\varepsilon}^n(\W)$ $\scriptsize{\overset{\boldsymbol{\cdot}}{\subseteq}}$ $\mathcal{D}_{\varepsilon}^n(\V)$ or $\mathcal{D}_{\varepsilon}^n(\W)$ $\scriptsize{\overset{\boldsymbol{\cdot}}{\supseteq}}$ $\mathcal{D}_{\varepsilon}^n(\V)$. 
 We next define three standard orderings between two DMCs for which some results about the alignment of the sets \eqref{eq:polarizedSets} are available.


\begin{mydef}[more capable, less noisy, degradable] \label{def:MC_LN_DEG}
Let $\W:\mathcal{X} \to \mathcal{Y}$ and $\V:\mathcal{X}\to \mathcal{Z}$ be two DMCs then
\begin{itemize}
\item $\W$ is \emph{more capable} than $\V$ if $\I{X}{Y}\geq \I{X}{Z}$ for all distributions $P_X $.
\item $\W$ is \emph{less noisy} than $\V$ if $\I{U}{Y}\geq \I{U}{Z}$ for all distributions $P_{U,X}$ where $U$ has finite support and $U\markovDavid X \markovDavid (Y,Z)$ form a Markov chain.
\item $\V$ is said to be a (stochastically) \emph{degraded} version of $\W$ if there exists a channel $\mathsf{T}:\mathcal{Y}\to \mathcal{Z}$ such that $\V(z|x)=\sum_{y \in \mathcal{Y}} \W(y|x)\mathsf{T}(z|y)$ for all $x \in \mathcal{X}$, $z\in \mathcal{Z}$.
\end{itemize}
\end{mydef}
Note that the relation between these three classes of channels is well understood. Every channel that is degradable is also less noisy and every channel that is less noisy is also more capable. The converse is not true, i.e., there exist channels that are more capable but not less noisy and channels that are less noisy but not degradable \cite{korner75}.

\begin{myprop}[Alignment for less noisy channels {\cite[Thm.\ 10]{sutter13_2} and \cite[App.\ A]{sasogluLele13}}] \label{prop:LN}
If $\W$ is less noisy than $\V$, then for every $\varepsilon \in (0,1)$, $\mathcal{D}_{\varepsilon}^n(\W)$ $\scriptsize{\overset{\boldsymbol{\cdot}}{\supseteq}}$ $\mathcal{D}_{\varepsilon}^n(\V)$ and $\mathcal{R}_{\varepsilon}^n(\W)$ $\scriptsize{\overset{\boldsymbol{\cdot}}{\subseteq}}$ $\mathcal{R}_{\varepsilon}^n(\V)$.\footnote{We note that for finite values of $n$ it makes a difference if the polarized sets are defined with respect to the Bhattacharyya parameter (as done within this work) or the entropy (as done in \cite{sutter13_2}). As a result we get in \cite{sutter13_2} a proper alignment whereas with the definition used in this article we obtain an essential alignment result.}
\end{myprop}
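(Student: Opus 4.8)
The plan is to reduce the statement about the polarized sets $\mathcal{D}_\varepsilon^n$ and $\mathcal{R}_\varepsilon^n$ to a comparison of the Bhattacharyya parameters of the synthesized channels, node by node in the polarization tree. The key structural fact I would rely on is that the ``less noisy'' ordering is preserved under the channel splitting map: if $\W$ is less noisy than $\V$, then both $\W_0$ is less noisy than $\V_0$ and $\W_1$ is less noisy than $\V_1$. Iterating this along any path $b \in \{0,1\}^{\log n}$ in the tree gives that $\W_b$ is less noisy than $\V_b$ for every index $i \in [n]$ with $b = b(i-1)$.

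Next I would need to connect the less-noisy ordering at a single synthesized channel to a comparison of Bhattacharyya parameters. The clean direction here is via mutual information: less noisy implies more capable, so $I(\W_b) \geq I(\V_b)$, equivalently $H(\W_b) \leq H(\V_b)$ where $H$ denotes the conditional entropy of the channel input given its output. Using the standard polarization fact that for synthesized channels the conditional entropy polarizes in lockstep with the Bhattacharyya parameter — more precisely, that $Z$ small is essentially equivalent to $H$ small, and $Z$ close to $1$ is essentially equivalent to $H$ close to $1$, up to an $o(n)$ fraction of indices after $\log n$ steps (this is Ar{\i}kan's polarization theorem together with the sandwich bounds $Z^2 \leq 1-H^2$ wait, rather $(Z)^2 + H^2 \le$ suitable bounds, i.e.\ $H \le \log(1+Z) \le Z$ and $Z^2 \le 1 - (1-H)^2$ type inequalities) — I can transfer the entropy comparison to an essential comparison of the $Z$-based sets. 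Concretely: if $i \in \mathcal{D}_\varepsilon^n(\W)$ then $Z(\W_b) \le \varepsilon$, hence $H(\W_b)$ is small, hence (for all but $o(n)$ indices, by polarization applied to $\V$) $H(\V_b)$ is small and $Z(\V_b) \le \varepsilon'$; choosing the threshold carefully and invoking that the fraction of ``unpolarized'' indices vanishes yields $\mathcal{D}_\varepsilon^n(\W)$ $\scriptsize{\overset{\boldsymbol{\cdot}}{\supseteq}}$ $\mathcal{D}_\varepsilon^n(\V)$ is the wrong direction — let me restate: since $H(\W_b)\le H(\V_b)$, the channel $\W_b$ being good forces nothing about $\V_b$, but $\V_b$ being good ($Z(\V_b)\le\varepsilon$, so $H(\V_b)$ small) combined with $H(\W_b)\le H(\V_b)$ forces $H(\W_b)$ small, hence by polarization $Z(\W_b)$ small for all but $o(n)$ of these indices, giving $\mathcal{D}_\varepsilon^n(\V)$ $\scriptsize{\overset{\boldsymbol{\cdot}}{\subseteq}}$ $\mathcal{D}_\varepsilon^n(\W)$ as claimed. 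The statement for $\mathcal{R}_\varepsilon^n$ follows symmetrically, using that $H(\W_b)$ close to $1$ forces $H(\V_b)$ close to $1$.

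The main obstacle, and the place where the $o(n)$ losses genuinely enter, is the gap between the $Z$-threshold definition of the sets and the $H$-threshold that is naturally compatible with the less-noisy ordering: the inequality $H(\W_b) \le H(\V_b)$ is exact at every node, but $Z$ and $H$ are only related by two-sided bounds that are loose for intermediate values. The resolution is that after $\log n$ polarization steps the synthesized channels concentrate near the extremes, so the number of indices $i$ for which $H(\V_{b(i-1)})$ lies in any fixed open subinterval of $(0,1)$ — equivalently the number for which the $Z$-$H$ discrepancy matters — is $o(n)$; this is exactly the content of Ar{\i}kan's polarization theorem and is what downgrades a would-be exact inclusion to an essential inclusion, consistent with the footnote. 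I would cite \cite{sutter13_2} and \cite{sasogluLele13} for the entropy-based version and then package the $Z$-to-$H$ translation as the only additional ingredient; since the full argument already appears in those references, for this paper it suffices to note that Proposition~\ref{prop:LN} is a direct consequence of \cite[Thm.\ 10]{sutter13_2} combined with the elementary relations between $Z(\W_b)$ and $H(\W_b)$ and Ar{\i}kan's polarization theorem.
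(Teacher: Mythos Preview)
The paper does not prove Proposition~\ref{prop:LN} at all; it is stated as a known result with references to \cite[Thm.~10]{sutter13_2} and \cite[App.~A]{sasogluLele13}, and the footnote explains precisely the translation you describe: in \cite{sutter13_2} the polarized sets are defined via entropy, yielding exact inclusion, while the Bhattacharyya-based definition used here only yields essential inclusion after accounting for the $o(n)$ unpolarized indices.

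Your outline matches the argument in those references. The structural ingredient---that the less-noisy ordering is preserved by each one-step transform, i.e.\ $\W$ less noisy than $\V$ implies $\W_0$ less noisy than $\V_0$ and $\W_1$ less noisy than $\V_1$---is exactly the lemma proved in \cite{sasogluLele13} and \cite{sutter13_2}, and iterating it gives $H(\W_b)\le H(\V_b)$ for every $b$. Your handling of the $Z$-versus-$H$ gap is also the right one: from $i\in\mathcal{D}_\varepsilon^n(\V)$ one gets $H(\V_b)$ small via $H\le Z$ (actually $H\le \Hb(\tfrac{1-\sqrt{1-Z^2}}{2})$ or simply $H\le \log(1+Z)$), hence $H(\W_b)$ small, and then polarization of $\W$ ensures that all but $o(n)$ such indices satisfy $Z(\W_b)\le\varepsilon$, since an index with $Z(\W_b)\ge 1-\varepsilon$ would force $H(\W_b)$ bounded away from $0$. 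The $\mathcal{R}$-statement is symmetric. So your proposal is correct and coincides with what the paper invokes by citation; the exposition in your middle paragraph is a bit tangled (you reverse direction once before correcting yourself), but the final argument is sound.
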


If $\W$ is more capable than $\V$, in general $|\mathcal{D}_{\varepsilon}^n(\V) \backslash \mathcal{D}_{\varepsilon}^n(\W)| = \Omega(n)$, i.e., the two sets $\mathcal{D}_{\varepsilon}^n(\W)$ and $\mathcal{D}_{\varepsilon}^n(\V)$ are not aligned -- not even essentially \cite{hassani09}. However, when considering a particular input distribution an alignment result for two more capable channels has been proven recently.

\begin{myprop}[{\cite[Cor.\ 9]{sutter13_2}}] \label{prop:MC}
Let $\W$ be more capable than $\V$ and consider an input distribution $P_X$ that it maximizes $\I{X}{Y}-\I{X}{Z}$. Then for $\epsilon =O(2^{-n^{\nu}})$ with $\nu <\tfrac{1}{2}$ we have $\mathcal{D}_{\varepsilon}^n(\V)$ $\scriptsize{\overset{\boldsymbol{\cdot}}{\subseteq}}$ $\mathcal{D}_{\varepsilon}^n(\W)$.
\end{myprop}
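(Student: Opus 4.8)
The plan is to prove that $|\mathcal{D}_{\varepsilon}^n(\V)\setminus\mathcal{D}_{\varepsilon}^n(\W)|=o(n)$, by sandwiching this quantity between an exact counting identity that exploits the optimality of $P_X$ and a converse bound that exploits the more-capable hypothesis. Throughout, write $h_i^Y:=\Hc{U_i}{U^{i-1}Y^n}$, $h_i^Z:=\Hc{U_i}{U^{i-1}Z^n}$ and $C_s:=\max_{P_X}\big(\I{X}{Y}-\I{X}{Z}\big)$, and recall that (by the rate of polarization, which is exactly why the regime $\varepsilon=O(2^{-n^{\nu}})$, $\nu<\tfrac12$, is the relevant one) all but $o(n)$ of the indices have $h_i^Y$ and $h_i^Z$ within $o(1)$ of $\{0,1\}$, and that on these indices the Bhattacharyya-defined sets $\mathcal{D}_{\varepsilon}^n(\cdot)$ agree, up to $o(n)$, with the entropy-defined polarized sets.

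First I would establish the identity. Since $G_n$ is an involution, $U^n$ and $X^n$ are in bijection, so the chain rule together with the facts that the $X_j$ are i.i.d.\ and $\W,\V$ memoryless gives $\sum_i h_i^Y=\Hc{U^n}{Y^n}=\Hc{X^n}{Y^n}=n\Hc{X}{Y}$ and $\sum_i h_i^Z=n\Hc{X}{Z}$, whence
\[
\sum_{i\in[n]}\big(h_i^Z-h_i^Y\big)=n\big(\Hc{X}{Z}-\Hc{X}{Y}\big)=n\big(\I{X}{Y}-\I{X}{Z}\big)=nC_s ,
\]
the last step being precisely the hypothesis that $P_X$ maximizes $\I{X}{Y}-\I{X}{Z}$. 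On a polarized index the summand $h_i^Z-h_i^Y$ is $o(1)$-close to $+1$ when the index lies in $\mathcal{S}:=\mathcal{D}_{\varepsilon}^n(\W)\setminus\mathcal{D}_{\varepsilon}^n(\V)$, to $-1$ when it lies in $\mathcal{B}:=\mathcal{D}_{\varepsilon}^n(\V)\setminus\mathcal{D}_{\varepsilon}^n(\W)$, and to $0$ otherwise; hence $|\mathcal{S}|-|\mathcal{B}|=nC_s+o(n)$.

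Next I would bound $|\mathcal{S}|$ from above by $nC_s+o(n)$. The indices in $\mathcal{S}$ are reliable for $\W$, so $U[\mathcal{S}]$ is decodable from $Y^n$; after deleting a further $o(n)$ of them one may also ensure $h_i^Z$ is $o(1/n)$-close to $1$ on the rest, so that $\I{U[\mathcal{S}]}{Z^n}=o(1)$ by the chain rule. Placing the message on $\mathcal{S}$, local randomness on the indices reliable for both channels (and, for non-symmetric $\W$, the appropriate shaping distribution on the indices with $\Hc{U_i}{U^{i-1}}\to0$), and freezing the remainder, produces a genuine wiretap code for the pair $(\W,\V)$ of rate $|\mathcal{S}|/n-o(1)$ with vanishing error and leakage. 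By the converse of the wiretap coding theorem its rate is at most $\max_{V\markovDavid X\markovDavid(Y,Z)}\big(\I{V}{Y}-\I{V}{Z}\big)$, and since $\W$ is more capable than $\V$ this auxiliary-variable expression collapses to $C_s$ (for each value $v$ of the auxiliary the inner difference is an ordinary $\I{X}{Y}-\I{X}{Z}$ evaluated at the input distribution $P_{X|V=v}$, hence at most $C_s$). Thus $|\mathcal{S}|\le nC_s+o(n)$, and combining with the first step yields $|\mathcal{B}|=|\mathcal{S}|-nC_s+o(n)=o(n)$, which is the claim.

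The main obstacle will be making the converse of the third paragraph airtight: one must verify that the natural polar construction supported on $\mathcal{S}$ really is an admissible wiretap code, i.e.\ bound the total leakage $\I{U[\mathcal{S}]}{Z^n}$ uniformly -- this is exactly where $\varepsilon=O(2^{-n^{\nu}})$ with $\nu<\tfrac12$ is used, since it forces the synthesized Bhattacharyya parameters on the reliable indices (and thus their conditional entropies) to decay fast enough for the leakage over all of $\mathcal{S}$ to still be $o(1)$ -- and then invoke the wiretap converse together with the more-capable reduction of the secrecy-capacity formula. The first two paragraphs are essentially routine, the only delicate point there being the passage between the $\mathcal{D}_{\varepsilon}^n(\cdot)$ and the entropy-defined polarized sets.
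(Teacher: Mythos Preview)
The paper does not give its own proof of this proposition; it is simply cited from \cite[Cor.~9]{sutter13_2}, so there is nothing in the present paper to compare against directly.

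That said, your argument is sound and is in fact the natural route to this result. The two pillars you identify --- the chain-rule identity $\sum_i(h_i^Z-h_i^Y)=n\big(\I{X}{Y}-\I{X}{Z}\big)=nC_s$ at the optimizing $P_X$, and the wiretap converse combined with the more-capable simplification $C_s(\W,\V)=\max_{P_X}\big(\I{X}{Y}-\I{X}{Z}\big)$ (Corollary~\ref{cor:CsMoreCapable}) --- are exactly the ingredients behind the cited corollary. Your decomposition into $|\mathcal{S}|-|\mathcal{B}|=nC_s+o(n)$ via polarization and $|\mathcal{S}|\leq nC_s+o(n)$ via the converse is correct, and you have correctly flagged the only genuinely delicate step: turning the polar construction on $\mathcal{S}$ into a bona fide wiretap code with $o(1)$ leakage so that the converse applies. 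The $\varepsilon=O(2^{-n^\nu})$ regime is indeed what makes the total leakage summable, and for non-uniform $P_X$ the shaping (Honda--Yamamoto-type) construction you allude to is what is needed; both points are standard and your sketch handles them at the right level of detail.
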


\subsection{Wiretap channels}
In a wiretap channel coding scenario, Alice would like to transmit a message $M^k \in \mathcal{M}^k$ privately to Bob. The messages can be distributed according to some arbitrary distribution $P_{M^k}$. To do so, she first encodes the message by computing $X^n=\enc(M^k)$ for some encoding function $\enc:\mathcal{M}^k \to \mathcal{X}^n$ and then sends $X^n$ over $n$ copies of the discrete memoryless wiretap channel. A wiretap channel consists of a channel that transmits the sent message to Bob, i.e., $Y^n = \W^n X^n$, where $\W: \mathcal{X} \to \mathcal{Y}$ denotes the channel between Alice and Bob. At the same time the sent message is transmitted over a (possibly) different channel $\V:\mathcal{X}\to \mathcal{Z}$ to the eavesdropper, i.e., $Z^n = \V^n X^n$.
Bob next decodes the received message to obtain a guess for Alice's message $\hat M^k = \dec(Y^n)$ for some decoding function $\dec:\mathcal{Y}^n \to \mathcal{M}^k$. The private channel coding scheme should be reliable, i.e.,\ satisfy the reliability condition
\begin{equation}
 \lim \limits_{k\to \infty}\Prob{M^k\ne \hat M^k}=0 \label{eq:reliability}
\end{equation}
and secure, i.e.,\ satisfy the (strong) secrecy condition
\begin{equation}
 \lim_{k\rightarrow \infty}\norm{P_{M^k,Z^n,C}-P_{M^k}\times P_{Z^n,C}}_1=0.
\label{eq:secrecy}
\end{equation}
The variable $C$ denotes additional information made public by the protocol. 
In the limit $k\to \infty$ the secrecy condition \eqref{eq:secrecy} is  equivalent to the historically older (strong) secrecy condition $\lim_{k \to \infty} \I{M^k}{Z^n,C}=0$. 
The highest rate fulfilling \eqref{eq:reliability} and \eqref{eq:secrecy} is called the \emph{secrecy capacity}. 
Csisz\'ar and K\"orner showed \cite[Corollary 2]{csiszar78} that there exists a single-letter formula for the secrecy capacity.\footnote{Csisz\'ar and K\"orner considered a weaker security criterion that was shown later to be insufficient. Maurer and Wolf showed that the single-letter formula remains valid considering the (strong) secrecy condition \eqref{eq:secrecy} \cite{maurer00}.}
\begin{mythm}[Secrecy capacity \cite{csiszar78}] \label{thm:Cs}
For an arbitrary discrete memoryless wiretap channel as introduced above the secrecy capacity is given by
\begin{equation} \label{eq:secrecyCapacity}
C_s(\W,\V)=\left \lbrace \begin{array}{rl}
\max \limits_{P_{U,X}} & \Hc{U}{Z}-\Hc{U}{Y}\\
\mathrm{s.t.} & U \markovDavid X \markovDavid (Y,Z), \\
&|\mathcal{U}|\leq |\mathcal{X}|. 
\end{array} \right.
\end{equation}
\end{mythm}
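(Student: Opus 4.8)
The plan is to establish \eqref{eq:secrecyCapacity} in the two standard steps --- a direct (achievability) part and a converse --- writing $R^\star$ for the single-letter quantity on the right-hand side and $C_s=C_s(\W,\V)$ for the operational secrecy capacity. Since this is a known result of Csisz\'ar and K\"orner (with the strong-secrecy strengthening due to Maurer and Wolf), I would follow the classical argument.

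For achievability I would fix a test channel $P_{X|U}$ realizing $U\markovDavid X\markovDavid(Y,Z)$ and build a Wyner-type wiretap code with \emph{channel prefixing}: sample $2^{n(R+R')}$ i.i.d.\ codewords $u^n(m,m')$ from $P_U^{\times n}$, indexed by a message $m\in[2^{nR}]$ and a local randomizer $m'\in[2^{nR'}]$, and transmit $x^n$ obtained by passing $u^n(m,m')$ through $P_{X|U}^{\times n}$. Taking $R+R'<\I{U}{Y}$ makes $(m,m')$ decodable from $Y^n$ by joint-typicality decoding, which yields reliability \eqref{eq:reliability}; taking $R<\I{U}{Y}-\I{U}{Z}$ (so $R'>\I{U}{Z}$) makes each message sub-codebook a good \emph{channel-resolvability} (soft-covering) code for the eavesdropper's channel $\V$ composed with $P_{X|U}$, so that, averaged over the random code, the eavesdropper's output law given $M^k=m$ is exponentially close in total variation to a fixed distribution, uniformly in $m$; a union bound over $m$ then gives the strong secrecy condition \eqref{eq:secrecy}. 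Letting $n\to\infty$ and then optimizing over $P_{U,X}$ shows every rate below $R^\star$ is achievable, so $C_s\ge R^\star$; the cardinality restriction $|\mathcal U|\le|\mathcal X|$ costs nothing because it is imposed on the optimization defining $R^\star$ itself.

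For the converse, take any sequence of codes with uniform message $M^k$ at rate $R$, public information $C$, vanishing error and vanishing leakage. Fano's inequality gives $\Hc{M^k}{Y^n,C}\le n\delta_n$ and \eqref{eq:secrecy} gives $\I{M^k}{Z^n,C}\le n\delta_n'$ with $\delta_n,\delta_n'\to 0$, whence $nR\le\Ic{M^k}{Y^n}{C}-\Ic{M^k}{Z^n}{C}+n\epsilon_n$ with $\epsilon_n\to 0$. Decomposing $\Ic{M^k}{Y^n}{C}=\sum_i\Ic{M^k}{Y_i}{C,Y^{i-1}}$ and $\Ic{M^k}{Z^n}{C}=\sum_i\Ic{M^k}{Z_i}{C,Z_{i+1}^n}$ and applying the Csisz\'ar sum identity twice --- once conditioned on $C$, once conditioned on $(M^k,C)$ --- the difference collapses to $\sum_{i=1}^n\big[\Ic{M^k}{Y_i}{C,Y^{i-1},Z_{i+1}^n}-\Ic{M^k}{Z_i}{C,Y^{i-1},Z_{i+1}^n}\big]$. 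Now for each $i$ and each realization $v$ of $V_i:=(C,Y^{i-1},Z_{i+1}^n)$, memorylessness forces the Markov chain $M^k\markovDavid X_i\markovDavid(Y_i,Z_i)$ under the conditional law given $V_i=v$, so $(U,X)=(M^k,X_i)$ with that law is feasible for the single-letter problem and the $i$-th summand is at most $\sup_{P_{U,X}:\,U\markovDavid X\markovDavid(Y,Z)}(\I{U}{Y}-\I{U}{Z})$; summing and dividing by $n$ gives $R\le\sup(\cdots)+\epsilon_n$, hence $C_s$ is at most this supremum. Finally, the Fenchel--Eggleston--Carath\'eodory theorem lets one collapse the support of $U$ to $|\mathcal X|$ points while preserving $P_X$ (which fixes $|\mathcal X|-1$ functionals) and the scalar $\Hc{U}{Z}-\Hc{U}{Y}=\I{U}{Y}-\I{U}{Z}$, so the supremum equals $R^\star$ and $C_s\le R^\star$.

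The main obstacle is the secrecy half of the achievability under the \emph{strong} metric \eqref{eq:secrecy}: the original Wyner/Csisz\'ar--K\"orner analysis only controls $\tfrac1n\I{M^k}{Z^n,C}$, and upgrading to vanishing --- indeed exponentially small --- total-variation distance, uniformly over messages, requires the soft-covering/channel-resolvability lemma together with a concentration estimate for the random codebook and a union bound, which is precisely the Maurer--Wolf refinement cited in the footnote. By contrast the converse is routine bookkeeping; the only delicate points there are getting the two applications of the Csisz\'ar sum identity to line up and the standard cardinality reduction.
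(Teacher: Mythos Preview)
The paper does not prove this theorem; it is stated as a known result with a citation to Csisz\'ar and K\"orner \cite{csiszar78} (with the footnote pointing to Maurer and Wolf for the upgrade to strong secrecy). There is therefore no ``paper's own proof'' to compare against. Your outline is the standard argument and is correct in substance: random coding with channel prefixing plus soft-covering/resolvability for strong secrecy on the direct side, and Fano plus the Csisz\'ar sum identity and a Carath\'eodory-type cardinality reduction on the converse side. This is exactly the route one would expect for a cited classical result, and nothing in it is in tension with how the paper uses the theorem.
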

This expression can be simplified using additional assumptions about the wiretap channel.
\begin{mycor}[Secrecy capacity for more capable wiretap channels \cite{korner77}] \label{cor:CsMoreCapable}
If $\W$ is more capable than $\V$,
\begin{equation}
C_s(\W,\V)= \max_{P_X} \Hc{X}{Z}-\Hc{X}{Y}. \label{eq:capacityCapable}
\end{equation}
\end{mycor}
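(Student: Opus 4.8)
The plan is to obtain \eqref{eq:capacityCapable} directly from the general Csisz\'ar--K\"orner characterization in Theorem~\ref{thm:Cs} by showing that, under the more capable hypothesis, the auxiliary random variable may be taken to be $U = X$.

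The lower bound $C_s(\W,\V) \geq \max_{P_X}\big(\Hc{X}{Z} - \Hc{X}{Y}\big)$ is immediate: choosing $U = X$ makes the Markov condition $U \markovDavid X \markovDavid (Y,Z)$ hold trivially and satisfies $|\mathcal{U}| = |\mathcal{X}|$, so it is a feasible point of the optimization in \eqref{eq:secrecyCapacity}, and its objective value equals $\Hc{X}{Z} - \Hc{X}{Y}$. Maximizing over $P_X$ gives the claim.

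For the matching converse, fix any pair $(U,X)$ feasible in \eqref{eq:secrecyCapacity} and let $P_X$ be the induced marginal. Rewriting the objective as a difference of mutual informations, $\Hc{U}{Z} - \Hc{U}{Y} = \I{U}{Y} - \I{U}{Z}$, and expanding each term via the chain rule together with $\Ic{U}{Y}{X} = \Ic{U}{Z}{X} = 0$ (a consequence of the Markov condition), one gets
\begin{align}
\I{U}{Y} - \I{U}{Z} &= \big(\I{X}{Y} - \Ic{X}{Y}{U}\big) - \big(\I{X}{Z} - \Ic{X}{Z}{U}\big) \nonumber \\
&= \big(\I{X}{Y} - \I{X}{Z}\big) - \big(\Ic{X}{Y}{U} - \Ic{X}{Z}{U}\big). \nonumber
\end{align}
Now the more capable assumption enters: conditioned on each value $U = u$, the variable $X$ has some distribution $P_{X|U=u}$ driving the \emph{same} channels $\W$ and $\V$, so $\I{X}{Y}$ computed from $P_{X|U=u}$ is at least the corresponding $\I{X}{Z}$; averaging over $u$ yields $\Ic{X}{Y}{U} \geq \Ic{X}{Z}{U}$. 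Hence $\I{U}{Y} - \I{U}{Z} \leq \I{X}{Y} - \I{X}{Z} = \Hc{X}{Z} - \Hc{X}{Y} \leq \max_{P_X}\big(\Hc{X}{Z} - \Hc{X}{Y}\big)$, and taking the supremum over feasible $(U,X)$ completes the converse. Combining the two bounds gives \eqref{eq:capacityCapable}.

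I expect the only point requiring genuine care to be the justification that the ``more capable'' relation — which is quantified over \emph{all} input distributions — transfers to the conditional quantities $\Ic{X}{Y}{U}$ versus $\Ic{X}{Z}{U}$; the remaining steps are routine manipulations with the chain rule for mutual information.
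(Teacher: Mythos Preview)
Your argument is correct and is the standard derivation of this classical result. Note, however, that the paper does not actually prove Corollary~\ref{cor:CsMoreCapable}: it is simply quoted from \cite{korner77} as background, with no proof given in the text. So there is no ``paper's own proof'' to compare against here; your write-up supplies exactly the expected justification, namely reducing the Csisz\'ar--K\"orner expression via the chain rule identity $\I{U}{Y}=\I{X}{Y}-\Ic{X}{Y}{U}$ (and likewise for $Z$) and then invoking the more capable hypothesis pointwise in $U=u$ to get $\Ic{X}{Y}{U}\geq \Ic{X}{Z}{U}$.
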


In \cite{vardy11}, Mahdavifar and Vardy showed how to use polar codes to efficiently achieve the secrecy capacity for degradable wiretap channels. Their secrecy criterion was a weaker form of the one given in \eqref{eq:secrecy}. In \cite{sasoglu13}, it has been shown how to use polar codes to achieve the secrecy capacity for degradable wiretap channels with respect to the strong secrecy condition \eqref{eq:secrecy}. In \cite{sutter13_3}, two of us reported a concatenated protocol based on polar codes that is strongly secure, efficient and achieves the secrecy capacity, whose code construction however might be difficult. Recently, it has been shown how to achieve the secrecy capacity of a wiretap channel with polar codes using the chaining technique introduced in \cite{hassani13} to ensure an alignment of the polarized sets in case where the wiretap channel is not less noisy \cite{uluklus14,barg14}.

\subsection{Broadcast channels}
The discrete memoryless broadcast channel with $k$ broadcast receivers consists of a discrete input alphabet $\mathcal{X}$, discrete output alphabets $\mathcal{Y}_i$ for $i \in [k]$, and a conditional distribution $P_{Y_1,Y_2,\ldots,Y_m | X}(y_1,y_2,\ldots,y_m|x)$ where $x \in \mathcal{X}$ and $y_i \in \mathcal{Y}_i$.
In this article we consider the broadcast channel problem that consists of a single source transmitting two independent messages to two receivers through a single discrete, memoryless, broadcast channel. The private-message capacity region is known if the channel structure is deterministic, degraded, less-noisy, or more-capable \cite{elgamal12}. For the general case the (private-message) capacity region is unknown however there exist different inner and outer bounds. One possible inner bound bound that will be important in this article is the one that is achieved with superposition coding.
\begin{mythm}[Superposition coding inner bound \cite{cover72}] \label{thm:superposition}
The union of rate pairs $(R_1,R_2)$ satisfying
\begin{subequations} 
\begin{align}
R_1&< \Ic{X}{Y_1}{U}\\
R_2 &< \I{U}{Y_2}\\
R_1 + R_2 &< \I{X}{Y_1}
\end{align}
\end{subequations} 
over all $(U,X)$ such that $U \markovDavid X \markovDavid (Y_1,Y_2)$ form a Markov chain is achievable.
\end{mythm}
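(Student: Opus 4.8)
The plan is to establish this achievability result by Cover's random superposition coding argument, so I will only outline the ingredients rather than carry out the typicality bookkeeping. Fix a test distribution $P_{U,X}$ obeying the Markov constraint $U \markovDavid X \markovDavid (Y_1,Y_2)$ and a small tolerance $\delta>0$. First I would build a two-layer random codebook: draw $2^{nR_2}$ \emph{cloud centers} $u^n(m_2)$, $m_2 \in [2^{nR_2}]$, i.i.d.\ according to $\prod_i P_U$, and then, conditionally on each cloud center, draw $2^{nR_1}$ \emph{satellite codewords} $x^n(m_1,m_2)$, $m_1 \in [2^{nR_1}]$, i.i.d.\ according to $\prod_i P_{X|U}(\cdot\,|\,u_i(m_2))$. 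The codebook is revealed to all parties, and to convey the pair $(m_1,m_2)$ the encoder transmits $x^n(m_1,m_2)$.

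Next I would specify joint-typicality decoders: receiver~$2$, which needs only $m_2$, declares the unique $\hat m_2$ with $(u^n(\hat m_2),Y_2^n)$ jointly $\delta$-typical, while receiver~$1$ declares the unique pair $(\hat m_1,\hat m_2)$ with $(u^n(\hat m_2),x^n(\hat m_1,\hat m_2),Y_1^n)$ jointly $\delta$-typical. Assuming $(m_1,m_2)=(1,1)$ was sent, the transmitted triple is jointly typical with high probability by the law of large numbers, so the analysis reduces to bounding the probability that a competing codeword (or pair) is also typical.

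The three rate constraints then follow from the packing / joint-typicality lemma applied to the relevant error events. (i) A wrong cloud $\hat m_2 \neq 1$ is jointly typical with $Y_2^n$ with probability $\lesssim 2^{-n(\I{U}{Y_2}-\delta')}$ and there are $2^{nR_2}$ of them, forcing $R_2 < \I{U}{Y_2}$. (ii) At receiver~$1$, with the correct cloud but a wrong satellite $\hat m_1 \neq 1$, the word $x^n(\hat m_1,1)$ is conditionally independent of $Y_1^n$ given $u^n(1)$, so by the conditional joint-typicality lemma its typicality probability is $\lesssim 2^{-n(\Ic{X}{Y_1}{U}-\delta')}$; with $2^{nR_1}$ competitors this forces $R_1 < \Ic{X}{Y_1}{U}$. (iii) At receiver~$1$, with a wrong cloud $\hat m_2 \neq 1$ and an arbitrary satellite, the pair $(u^n(\hat m_2),x^n(\hat m_1,\hat m_2))$ is drawn independently of everything the channel saw, hence independent of $Y_1^n$, so its typicality probability is $\lesssim 2^{-n(\I{U,X}{Y_1}-\delta')} = 2^{-n(\I{X}{Y_1}-\delta')}$ using $U \markovDavid X \markovDavid Y_1$; with $2^{n(R_1+R_2)}$ competitors this forces $R_1+R_2 < \I{X}{Y_1}$. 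A standard averaging over the random codebook then yields a deterministic code achieving any rate pair strictly inside the stated region, and letting $\delta \to 0$ and taking the union over all admissible $P_{U,X}$ completes the argument.

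I expect the only genuinely delicate point to be error event (iii): one must check that conditioning the satellite codewords on the \emph{wrong} cloud center does not secretly correlate them with $Y_1^n$, which is exactly where the nested codebook structure together with the Markov chain $U \markovDavid X \markovDavid Y_1$ (giving $\I{U,X}{Y_1}=\I{X}{Y_1}$) enters. The remaining steps are routine manipulations of typical sets.
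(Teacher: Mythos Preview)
The paper does not actually prove this theorem: it is quoted as a known result from Cover's 1972 paper \cite{cover72} (see also \cite{elgamal12}) and used as background without any accompanying argument. Your sketch is the standard superposition-coding achievability proof one finds in those references, and it is correct as an outline; there is nothing in the paper to compare it against.
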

Note that for degradable discrete memoryless broadcast channels the superposition coding inner bound coincides with the capacity region \cite{gallager74}. Recently in \cite{goela13}, it has been shown how to use polar codes to achieve the capacity region for degradable discrete memoryless broadcast channels. The assumption that the broadcast channel is degradable is used to ensure that the polar codes are universal. In \cite{mondelli14}, it has been shown how to achieve the superposition region and more generally the Marton's inner region \cite{marton79}\footnote{Marton's inner region is in general a better bound than the superposition coding inner bound. A nice overview can be found in \cite{elgamal12}.} with polar codes by using the chaining method  to obtain a universal code at the cost of a larger blocklength.

\section{Alignment of Polarized Sets} \label{sec:alignmentBounds}
In this section we will state and prove our main results (Theorems~\ref{thm:LB} and \ref{thm:UB}), which are two sufficient conditions for the sets $\mathcal{D}_{\varepsilon}^n(\W)$ and $\mathcal{D}_{\varepsilon}^n(\V)$ being aligned or being not aligned (not even essentially). The conditions can be applied to arbitrary DMCs $\W$ and $\V$. The first criterion, that is derived in Section~\ref{sec:LB} and can be used to conclude that $\mathcal{D}_{\varepsilon}^n(\W)$ and $\mathcal{D}_{\varepsilon}^n(\V)$ are not aligned, is based on a simple counting argument using the polarization phenomenon. The second criterion derived in Section~\ref{sec:UB} that implies that two polarized sets $\mathcal{D}_{\varepsilon}^n(\W)$ and $\mathcal{D}_{\varepsilon}^n(\V)$ are aligned, is more elaborate and uses a particular property of the polarization transformation together with an uncertainty relation from quantum mechanics for which the (classical) channel has to be embedded into a quantum-mechanical channel as explained in Section~\ref{sec:complementaryChannel}.  

For this reason we have to introduce some basic quantum information theoretic concepts and notations. For a general overview, see~\cite{nielsen_quantum_2000}. 
A binary-input classical-quantum (cq) channel $\W: \{0,1\} \ni x \mapsto \rho_x \in \mathcal{D}(\mathcal{H})$ prepares a quantum state $\rho_x$ at the output, depending on a classical input bit $x$. The analog of the Bhattacharyya parameter for classical channels is the \emph{fidelity} of a cq channel that is defined as $\F{\W}:= \norm{\sqrt{\rho_0} \sqrt{\rho_1}}_{\Trs}$.
The symmetric Holevo information is defined as $I(\W):=H(\tfrac{1}{2}(\rho_0 + \rho_1)) - \tfrac{1}{2}(\Hh{\rho_0}+\Hh{\rho_1})$.
It is straightforward to verify that in case $\W$ is a classical binary-input discrete memoryless channel $\F{\W} =Z(\W)$ and that the symmetric Holevo information coincides with the symmetric mutual information. The polarization process for cq channels works similarly as for classical DMCs \cite{wilde_polar_2011}. We can define a channel splitting map $(\W,\W) \to (\W_0,\W_1)$, where the synthesized channels $\W_0:\{0,1\} \to \mathcal{D}(\mathcal{H} \otimes \mathcal{H})$ and $\W_1: \{0,1\} \to \{0,1\} \otimes  \mathcal{D}(\mathcal{H} \otimes \mathcal{H})$ are properly defined in \cite{wilde_polar_2011}.

\begin{myprop} \label{prop:Hamed}
For two binary-input cq channels $\W$ and $\V$ such that $\F{\W}+\F{\V} \leq 1$ we have $\F{\W_{0}} + \F{\V_1} \leq 1$ and $\F{\W_{1}} + \F{\V_0} \leq 1$.
\end{myprop}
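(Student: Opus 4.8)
The plan is to reduce the statement to a property of the channel-splitting map expressed directly in terms of fidelities, using the fact that the polar transform for cq channels is built from the same two-step combining/splitting operation as in the classical case. Concretely, I would first recall (from \cite{wilde_polar_2011}) the explicit description of $\W_0$ and $\W_1$ in terms of the quantum states $\rho_x$ that define $\W$: the ``worse'' channel $\W_0$ has output states that are mixtures $\tfrac12(\rho_{u_1}\otimes\rho_0 + \rho_{u_1\oplus 1}\otimes\rho_1)$ (up to relabeling), and the ``better'' channel $\W_1$ has output states $\rho_{u_1\oplus u_2}\otimes\rho_{u_2}$ together with the classical register $u_1$. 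I then want two elementary fidelity bounds governing how $\F{\cdot}$ behaves under these operations: an upper bound $\F{\W_0} \le \F{\W}^2$-type / "worse" estimate, and, crucially, a lower bound for the better channel, something of the form $\F{\W_1} \ge$ (a function of $\F{\W}$) -- or more precisely a joint bound combining the fidelity of the worse synthesized channel of one pair with the better one of the other.

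The key algebraic step is an inequality of the shape: if $a = \F{\W}$ and $b = \F{\V}$, then $\F{\W_0}$ and $\F{\V_1}$ (and symmetrically $\F{\W_1}$ and $\F{\V_0}$) satisfy $\F{\W_0} + \F{\V_1} \le f(a) + g(b)$ where $f(a)+g(b) \le 1$ whenever $a+b\le 1$. The natural candidate, mirroring the classical Bhattacharyya recursions $Z(\W_0) \le 2Z(\W) - Z(\W)^2$ and $Z(\W_1) = Z(\W)^2$, is $\F{\W_0} \le 2\F{\W}-\F{\W}^2$ and $\F{\W_1}\le \F{\W}^2$ (the latter an equality for classical channels, an inequality in the quantum case, which is all we need). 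Granting these, the proof is the one-line calculation: $\F{\W_0}+\F{\V_1} \le (2a - a^2) + b^2 = 1 - (1-a)^2 + b^2 - (1-(2a-a^2) - ... )$; more cleanly, using $b \le 1-a$ we get $\F{\V_1}\le b^2 \le (1-a)^2$, hence $\F{\W_0}+\F{\V_1} \le (2a-a^2) + (1-a)^2 = 2a - a^2 + 1 - 2a + a^2 = 1$. The symmetric inequality $\F{\W_1}+\F{\V_0}\le 1$ follows identically by exchanging the roles of $\W$ and $\V$ and using $a \le 1-b$.

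The main obstacle is establishing the quantum fidelity recursions $\F{\W_1}\le \F{\W}^2$ and $\F{\W_0}\le 2\F{\W}-\F{\W}^2$ rigorously, since the trace-norm fidelity $\norm{\sqrt{\rho_0}\sqrt{\rho_1}}_{\Trs}$ is not as transparently multiplicative as the classical Bhattacharyya parameter. For $\W_1$ the output states factorize as tensor products indexed by $u_2$ with a common classical flag $u_1$, so I expect $\F{\W_1} = \F{\W}^2$ to follow from multiplicativity of this fidelity under tensor products together with the block structure over the classical register; this is essentially the computation already present in \cite{wilde_polar_2011} and I would cite it. For $\W_0$ the bound $\F{\W_0}\le 2\F{\W} - \F{\W}^2$ -- equivalently $1-\F{\W_0} \ge (1-\F{\W})^2$ -- is the quantum analogue of Ar{\i}kan's estimate; I would either invoke it directly from the cq polarization literature or prove it via a Cauchy--Schwarz / operator-monotonicity argument on $\sqrt{\tfrac12(\rho_{u_1}\otimes\rho_0+\rho_{\bar u_1}\otimes\rho_1)}$. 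Once these two estimates are in hand the proposition is immediate from the scalar inequality above, so essentially all the work is in quoting or reproving the fidelity recursions.
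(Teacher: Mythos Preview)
Your proposal is correct and essentially identical to the paper's proof: both invoke the cq fidelity recursions $\F{\W_0}\le 2\F{\W}-\F{\W}^2$ and $\F{\W_1}=\F{\W}^2$ from \cite{wilde_polar_2011} (you write the latter as an inequality, which suffices) and then perform the same one-line scalar computation $(2a-a^2)+(1-a)^2=1$ under the constraint $b\le 1-a$. The only difference is that the paper simply cites \cite[Prop.~9]{wilde_polar_2011} for the recursions rather than discussing how to reprove them.
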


\begin{proof}
Recall that according to \cite[Prop.\ 9]{wilde_polar_2011} for every binary-input cq channel $\W$, $\F{\W_0} \leq 2 \F{\W} - \F{\W}^2$ and $\F{\W_1}=\F{\W}^2$. Using these two relations gives
\begin{subequations} 
\begin{align}
\F{\W_0} + \F{\V_1} &\leq 2 \F{\W} - \F{\W}^2 + \F{\V}^2 \\
& \leq 2 \F{\W} - \F{\W}^2 + (1-\F{\W})^2 \label{eq:opti}\\
& \leq 1,
\end{align}
\end{subequations} 
where inequality \eqref{eq:opti} uses the assumption $\F{\W}+\F{\V} \leq 1$. The proof of the second statement of the proposition follows by swapping $\W$ and $\V$.
\end{proof}
Applying Proposition~\ref{prop:Hamed} recursively to the polarization tree given in Figure~\ref{fig:tree} proves the following corollary.
\begin{mycor} \label{cor:Hamed}
Consider two binary-input cq channels $\W$ and $\V$ such that $\F{\W}+\F{\V} \leq 1$. Then $\F{\W_b} + \F{\V_{\bar b}} \leq 1$ for all $b \in \{0,1\}^{\log n}$.\footnote{Recall that for some binary string $b \in \{0,1\}^k$, we denote its complement by $\bar b$.}
\end{mycor}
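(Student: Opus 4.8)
The plan is to prove Corollary~\ref{cor:Hamed} by induction on the depth of the node in the polarization tree, using Proposition~\ref{prop:Hamed} as the inductive step. The induction hypothesis at depth $k$ will be the statement: for every binary string $b \in \{0,1\}^k$, if $\W$ and $\V$ are binary-input cq channels with $\F{\W} + \F{\V} \leq 1$, then $\F{\W_b} + \F{\V_{\bar b}} \leq 1$. The base case $k=0$ is exactly the hypothesis $\F{\W}+\F{\V}\leq 1$ (with $b$ the empty string and $\bar b$ the empty string).

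For the inductive step, suppose the claim holds at depth $k$ and let $b' = b c \in \{0,1\}^{k+1}$ with $b \in \{0,1\}^k$ and $c \in \{0,1\}$. The synthesized channel $\W_{b'} = \W_{bc} = (\W_b)_c$ is obtained by first forming $\W_b$ (at depth $k$) and then applying one more level of the channel splitting map with bit $c$; similarly $\V_{\bar{b'}} = \V_{\bar b \bar c} = (\V_{\bar b})_{\bar c}$. By the induction hypothesis applied to $\W$ and $\V$, we have $\F{\W_b} + \F{\V_{\bar b}} \leq 1$. Now apply Proposition~\ref{prop:Hamed} to the pair of channels $\W_b$ and $\V_{\bar b}$: if $c = 0$ then $\bar c = 1$, and the proposition gives $\F{(\W_b)_0} + \F{(\V_{\bar b})_1} \leq 1$, i.e.\ $\F{\W_{b'}} + \F{\V_{\bar{b'}}} \leq 1$; if $c = 1$ then $\bar c = 0$, and the proposition gives $\F{(\W_b)_1} + \F{(\V_{\bar b})_0} \leq 1$, again the desired inequality. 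This closes the induction, and setting $k = \log n$ yields the corollary.

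The only subtlety — and the place I would be most careful — is bookkeeping the relationship between the tree labels and the recursive application: one must check that $\W_{bc} = (\W_b)_c$ with the convention used in Figure~\ref{fig:tree} (reading the label from the root), and that complementation commutes with this decomposition, i.e.\ $\overline{bc} = \bar b\, \bar c$. Both are immediate from the definitions, but they are what makes Proposition~\ref{prop:Hamed} applicable at the node $b$ rather than only at the root. There is no analytic obstacle here; the content is entirely in Proposition~\ref{prop:Hamed}, and the corollary is a routine propagation of that one-step bound down every root-to-node path in the binary tree.
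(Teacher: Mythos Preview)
Your proof is correct and follows exactly the approach the paper indicates: it simply states that the corollary follows by applying Proposition~\ref{prop:Hamed} recursively to the polarization tree, and your induction on the depth $k$ is precisely that recursion made explicit.
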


\begin{myremark} \label{rmk:doubleBadUB}
For two binary-input discrete memoryless channels $\W$ and $\V$ such that $1-I(\W)+ I(\V) \geq 1$, we have $|\mathcal{R}_{\varepsilon}^n(\W) \cap \mathcal{D}_{\varepsilon}^n(\V)| = \Omega(n)$. 
\end{myremark}
Remark~\ref{rmk:doubleBadUB} follows by the polarization phenomenon \cite{arikan09,arikan10} which ensures that $n(1-I(\W))= | \mathcal{R}_{\varepsilon}^n(\W) | + o(n)$ and $n I(\V)= | \mathcal{D}_{\varepsilon}^n(\V) | + o(n)$. By replacing $\W$ and $\V$ the same argument shows that  $I(\W)+1- I(\V) \geq 1$ implies $|\mathcal{D}_{\varepsilon}^n(\W) \cap \mathcal{R}_{\varepsilon}^n(\V)|  = \Omega(n)$. 

\subsection{Sufficient conditions for nonalignment} \label{sec:LB}
Let $\W$ and $\V$ be two binary-input discrete memoryless channels. Remark~\ref{rmk:doubleBadUB} can be used to derive sufficient conditions for  $|\mathcal{R}_{\varepsilon}^n(\W) \cap \mathcal{D}_{\varepsilon}^n(\V)|= \Omega(n)$ and $|\mathcal{D}_{\varepsilon}^n(\W) \cap \mathcal{R}_{\varepsilon}^n(\V)| = \Omega(n)$. In the following we will state the conditions for  $|\mathcal{R}_{\varepsilon}^n(\W) \cap \mathcal{D}_{\varepsilon}^n(\V)| =\Omega(n)$ as the conditions for  $|\mathcal{D}_{\varepsilon}^n(\W) \cap \mathcal{R}_{\varepsilon}^n(\V)| = \Omega(n)$ follow by the same argument by swapping $\W$ and $\V$. We can derive conditions on every level of the polarization tree. 

\begin{mythm}[Level $k$ condition for no alignment] \label{thm:LB}
Let $k \in \mathbb{N}_0$ and $\varepsilon \in (0,1)$. If $1-I(\W_b) + I(\V_{ b}) \geq 1$ for some $b \in \{0,1\}^k$, then $|\mathcal{R}_{\varepsilon}^n(\W) \cap \mathcal{D}_{\varepsilon}^n(\V)|  = \Omega(n)$.
\end{mythm}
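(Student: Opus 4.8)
The plan is to reduce the level-$k$ statement to the level-$0$ case already handled by Remark~\ref{rmk:doubleBadUB}, exploiting the self-similar (recursive) structure of the polarization construction. The key observation is that the polarization tree rooted at the synthesized channel $\W_b$ looks exactly like a fresh polarization tree: applying $\log(n) - k$ further splitting steps to $\W_b$ produces channels $\W_{bc}$ for $c \in \{0,1\}^{\log(n)-k}$, and these are precisely the descendants of $\W_b$ in the original depth-$\log n$ tree. So I would first fix $b \in \{0,1\}^k$ with $1 - I(\W_b) + I(\V_b) \geq 1$, set $m := n/2^k$, and regard $\W_b$ and $\V_b$ as ordinary binary-input DMCs in their own right.

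Next I would invoke Remark~\ref{rmk:doubleBadUB} applied to the pair $(\W_b, \V_b)$ at blocklength $m$: since $1 - I(\W_b) + I(\V_b) \geq 1$, we get $|\mathcal{R}_{\varepsilon}^m(\W_b) \cap \mathcal{D}_{\varepsilon}^m(\V_b)| = \Omega(m)$. Concretely, by the polarization phenomenon there are $m(1 - I(\W_b)) - o(m)$ indices $c$ with $Z(\W_{bc}) \geq 1 - \varepsilon$ and $m \cdot I(\V_b) - o(m)$ indices $c$ with $Z(\V_{bc}) \leq \varepsilon$; since the two fractions sum to at least $1$, an inclusion--exclusion (pigeonhole) count forces $\Omega(m)$ indices $c$ satisfying both. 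The final step is to lift this back to the level-$0$ sets at blocklength $n$: each such $c$ gives a string $bc \in \{0,1\}^{\log n}$ with $Z(\W_{bc}) \geq 1-\varepsilon$ and $Z(\V_{bc}) \leq \varepsilon$, hence the corresponding index $i$ (with $b(i-1) = bc$) lies in $\mathcal{R}_{\varepsilon}^n(\W) \cap \mathcal{D}_{\varepsilon}^n(\V)$. Since these $\Omega(m) = \Omega(n/2^k) = \Omega(n)$ indices are distinct (the map $c \mapsto bc$ is injective for fixed $b$), we conclude $|\mathcal{R}_{\varepsilon}^n(\W) \cap \mathcal{D}_{\varepsilon}^n(\V)| = \Omega(n)$, noting that $k$ is a fixed constant so the $\Omega(\cdot)$ absorbs the factor $2^{-k}$. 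Swapping $\W$ and $\V$ gives the analogous statement for $|\mathcal{D}_{\varepsilon}^n(\W) \cap \mathcal{R}_{\varepsilon}^n(\V)|$.

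The main obstacle is essentially bookkeeping rather than a genuine mathematical difficulty: one must verify carefully that the synthesized channels $\W_{bc}$ obtained by first forming $\W_b$ and then polarizing it are literally the same as the depth-$\log n$ descendants indexed by the concatenated string $bc$ in Figure~\ref{fig:tree}, and that the indexing convention $i \leftrightarrow b(i-1)$ behaves compatibly under this identification. This is a standard fact about Ar{\i}kan's recursive construction, and once it is stated cleanly the rest is a one-line application of the already-established Remark~\ref{rmk:doubleBadUB}. A secondary point to be careful about is that the $o(m)$ error terms from polarization applied to $(\W_b,\V_b)$ are indeed $o(n)$ as well (true since $m = n/2^k$ with $k$ constant), so the counting argument survives the passage back to blocklength $n$.
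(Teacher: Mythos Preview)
Your proposal is correct and follows essentially the same approach as the paper. The paper's proof is a terse two sentences---``The level 0 statement follows directly from Remark~\ref{rmk:doubleBadUB}. Remark~\ref{rmk:doubleBadUB} can be applied at every step of the polarization tree which proves the assertion.''---and you have simply unpacked what ``applied at every step of the polarization tree'' means: use the recursive identity $(\W_b)_c = \W_{bc}$ to run the level-$0$ pigeonhole count on the subtree rooted at $b$, then embed those $\Omega(n/2^k)=\Omega(n)$ indices back into the blocklength-$n$ sets.
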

\begin{proof}
The level 0 statement follows directly from Remark~\ref{rmk:doubleBadUB}. Remark~\ref{rmk:doubleBadUB} can be applied at every step of the polarization tree which proves the assertion.
\end{proof}
By definition of the counterpart of a channel given in Section~\ref{sec:complementaryChannel} we have $I(\V)+I(\V^\setC) =1$ for every binary-input discrete memoryless channel $\V$. Thus the level 0 condition for a lower bound on $|\mathcal{R}_{\varepsilon}^n(\W) \cap \mathcal{D}_{\varepsilon}^n(\V)|$, i.e., $1-I(\W) + I(\V) \geq 1$ can be written equivalently as $I(\W)+I(\V^\setC) \leq 1$ which then resembles the level 0 condition given in Theorem~\ref{thm:UB}.

\begin{myremark}[Criterion for nonalignment cannot get worse for higher levels] \label{rmk:LBworse}
Using the identiy $I(\W_0) + I(\W_1) = 2 I(\W)$ we obtain $2 \left(1-I(\W)+I(\V) \right) = 1-I(\W_0) + I(\V_0) + 1-I(\W_1)-I(\V_1)$, which shows that if the conditions that imply no alignment (cf.\ Theorem~\ref{thm:LB}) at level $k$ are satisfied they are also satisfied for all levels $\ell \leq k$.\footnote{The opposite is not true. Oftentimes the criterion for no alignment becomes strictly better by considering higher levels.}
\end{myremark}

\subsection{Counterpart of a channel } \label{sec:complementaryChannel}
In order to prove the sufficient conditions for alignment of the polarized sets given in Theorem~\ref{thm:UB}, we need the concept of a \emph{quantum counterpart} of a DMC. The quantum counterpart is useful because its information tranmission capabilities are directly related to those of the original channel by uncertainty relations. Such counterpart channels were defined generally in \cite[Sec.\ IIA]{renes14} and we give a slightly different presentation here.   

Suppose we are given a binary-input DMC $\W: \{0,1\} \to \mathcal{Y}$ characterized by the transition probabilities $P_{Y|X}(y|x)$ for $x\in \{0,1 \}$ and $y\in \mathcal{Y}$. To the input and output alphabets we may associate orthonormal bases of finite-dimensional vector spaces, which we regard as the state spaces of quantum systems. Let the input alphabet correspond to the basis $\ket{x}^A$ of system $A$ and the output alphabet correspond to the basis $\ket{y}^B$ of system $B$. By defining the quantum states $\varphi_x=\sum_{y\in \mathcal{Y}} P_{Y|X}(y|x) \ket y \bra y^B$, it is always possible to embed $\W$ into a quantum channel as
\begin{equation*}
\W: \ket x\bra x^A \mapsto \varphi_x^B. 
\end{equation*}
Indeed, there are many quantum channels with this action, as we have not specified the mapping for quantum states not diagonal in the basis $\{\ket x\}$. Since we are modelling a classical channel, the output at $B$ should always be a convex combination of the states $\varphi_x^B$, a condition we will take care to enforce in the construction below. 

Once in the quantum setting, we may consider the description of $\W$ in terms of the \emph{Stinespring dilation} (see \cite[Chap.\ 8]{nielsen_quantum_2000}). Let $C$ and $D$ be additional quantum systems isomorphic to $B$ and define the states $\ket{\varphi_x}^{BC}=\sum_{y\in \mathcal{Y}} \sqrt{P_{Y|X}(y|x)} \ket y^B\ket y^C$, which satisfy  $\varphi_x^{B}=\Trp{C}{\ket{\varphi_x}\bra{\varphi_x}^{BC}}$. Then, a Stinespring dilation of $\W$ is the partial isometry $U_\W^{A\rightarrow BCD}$ from $A$ to $B\otimes C \otimes D$ such that 
\begin{align}
U_\W^{A\rightarrow BCD}\ket{x}^A=\ket{\varphi_x}^{BC}  \ket{x}^D.
\end{align}
The action of the channel can be expressed in terms of the dilation as mapping any quantum state $\rho$ to ${\rm tr}_{CD}[U_\W^{A\rightarrow BCD}\rho^A(U_\W^{A\rightarrow BCD})^\dagger]$. The presence of the additional $\ket{x}^D$ ensures that the output states at $B$ are convex combinations of the $\varphi_x$, as required. 

Using $U_{\W}^{A\rightarrow BCD}$ we can define the quantum counterpart to $\W$ as 
\begin{align}
\W^{\setC} : \{0,1 \} \ni x \mapsto \sigma_x^{CD}:=\Trp{B}{U_\W^{A\rightarrow BCD} \ket {\tilde x}\bra {\tilde x}^A\, (U_\W^{A\rightarrow BCD})^\dagger} \in \mathcal{D}(\mathcal{H}) 
\end{align}
for $ \ket{\tilde x}=\tfrac{1}{\sqrt{2}} \sum_{z \in \{0,1 \}} (-1)^{xz} \ket{z}$. These are the same output states defined in \cite[Eq.\ 6]{renes14}. 
The isometry is not unique, but all possible isometries are related by isometries involving the additional systems $C$ and $D$ only, and therefore these isometries do not change the distinguishability of the outputs of the counterpart channel. Up to this freedom, the counterpart channel is essentially unique. 
An equivalent means of defining the counterpart is via the \emph{channel state}. Define the quantum state
\begin{subequations}
\begin{align}
\ket{\psi_\W}&=\tfrac1{\sqrt2}\sum_{z\in\{0,1\}}\ket{z}^A\ket{\varphi_{z}}^{BC}\ket{z}^D\\
&=\tfrac1{\sqrt2}\sum_{x\in\{0,1\}}\ket{\tilde x}^A\ket{\sigma_x}^{BCD},
\end{align}
\end{subequations}
and denote the associated density operator by simply $\psi_\W^{ABCD}$. 
In the second expression we have used $\ket{\sigma_x}^{BCD}=\tfrac1{\sqrt 2}\sum_z (-1)^{xz}\ket{\varphi_z}^{BC}\ket{z}^D$ for the purification $\ket{\sigma_x}^{BCD}$ of $\sigma_x^{CD}$. 
Then the outputs of $\W$ are just $\varphi_z^B=2{\rm tr}_{ACD}[\ket{z}\bra z^A {\psi_{W}}^{ABCD}]$, while the outputs of the counterpart $\W^\setC$ are $\sigma_x^{CD}=2{\rm tr}_{AB}[\ket{\tilde x}\bra{\tilde x}^A {\psi_{W}}^{ABCD}]$.

Although defined completely independently, the counterpart and channel synthesis operations in fact have a particular relation to each other. This relation is the basis of the quantum polar coding technique of \cite{renes12,renes14}. For $n$ systems, consider the channel state
\begin{subequations}
\begin{align}
\ket{\xi_\W}&=\tfrac1{\sqrt{2^n}}\sum_{z^n\in\{0,1\}^n}\ket{z^n}^A\ket{\varphi_{G_n z^n}}^{BC}\ket{G_n z^n}^D\\
&=\tfrac1{\sqrt{2^n}}\sum_{x^n\in\{0,1\}^n}\ket{\tilde x^n}^A\ket{\sigma_{G_n^T x^n}}^{BCD}.
\end{align}
\end{subequations}
The action of $\W_b$ is $z_j\to \frac1{2^{n-1}}\sum_{\bar {z_i}}\ket{z_1^{j-1}}\bra{z_1^{j-1}}^{A_1^{j-1}}\otimes \varphi_{G_nz^n}^B$ for the $j\in [n]$ such that the binary expansion of $j+1$ is $b$, where the summation runs over all $z_k\in\{0,1\}$ for $k\neq j$ \cite{renes14}. Observe that the output is obtained from $\xi_\W$ by projecting the $j$th system of $A$ onto $\ket{z_j}$, tracing out $A_{j+1}^nCD$ but keeping the first $j-1$ systems of $A$. In~\cite{renes12,renes14} it is shown that the polar transform is transposed for the counterpart, which has the effect of reversing the ordering of inputs. That is, the same position $j$ corresponds to $(\W^{\setC})_{\bar b}$, and the discussion subsequent to Equation 25 of \cite{renes14} shows that its action is $x_j\to \frac 1{2^n}\sum_{\bar x_j}\ket{\tilde x_{j+1}^{n}}\bra{\tilde x_{j+1}^{n}}^{A_{j+1}^n}\otimes U^D_{\rm enc}\sigma_{G_n^Tx^n}^{CD}(U^D_{\rm enc})^\dagger$,
where $U_{\rm enc}$ is the polar transform as a unitary operation: $U_{\rm enc}\ket{z^n}=\ket{G_nz^n}$.  Up to this unitary, which is irrelevant for the counterpart channel, this output is obtained from $\xi_\W$ by projecting system $A_j$ onto $\ket{\tilde x_j}$, measuring the subsequent $n-j$ systems of $A$ in the $\ket{\tilde x}$ basis and tracing out $A_1^{j-1}B$. 

On the other hand, the counterpart of $\W_b$ involves the mapping 
\begin{align}
\ket{\tilde x_j}&\to \tfrac1{\sqrt{2^n}}\sum_{z^n\in\{0,1\}^n}(-1)^{x z_j}\ket{z_1^{j-1}}^{A_1^{j-1}}\ket{z_1^{j-1}}^{D_1^{j-1}}\ket{z_j}^{D_j}\ket{z_{j+1}^n}^{D_{j+1}^n}\ket{\varphi_{G_nz^n}}^{BC}
\end{align}
where systems $A_1^{j-1}B$ are the outputs of the original channel and $CD$ are the outputs of the counterpart. 
The output of the counterpart can be obtained from $\xi_\W$ by again projecting $A_j$ onto $\ket{\tilde x_j}$, tracing out $A_{1}^{j-1}B$, but now leaving the remaining $A$ systems untouched rather than measuring them. This shows that $(\W^\setC)_{\bar b}$ is a degraded version of $(\W_b)^\setC$, since we can measure the systems $A_{j+1}^n$ of the latter to obtain the former.

A useful uncertainty relation constrains the fidelities of the two channels:  
\begin{myprop}\label{prop:UR}
Let $\W$ be a binary-input discrete memoryless channel and $\W^\setC$ be its counterpart as defined above. Then for every $b\in \{0,1\}^{\log n}$ we have $\F{\W_b}+\F{(\W^\setC)_{\bar b}} \geq 1$.
\end{myprop}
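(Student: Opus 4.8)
The plan is to reduce the statement, valid for all $b \in \{0,1\}^{\log n}$, to a single level-$0$ uncertainty relation of the form $\F{\W} + \F{\W^\setC} \geq 1$, and then to propagate it up the polarization tree using Proposition~\ref{prop:Hamed} (or rather its recursive consequence, Corollary~\ref{cor:Hamed}). Concretely: first I would establish the base case, namely that for any binary-input channel $\W$ its quantum counterpart $\W^\setC$ satisfies $\F{\W} + \F{\W^\setC} \geq 1$. This is the genuine quantum-information input; I expect it to follow from an entropic or fidelity-type uncertainty relation applied to the channel state $\psi_\W^{ABCD}$, exploiting that the output systems $B$ (carrying $\varphi_z^B$, the outputs of $\W$) and $CD$ (carrying $\sigma_x^{CD}$, the outputs of $\W^\setC$) arise from measuring the $A$ system in two mutually unbiased (complementary) bases $\{\ket z\}$ and $\{\ket{\tilde x}\}$. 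The relevant statement is that one cannot simultaneously have both $\W$ and $\W^\setC$ with small fidelity (both nearly noiseless), because that would let one predict the outcome of both complementary measurements on $A$ too well.

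Given the base case, the inductive step is essentially already packaged: apply Corollary~\ref{cor:Hamed} to the pair $(\W, \W^\setC)$. Since $\F{\W} + \F{\W^\setC} \geq 1$ is the hypothesis needed (note Corollary~\ref{cor:Hamed} is stated for $\leq 1$, so I would actually want the analogous monotonicity statement for the $\geq 1$ direction — this is the dual of Proposition~\ref{prop:Hamed}), one would get $\F{\W_b} + \F{(\W^\setC)_{\bar b}} \geq 1$ for every $b \in \{0,1\}^{\log n}$, where here the splitting on the counterpart side uses the \emph{transposed} polar transform, which reverses the order of inputs and hence sends index $b$ to index $\bar b$. So I would need to verify two things carefully: (i) the dual of Proposition~\ref{prop:Hamed}, i.e.\ that $\F{\W}+\F{\V} \geq 1$ implies $\F{\W_0}+\F{\V_0}\geq 1$ and $\F{\W_1}+\F{\V_1}\geq 1$ — using again $\F{\W_1} = \F{\W}^2$ and $\F{\W_0} \geq$ (some lower bound; here I would need $\F{\W_0} \geq 2\F{\W} - \F{\W}^2$ or a similar convexity bound in the right direction) — wait, more carefully: with $\F{\V_0} \geq \F{\V}^2$-type bounds one checks $\F{\W_1} + \F{\V_0} \geq \F{\W}^2 + \F{\V}^2 \geq \F{\W}^2 + (1-\F{\W})^2 \geq \tfrac12$, which is not quite $\geq 1$, so the correct pairing must match a \emph{bad} synthesized channel of $\W$ with a \emph{good} one of $\V$, exactly as in Proposition~\ref{prop:Hamed}; and (ii) that the counterpart of $\W_b$ is related to $(\W^\setC)_{\bar b}$ in a way compatible with fidelities — and indeed the excerpt already establishes that $(\W^\setC)_{\bar b}$ is a \emph{degraded} version of $(\W_b)^\setC$, and since fidelity is monotone nondecreasing under degradation, $\F{(\W^\setC)_{\bar b}} \geq \F{(\W_b)^\setC}$.

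Putting these together: for each fixed $b$, apply the level-$0$ relation to the channel $\W_b$ in place of $\W$, giving $\F{\W_b} + \F{(\W_b)^\setC} \geq 1$; then use the degradation relation $(\W^\setC)_{\bar b} \succeq_{\mathrm{deg}} (\W_b)^\setC$ together with monotonicity of fidelity under degradation to conclude $\F{\W_b} + \F{(\W^\setC)_{\bar b}} \geq \F{\W_b} + \F{(\W_b)^\setC} \geq 1$. This is actually cleaner than propagating through the tree, because it isolates all the quantum content into the level-$0$ uncertainty relation and all the polar-coding content into the already-established degradation statement $(\W^\setC)_{\bar b}$ is degraded from $(\W_b)^\setC$.

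\textbf{Main obstacle.} The hard part is the level-$0$ fidelity uncertainty relation $\F{\W} + \F{\W^\setC} \geq 1$. Proving it requires translating ``complementarity of the two measurement bases on $A$'' into a quantitative tradeoff between the two pairwise fidelities $\norm{\sqrt{\varphi_0}\sqrt{\varphi_1}}_\Trs$ and $\norm{\sqrt{\sigma_0}\sqrt{\sigma_1}}_\Trs$. I would look to express both fidelities in terms of the global pure state $\ket{\psi_\W}^{ABCD}$ — e.g.\ $\F{\W}$ as an overlap controlled by how distinguishable the two branches of $\ket{\psi_\W}$ are when conditioned on $A = 0$ vs.\ $A=1$ in the computational basis, and $\F{\W^\setC}$ similarly in the conjugate basis — and then invoke a known pure-state uncertainty relation (of the type in \cite{renes14}, or a Maassen–Uffink-style fidelity bound). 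Everything downstream (the dual of Proposition~\ref{prop:Hamed}, the degradation argument, monotonicity of fidelity) is routine once that anchor is in place.
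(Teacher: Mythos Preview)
Your proposal is correct and follows essentially the same route as the paper: establish the level-$0$ relation $\F{\W}+\F{\W^\setC}\geq 1$, apply it to $\W_b$ to get $\F{\W_b}+\F{(\W_b)^\setC}\geq 1$, and then use the degradation $(\W^\setC)_{\bar b}\succeq_{\mathrm{deg}}(\W_b)^\setC$ together with monotonicity of fidelity under quantum operations to conclude. (Your initial detour through a dual of Proposition~\ref{prop:Hamed} is unnecessary, as you yourself noticed; the paper does not go that way either.)

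For the level-$0$ step you flagged as the obstacle, the paper's argument is more concrete than a general Maassen--Uffink-type bound: it computes $\F{\W^\setC}$ directly via Uhlmann's theorem applied to the purifications $U_\W\ket{\tilde 0}$ and $U_\W\ket{\tilde 1}$, and finds that $\F{\W^\setC}=\delta(\W):=\tfrac12\|\varphi_0-\varphi_1\|_1$, the trace distance between the outputs of $\W$. The standard inequality $\F{\W}+\delta(\W)\geq 1$ then gives the base case immediately.
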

\begin{proof}
The proof is based on the fact that the fidelity between the outputs of the counterpart channel is actually equal to the trace distance or variational distance $\delta(\W):=\tfrac12\|\varphi_0-\varphi_1\|_1$ between the outputs of the original channel. Known relations between the trace distance and fidelity then yield the uncertainty relation. 

Let us first establish the claim for $b=\emptyset$, i.e.\ the channel and its counterpart. 
Uhlmann's theorem~\cite[Thm.\ 9.4]{nielsen_quantum_2000} gives a convenient means to compute the fidelity:
\begin{align}
F(\W^\setC) &= \max_V |\bra{\tilde 0}(U_\W^{A\rightarrow BCD})^\dagger V^B U_\W^{A\rightarrow BCD}\ket{\tilde 1}^A|,
\end{align}
where the maximization is over all unitaries on the $B$ system. Computing this quantity, we find
\begin{subequations}
\begin{align}
F(\W^\setC) &=\max_V \Big|\big(\tfrac1{\sqrt2}\sum_z \bra{\varphi_z}^{BC}\bra z^D\big)V^B\big(\tfrac1{\sqrt2}\sum_{z'}(-1)^{z'} \ket{\varphi_{z'}}^{BC}\ket {z'}^D\big)\Big|\\
&=\max_V \tfrac12\Big|\sum_z(-1)^z \bra{\varphi_z}V^B\ket{\varphi_z}^{BC}\Big|\\
&=\max_V \tfrac12\Big|\sum_z (-1)^z{\rm tr}[V^B\varphi_z^B] \Big|\\
&=\max_V\tfrac12 \Big| {\rm tr}[V^B(\varphi_0-\varphi_1)]\Big|\\
&=\delta(\W).
\end{align}
\end{subequations}
The bound $F(\W)+\delta(\W)\geq 1$~\cite[Eq.\ 9.110]{nielsen_quantum_2000} gives the uncertainty relation $F(\W)+F(\W^\setC)\geq 1$. 

For the case of synthesized channels, it suffices to use the fact that $(\W^\setC)_{\bar b}$ is a degraded version of $(\W_b)^\setC$, and use the monotonicity of fidelity under quantum operations.
\end{proof}

In the following we explain in detail how to derive the counterpart for three classical DMCs. This will be useful in Section~\ref{sec:applications}. 
\begin{myex}[Counterpart of $\BEC(\beta)$] \label{ex:complementBEC} 
Consider $\W=\BEC(\beta)$ for $\beta \in [0,1]$. The associated isometry $U_\W^{A\rightarrow BC}$ has the action 
\begin{align}
U_\W^{A\rightarrow BCD}\ket{x}^A=\sqrt{1-\beta}\ket{x}^B\ket{?}^C\ket x^D+\sqrt\beta\ket ?^B\ket{x}^C\ket x^D.
\end{align}
Applied to $\ket{\tilde x}$ this gives
\begin{align}
U_\W^{A\rightarrow BCD}\ket{\tilde x}^A=\tfrac1{\sqrt2}\sum_{z \in \{0,1 \}} (-1)^{xz}\left(\sqrt{1-\beta}\ket{z}^B\ket{?}^C\ket z^D+\sqrt\beta\ket ?^B\ket{z}^C\ket z^D\right).
\end{align}
We may simplify the outputs without changing their distinguishability by applying a unitary operator $V^{CD}$ on systems $CD$, described by the action $\ket?\ket{z}\to \ket?\ket z$ and $\ket z\ket z\to \ket z\ket0$. This results in 
\begin{align}
V^{CD}U_\W^{A\rightarrow BCD}\ket{\tilde x}^A=\tfrac1{\sqrt2}\sum_{z \in \{0,1\}} (-1)^{xz}\left(\sqrt{1-\beta}\ket{z}^B\ket{?}^C\ket z^D+\sqrt\beta\ket ?^B\ket{z}^C\ket 0^D\right).
\end{align}
Tracing out $B$ gives the output of the counterpart
\begin{align}
\sigma_x^{CD}=(1-\beta)\ket?\bra?^C\otimes \tfrac12\mathbbm 1^D+\beta \ket{\tilde x}\bra{\tilde x}^C\otimes \ket 0\bra 0^D.
\end{align}
We may also remove system $D$, since $\sigma_x^{CD}$ can be recreated from $\sigma_x^C$: Just create $\tfrac 12\mathbbm 1^D$ if system $C$ is in the state $\ket?$, else create $\ket 0\bra 0^D$.

The map $x\mapsto \sigma_x^C=(1-\beta)\ket ?\bra?^C+\beta\ket{\tilde x}\bra{\tilde x}^C$ is just a BEC with erasure probability $1-\beta$, and thus the counterpart of $\BEC(\beta)$ is simply $\BEC(1-\beta)$. 
\end{myex}


\begin{myex}[Counterpart of $\BSC(\alpha)$] \label{ex:counterpartBSC}
Let $\W=\BSC(\alpha)$ with $\alpha \in [0,\tfrac{1}{2}]$ and to simplify notation let $p_0 := \alpha$ and $p_1 := 1-\alpha$. The action of $U_\W$ is 
\begin{align}
U_\W^{A\rightarrow BCD} \ket{x}^A = \sum_{u \in \{0,1 \} }\sqrt{p_u}\ket{x+u}^B \ket{u}^C\ket{x}^D,
\end{align}
and applied to $\ket{\tilde x}$ gives
\begin{subequations}
\begin{align}
U_\W^{A\rightarrow BCD} \ket{\tilde x}^A &= \tfrac1{\sqrt2}\sum_{u,z \in \{0,1 \} }(-1)^{xz}\sqrt{p_u}\ket{z+u}^B \ket{u}^C\ket{z}^D\\
&=\tfrac1{\sqrt2}\sum_{u,y \in \{0,1 \} }(-1)^{x(y-u)}\sqrt{p_u}\ket{y}^B \ket{u}^C\ket{y-u}^D.
\end{align}
\end{subequations}
Applying the unitary operation $V^{CD}$ specified by $\ket u\ket x\mapsto \ket {u}\ket {u+x}$ does not change the distinguishability of the output states, but simplifies the channel action to 
\begin{subequations}
\begin{align}
V^{CD}U_\W^{A\rightarrow BCD} \ket{\tilde x}^A &=\tfrac1{\sqrt2}\sum_{u,y \in \{0,1 \}}(-1)^{x(y-u)}\sqrt{p_u}\ket{y}^B \ket{u}^C\ket{y}^D\\
&=\ket{\theta_x}^C\otimes \tfrac1{\sqrt2}\sum_{y \in \{0,1 \}}(-1)^{xy}\ket{y}^B\ket{y}^D,
\end{align}
\end{subequations}
where $\ket{\theta_x}=\sum_{u\in\{0,1\}}\sqrt{p_u}(-1)^{xu}\ket{u}$ with $x\in \{0,1\}$. Just as in the BEC example, the $D$ system does not contribute to the distinguishability of $\sigma_x^{CD}$ since now $\sigma_x^{CD}=\ket{\theta_x}\bra{\theta_x}^C\otimes \tfrac12\mathbbm{1}^D$. 
It is straightforward to verify that $Z(\W)=2 \sqrt{\alpha(1-\alpha)}$ and $\F{\W^\setC}=\ketbra{\theta_0}{\theta_1}=1-2\alpha$.
\end{myex}

\begin{myex}[Counterpart of $\BEC(\beta) \circ \BSC(\alpha)$] \label{ex:counterpartBSCBEC}
Consider $\W=\BEC(\beta) \circ \BSC(\alpha)$ for $(\alpha,\beta) \in [0,\tfrac{1}{2}] \times [0,1]$, which is a DMC that consists of a sequence of a $\BSC(\alpha)$ and a $\BEC(\beta)$.
Combining the isometries of Example~\ref{ex:complementBEC} and Example~\ref{ex:counterpartBSC}, in this case we have for $x\in \{0,1\}$
\begin{subequations}
\begin{align}
U_\W^{A\rightarrow BCD}\ket{x}^A &= \!\! \sum_{u \in \{0,1 \} }\sqrt{p_u} \ket{u}^{C_1}\ket x^{D_1}\!\!\left(\sqrt{1-\beta}\ket{x+u}^B\ket?^{C_2}+\sqrt\beta\ket?^B\ket{x+u}^{C_2}\right)\ket{x+u}^{D_2},\\
&\simeq\!\! \sum_{u \in \{0,1 \} }\sqrt{p_u} \ket{u}^{C_1}\ket {u+x}^{D_1}\left(\sqrt{1-\beta}\ket{x+u}^B\ket?^{C_2}+\sqrt\beta\ket?^B\ket{x+u}^{C_2}\right).
\end{align}
\end{subequations}
The second expression is unitarily equivalent to the first, since we can generate $\ket u^{C_1}\ket x^{D_1}\ket {u+x}^{D_2}$ from $\ket u^{C_1}\ket{x+u}^{D_1}$. 
Applied to $\ket{\tilde x}^A$ we have
\begin{subequations}
\begin{align}
U_\W\ket{\tilde x}^A  &\simeq \tfrac1{\sqrt 2}\!\!\! \sum_{u,z \in \{0,1 \} }\!\!(-1)^{xz}\sqrt{p_u} \ket{u}^{C_1}\ket {u+z}^{D_1}\left(\sqrt{1-\beta}\ket{z+u}^B\ket?^{C_2}+\sqrt\beta\ket?^B\ket{z+u}^{C_2}\right)\\
&=\tfrac1{\sqrt 2}\!\!\! \sum_{u,y \in \{0,1 \} }\!\!(-1)^{x(y-u)}\sqrt{p_u} \ket{u}^{C_1}\ket {y}^{D_1}\left(\sqrt{1-\beta}\ket{y}^B\ket?^{C_2}+\sqrt\beta\ket?^B\ket{y}^{C_2}\right)\\
&=\ket{\theta_x}^{C_1}\otimes \tfrac1{\sqrt 2}\!\! \sum_{y \in \{0,1 \} }\!\!(-1)^{xy}\ket {y}^{D_1}\left(\sqrt{1-\beta}\ket{y}^B\ket?^{C_2}+\sqrt\beta\ket?^B\ket{y}^{C_2}\right)\\
&\simeq \ket{\theta_x}^{C_1}\otimes \tfrac1{\sqrt 2}\!\! \sum_{y \in \{0,1 \} }\!\!(-1)^{xy}\left(\sqrt{1-\beta}\ket{y}^B\ket?^{C_2}\ket {y}^{D_1}+\sqrt\beta\ket?^B\ket{y}^{C_2}\ket0^{D_1}\right).
\end{align}
\end{subequations}
In the last step we have applied the same unitary on $C_2$ and $D_1$ as in the BEC example. 
Tracing out $B$ gives the states
\begin{align}
\sigma_x^{CD}=\ket{\theta_x}\bra{\theta_x}^{C_1}\otimes \left((1-\beta)\ket?\bra?^{C_2}\otimes \tfrac12\mathbbm{1}^{D_1}+\beta \ket{\tilde x}\bra{\tilde x}^{C_2}\otimes \ket0\bra0^{D_1}\right).
\end{align}
Again, the $D_1$ system is irrelevant. The fidelity of the two output states is then easily seen to equal $(1-\beta)(1-2\alpha)=F(\W^\setC)$.  


\end{myex}

\subsection{Sufficient conditions for alignment} \label{sec:UB}
Given two binary-input discrete memoryless channels $\W$ and $\V$ we can use Corollary~\ref{cor:Hamed} and Proposition~\ref{prop:UR} to derive sufficient conditions for $\mathcal{R}_{\varepsilon}^n(\W) \subseteq \mathcal{R}_{\varepsilon}^n(\V)$ or similarly $\mathcal{R}_{\varepsilon}^n(\W) \supseteq \mathcal{R}_{\varepsilon}^n(\V)$  by swapping the role of $\W$ and $\V$. We can derive such conditions on every level of the polarization tree. With $\V^\setC$ we denote the counterpart of channel $\V$ as defined in Section~\ref{sec:complementaryChannel}. 
\begin{mythm}[Level $k$ condition for alignment] \label{thm:UB}
Let $k \in \mathbb{N}_0$ and $0<\varepsilon<1$. If $ \F{\W_b} + \F{(\V^\setC)_{\bar b}} \leq 1$ for all $b \in \{0,1\}^k$, then $\mathcal{R}_{\varepsilon}^n(\W) \subseteq \mathcal{R}_{\varepsilon}^n(\V)$.
\end{mythm}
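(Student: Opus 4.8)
The plan is to lift the level-$0$ case to arbitrary level $k$ by the same recursive tree argument used in Remark~\ref{rmk:LBworse} and Theorem~\ref{thm:LB}, so first I would treat $k=0$ carefully and then argue that the hypothesis on level $k$ propagates down. For $k=0$, assume $\F{\W}+\F{(\V^\setC)} \leq 1$. I would like to conclude $\mathcal{R}_{\varepsilon}^n(\W) \subseteq \mathcal{R}_{\varepsilon}^n(\V)$, i.e.\ for every $b \in \{0,1\}^{\log n}$, $Z(\W_b) \geq 1-\varepsilon$ implies $Z(\V_b) \geq 1-\varepsilon$. The key is the two-sided sandwich: Corollary~\ref{cor:Hamed} applied to the pair of cq channels $\W$ and $\V^\setC$ (whose fidelities sum to at most $1$ by hypothesis, and for which $\F{\cdot}$ agrees with the classical Bhattacharyya parameter on the classical marginals) gives $\F{\W_b} + \F{(\V^\setC)_{\bar b}} \leq 1$ for all $b$; and Proposition~\ref{prop:UR}, applied to the channel $\V$, gives the uncertainty relation $\F{\V_b} + \F{(\V^\setC)_{\bar b}} \geq 1$ for all $b$. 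Subtracting, $\F{\W_b} \leq \F{(\V^\setC)_{\bar b}}^{\,\setC\text{-complement}}$ — more precisely $\F{\W_b} - \F{\V_b} \leq \bigl(1 - \F{(\V^\setC)_{\bar b}}\bigr) - \bigl(1 - \F{(\V^\setC)_{\bar b}}\bigr)$ is not quite it; rather, from $\F{\W_b} \leq 1 - \F{(\V^\setC)_{\bar b}} \leq \F{\V_b}$ we get directly $\F{\W_b} \leq \F{\V_b}$, i.e.\ $Z(\W_b) \leq Z(\V_b)$ for every synthesized index $b$. Hence if $Z(\W_b) \geq 1-\varepsilon$ then $Z(\V_b) \geq 1-\varepsilon$, which is exactly $\mathcal{R}_{\varepsilon}^n(\W) \subseteq \mathcal{R}_{\varepsilon}^n(\V)$.

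For general $k$, I would observe that the hypothesis ``$\F{\W_b} + \F{(\V^\setC)_{\bar b}} \leq 1$ for all $b \in \{0,1\}^k$'' is precisely the level-$0$ hypothesis applied simultaneously to each of the $2^k$ pairs of descendant channels $(\W_b, (\V^\setC)_{\bar b})$ sitting at level $k$ of their respective polarization trees. Here I need the structural fact, established in Section~\ref{sec:complementaryChannel}, that the counterpart operation intertwines channel synthesis with a transpose/reversal of the index: $(\V^\setC)_{\bar b}$ is a degraded version of $(\V_b)^\setC$ (and conversely the uncertainty relation of Proposition~\ref{prop:UR} already bakes in this correspondence for all $b \in \{0,1\}^{\log n}$, not just $b=\emptyset$). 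So applying Corollary~\ref{cor:Hamed} inside the subtree rooted at each level-$k$ pair, together with Proposition~\ref{prop:UR} for the channel $\V$, yields $Z(\W_c) \leq Z(\V_c)$ for every $c \in \{0,1\}^{\log n}$ with prefix among the good $b$'s — and by the level-$k$ hypothesis this covers all $c$. The conclusion $\mathcal{R}_{\varepsilon}^n(\W) \subseteq \mathcal{R}_{\varepsilon}^n(\V)$ follows as before.

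The step I expect to be the genuine obstacle — and the one deserving the most care — is the interplay between the counterpart of a \emph{synthesized} channel and the synthesized version of a \emph{counterpart}: one must invoke precisely that $(\V^\setC)_{\bar b}$ is degraded with respect to $(\V_b)^\setC$ (the discussion around Equation~25 of~\cite{renes14} reproduced in Section~\ref{sec:complementaryChannel}), so that the uncertainty relation $\F{(\V_b)} + \F{(\V_b)^\setC} \geq 1$ together with monotonicity of fidelity under the degrading map gives $\F{(\V_b)} + \F{(\V^\setC)_{\bar b}} \geq 1$. Matching the index bookkeeping here — that the complementation $b \mapsto \bar b$ in Corollary~\ref{cor:Hamed} is the same reversal that appears in the counterpart construction — is the delicate point; everything else is the one-line sandwich $\F{\W_b} \leq 1 - \F{(\V^\setC)_{\bar b}} \leq \F{\V_b}$ and the elementary propagation argument of Remark~\ref{rmk:LBworse}.
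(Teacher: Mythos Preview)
Your proposal is correct and follows essentially the same route as the paper: the sandwich $\F{\W_b} \leq 1 - \F{(\V^\setC)_{\bar b}} \leq \F{\V_b}$ obtained by combining Corollary~\ref{cor:Hamed} (propagation of the hypothesis through the tree) with Proposition~\ref{prop:UR} (the uncertainty relation for $\V$). The paper's proof is terser --- it argues directly that $\F{\W_d}\geq 1-\varepsilon$ forces $\F{(\V^\setC)_{\bar d}}\leq \varepsilon$ and hence $\F{\V_d}\geq 1-\varepsilon$, without isolating the pointwise inequality $Z(\W_b)\leq Z(\V_b)$ you extract --- and it handles all $k$ at once rather than reducing to $k=0$, but the ingredients and logic are identical; in particular, the ``genuine obstacle'' you flag (matching $(\V^\setC)_{\bar b}$ with $(\V_b)^\setC$) is already absorbed into the statement of Proposition~\ref{prop:UR}, so no extra work is needed there.
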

\begin{proof}
Consider $n\geq k$ and suppose $d \in \{0,1\}^n$ is such that the synthesized channel $\W_d$ is noisy, i.e. $\F{\W_d} \geq 1 - \varepsilon$. According to Corrolary~\ref{cor:Hamed} together with the assumption of the theorem this implies that $\F{(\V^{\setC})_{\bar d}}\leq \varepsilon$. Proposition~\ref{prop:UR} then ensures that $\F{\V_d} \geq 1 -  \varepsilon$. This implies that $\mathcal{R}_{\varepsilon}^n(\W) \subseteq \mathcal{R}_{\varepsilon}^n(\V)$. 
\end{proof}
Consider the first level where we have two channel pairs $(\W_0, \W_1)$ and $((\V^{\setC})_0,(\V^{\setC})_1)$. Note that in general the two channels $(\V^{\setC})_0$ and $(\V^{\setC})_1$ are not counterpart channels of $\V_1$ and $\V_0$ using the definition given in Section~\ref{sec:complementaryChannel} (i.e., in general $(\V^{\setC})_0 \ne (\V_1)^{\setC}$ and $(\V^{\setC})_1 \ne (\V_0)^{\setC}$). One could work with channels $((\V_0)^{\setC},(\V_1)^{\setC})$ instead of $((\V^{\setC})_0,(\V^{\setC})_1)$ which could lead to better criterion for alignment. However, at the drawback that the criterion would be more difficult to compute. For that reason this approach is not pursued in this article. 

\begin{myremark}[Criterion for alignment cannot get worse for higher levels] \label{rmk:UBworse}
Suppose the sufficient conditions at level 1 in Theorem~\ref{thm:UB} are satisfied. Then using the inequality $\F{\W_0} \leq  2 \F{\W}-\F{\W}^2$ and the identity $\F{\W_1}=\F{\W}^2$ \cite[Prop.\ 17]{renes14_2} and $0\leq \F{\W}\leq 1$, we obtain 
\begin{align}
\F{\W}+\F{\V^\setC} \leq \sqrt{\F{\W_1}} + 1- \sqrt{1-\F{\V_0^{\setC}}} \leq 1,
\end{align}
where the last inequality uses $\F{\W_1}+\F{(\V^{\setC})_0}\leq 1$ which is given by assumption. This argument can be applied to each level and thus shows that if the assumptions in Theorem~\ref{thm:UB} at level $k$ are satisfied they are also satisfied for all levels $\ell \leq k$.
\end{myremark}

\begin{myremark}[No improvement after level 0 for $\BEC$s]  \label{rmk:noImprovementBEC}
In case $\W$ or $\V$ is a $\BEC$, the sufficient conditions in Theorem~\ref{thm:UB} cannot be improved by going to higher levels than level 0. Let $\W$ be a $\BEC(\alpha)$. The level 0 condition requires that $\alpha \geq \F{\V^{\setC}}$. One condition of the first level is $Z(\W_0)+\F{(\V^\setC)_1} \leq 1$. Since $\W$ is a $\BEC$ we know that $Z(\W_0) = 2 Z(\W) - Z(\W)^2 = 1- \beta^2$. Moreover $\F{(\V^{\setC})_1} = \F{\V^{\setC}}^2$ and thus as $\beta \in [0,1]$ the condition from level 1 coincides with the one from level 0. This argument carries over to higher levels. Note that in case $\V$ is a $\BEC$ the same justification can be applied as the counterpart channel of a $\BEC$ is a $\BEC$ again (see Example~\ref{ex:complementBEC}).
\end{myremark}


\subsection{Channels with non-uniform input distribution}
The sufficient conditions that imply $|\mathcal{R}_{\varepsilon}^n(\W) \cap \mathcal{D}_{\varepsilon}^n(\V)|  = \Omega(n)$ or $|\mathcal{D}_{\varepsilon}^n(\W) \cap \mathcal{R}_{\varepsilon}^n(\V)|  = \Omega(n)$ as introduced in Section~\ref{sec:LB} are valid for binary-input discrete memoryless channels $\W$ and $\V$ with an arbitrary input distribution. The sufficient conditions for $\mathcal{R}_{\varepsilon}^n(\W) \subseteq \mathcal{R}_{\varepsilon}^n(\V)$ and $\mathcal{R}_{\varepsilon}^n(\W) \subseteq \mathcal{R}_{\varepsilon}^n(\V)$ derived in Section~\ref{sec:UB} depend on the input distribution to the channels $\W$ and $\V$ but it can be shown that they remain valid for non-uniform input distributions. The idea is to consider a generalized fidelity measure that is defined for a binary-input cq channel that is described via the mapping $\{0,1\} \ni x \mapsto \rho_x \in \mathcal{D}(\mathcal{H})$ as $Z(X|B):=2 \sqrt{p(1-p)} \F{\rho_0,\rho_1}$ where $p$ denotes the probability that we observe at the output the state $\rho_0$. It has been shown that $Z(X|B)$ polarizes in the same way as $\F{\rho_0,\rho_1}$ \cite[Prop.\ 17]{renes14_2} which proves that Theorem~\ref{thm:UB} remains valid also for channels with a non-uniform input distribution.

\section{Applications} \label{sec:applications}
In this section we demonstrate the performance of the statements derived in Theorems~\ref{thm:LB} and \ref{thm:UB} on several well-known scenarios. A special emphasis will be put on BSC and BEC channels as they oftentimes show extreme behavior.
\begin{myremark}[{\cite[Ex.\ 5.4, p.~121]{elgamal12}}] \label{rmk:BSC_BEC}
Let $\W:\mathcal{X} \to \mathcal{Y}$ be a $\BSC(\alpha)$ and $\V:\mathcal{X}\to \mathcal{Z}$ be a $\BEC(\beta)$. Then the following holds:
\begin{enumerate}[(i)]
\item For $0\leq \beta \leq 2 \alpha$, $\W$ is a degraded version of $\V$. 
\item For $2\alpha < \beta \leq 4\alpha(1-\alpha)$, $\V$ is less noisy than $\W$, but  $\W$ is not a degraded version of $\V$.
\item For $4 \alpha(1-\alpha) < \beta \leq \Hb(\alpha)$, $\V$ is more capable than $\W$, but not less noisy.
\item For $\Hb(\alpha) < \beta \leq 1$, $\V$ and $\W$ are not more capable comparable.
\end{enumerate}
\end{myremark}

\subsection{BSC/BEC pair with a uniform input distribution} 
Let $\W:\mathcal{X} \to \mathcal{Y}$ be a $\BSC(\alpha)$ for $\alpha \in [0,\tfrac{1}{2}]$ and $\V:\mathcal{X} \to \mathcal{Z}$ be a $\BEC(\beta)$ for $\beta \in [0,1]$. Consider a uniform input distribution, i.e., $X \sim \Bernoulli{\tfrac{1}{2}}$.
According to Remark~\ref{rmk:BSC_BEC} and Proposition~\ref{prop:LN} we know that for $\beta \leq 4\alpha(1-\alpha)$ the channel $\V$ is less noisy than $\W$ and hence $\mathcal{D}_{\varepsilon}^n(\W)$ $\scriptsize{\overset{\boldsymbol{\cdot}}{\subseteq}}$ $\mathcal{D}_{\varepsilon}^n(\V)$ and $\mathcal{R}_{\varepsilon}^n(\W)$ $\scriptsize{\overset{\boldsymbol{\cdot}}{\supseteq}}$ $\mathcal{R}_{\varepsilon}^n(\V)$. To determine a region where $\mathcal{R}_{\varepsilon}^n(\W) \subseteq \mathcal{R}_{\varepsilon}^n(\V)$ we can use the technique derived in Section~\ref{sec:UB} which ensures that this is the case if $Z(\W)-Z(\V^{\setC}) \leq 1$.\footnote{This is the condition at level $0$.} Recalling that $\V^{\setC} = \BEC(1-\beta)$ (see Example~\ref{ex:complementBEC}) then gives $\beta \geq 2 \sqrt{\alpha(1-\alpha)}$. As discussed in Remark~\ref{rmk:noImprovementBEC} this criterion cannot be improved by considering higher levels as the channel $\V$ is a $\BEC$. Using the technique explained in Section~\ref{sec:LB} (cf.\ Theorem~\ref{thm:LB}) we can determine regions where $|\mathcal{R}_{\varepsilon}^n(\W)\cap \mathcal{D}_{\varepsilon}^n(\V)| = \Omega(n)$ or $|\mathcal{D}_{\varepsilon}^n(\W)\cap \mathcal{R}_{\varepsilon}^n(\V)|  = \Omega(n)$. Figure~\ref{fig:BSCBEC} summarizes the results about the alignment properties of the polarized sets $\mathcal{R}_{\varepsilon}^n(\W)$, $\mathcal{R}_{\varepsilon}^n(\V)$, $\mathcal{D}_{\varepsilon}^n(\W)$, and $\mathcal{D}_{\varepsilon}^n(\V)$ for all pairs $(\alpha,\beta) \in [0,\tfrac{1}{2}]\times[0,1]$.

\begin{figure}[!htb]
  \begin{tikzpicture}
	\begin{axis}[
		height=9cm,
		width=16cm,
		xlabel=$\alpha$,
		ylabel=$\beta$,
		xmin=0,
		xmax=0.5,
		ymax=1.0,
		ymin=0,
		legend style={at={(0.678,0.38)},anchor=north,legend cell align=left} 
	]

\addplot [draw=none,smooth,fill=blue!05,forget plot]coordinates {
(0.,0.) (0.01,0.0396) (0.02,0.0784) (0.03,0.1164) (0.04,0.1536) (0.05,0.19) (0.06,0.2256) (0.07,0.2604) (0.08,0.2944) (0.09,0.3276) (0.1,0.36) (0.11,0.3916) (0.12,0.4224) (0.13,0.4524) (0.14,0.4816) (0.15,0.51) (0.16,0.5376) (0.17,0.5644) (0.18,0.5904) (0.19,0.6156) (0.2,0.64) (0.21,0.6636) (0.22,0.6864) (0.23,0.7084) (0.24,0.7296) (0.25,0.75) (0.26,0.7696) (0.27,0.7884) (0.28,0.8064) (0.29,0.8236) (0.3,0.84) (0.31,0.8556) (0.32,0.8704) (0.33,0.8844) (0.34,0.8976) (0.35,0.91) (0.36,0.9216) (0.37,0.9324) (0.38,0.9424) (0.39,0.9516) (0.4,0.96) (0.41,0.9676) (0.42,0.9744) (0.43,0.9804) (0.44,0.9856) (0.45,0.99) (0.46,0.9936) (0.47,0.9964) (0.48,0.9984) (0.49,0.9996) (0.5,1.) (0.5,0) (0,0)	
	};	
\addplot [draw=none,smooth,fill=YellowOrange!7,forget plot]coordinates {
(0.,0) (0.01,0.0585613) (0.02,0.103412) (0.03,0.144193) (0.04,0.182559) (0.05,0.219203) (0.06,0.254488) (0.07,0.288624) (0.08,0.321744) (0.09,0.353931) (0.1,0.385241) (0.11,0.41571) (0.12,0.445364) (0.13,0.474219) (0.14,0.502285) (0.15,0.529571) (0.16,0.55608) (0.17,0.581816) (0.18,0.60678) (0.19,0.630975) (0.2,0.6544) (0.21,0.677057) (0.22,0.698946) (0.23,0.720066) (0.24,0.740418) (0.25,0.760002) (0.26,0.778818) (0.27,0.796865) (0.28,0.814145) (0.29,0.830657) (0.3,0.846401) (0.31,0.861377) (0.32,0.875585) (0.33,0.889025) (0.34,0.901697) (0.35,0.913601) (0.36,0.924736) (0.37,0.935105) (0.38,0.944704) (0.39,0.953538) (0.4,0.961579) (0.41,0.968917) (0.42,0.975145) (0.43,0.982) (0.44,0.989) (0.45,0.990453) (0.46,0.99865) (0.47,0.9993) (0.48,0.99995) (0.49,1) (0.5,1)
(0.5,1.) (0.49,0.999712) (0.48,0.998849) (0.47,0.99742) (0.46,0.995435) (0.45,0.992907)(0.44,0.989849)(0.43,0.986276) (0.42,0.982198) (0.41,0.977628) (0.4,0.972573) (0.39,0.967042) (0.38,0.961039) (0.37,0.954568) (0.36,0.947632) (0.35,0.940229) (0.34,0.93236) (0.33,0.92402) (0.32,0.915206) (0.31,0.90591) (0.3,0.896125) (0.29,0.885842) (0.28,0.875049) (0.27,0.863734) (0.26,0.851881) (0.25,0.839473) (0.24,0.826491) (0.23,0.812913) (0.22,0.798714) (0.21,0.783865) (0.2,0.768335) (0.19,0.752087)  (0.18,0.73508) (0.17,0.717267) (0.16,0.698591) (0.15,0.67899)  (0.14,0.658389) (0.13,0.636699)  (0.12,0.613815) (0.11,0.58961) (0.1,0.563926) (0.09,0.536565) (0.08,0.507276) (0.07,0.475723)  (0.06,0.44145) (0.05,0.403802) (0.04,0.361777) (0.03,0.313694) (0.02,0.25629) (0.01,0.181165)  (0.,0) 		
	};	
\addplot [draw=none,smooth,fill=ForestGreen!05,forget plot]coordinates {	
(0.,0.) (0.01,0.198997) (0.02,0.28) (0.03,0.341174) (0.04,0.391918) (0.05,0.43589) (0.06,0.474974) (0.07,0.510294) (0.08,0.542586) (0.09,0.572364) (0.1,0.6) (0.11,0.62578) (0.12,0.649923) (0.13,0.672607) (0.14,0.693974) (0.15,0.714143) (0.16,0.733212) (0.17,0.751266) (0.18,0.768375) (0.19,0.784602) (0.2,0.8) (0.21,0.814616) (0.22,0.828493) (0.23,0.841665) (0.24,0.854166) (0.25,0.866025) (0.26,0.877268) (0.27,0.887919) (0.28,0.897998) (0.29,0.907524) (0.3,0.916515) (0.31,0.924986) (0.32,0.932952) (0.33,0.940425) (0.34,0.947418) (0.35,0.953939) (0.36,0.96) (0.37,0.965609) (0.38,0.970773) (0.39,0.9755) (0.4,0.979796) (0.41,0.983667) (0.42,0.987117) (0.43,0.990152) (0.44,0.992774) (0.45,0.994987) (0.46,0.996795) (0.47,0.998198) (0.48,0.9992) (0.49,0.9998) (0.5,1.) (0,1) (0,0)		
};		
		\addplot[draw=none,smooth,fill=red!05,forget plot] coordinates {
(0.,0) (0.01,0.0807931) (0.02,0.141441) (0.03,0.194392) (0.04,0.242292) (0.05,0.286397) (0.06,0.327445) (0.07,0.365924) (0.08,0.402179) (0.09,0.43647) (0.1,0.468996) (0.11,0.499916) (0.12,0.529361) (0.13,0.557438) (0.14,0.584239) (0.15,0.60984) (0.16,0.63431) (0.17,0.657705) (0.18,0.680077) (0.19,0.701471) (0.2,0.721928) (0.21,0.741483) (0.22,0.760168) (0.23,0.778011) (0.24,0.79504) (0.25,0.811278) (0.26,0.826746) (0.27,0.841465) (0.28,0.855451) (0.29,0.868721) (0.3,0.881291) (0.31,0.893173) (0.32,0.904381) (0.33,0.914926) (0.34,0.924819) (0.35,0.934068) (0.36,0.942683) (0.37,0.950672) (0.38,0.958042) (0.39,0.9648) (0.4,0.970951) (0.41,0.9765) (0.42,0.981454) (0.43,0.985815) (0.44,0.989588) 		
(0.5,1) 
(0.5,1.) (0.49,0.999712) (0.48,0.998849) (0.47,0.99742) (0.46,0.995435) (0.45,0.992907)(0.44,0.989849)(0.43,0.986276) (0.42,0.982198) (0.41,0.977628) (0.4,0.972573) (0.39,0.967042) (0.38,0.961039) (0.37,0.954568) (0.36,0.947632) (0.35,0.940229) (0.34,0.93236) (0.33,0.92402) (0.32,0.915206) (0.31,0.90591) (0.3,0.896125) (0.29,0.885842) (0.28,0.875049) (0.27,0.863734) (0.26,0.851881) (0.25,0.839473) (0.24,0.826491) (0.23,0.812913) (0.22,0.798714) (0.21,0.783865) (0.2,0.768335) (0.19,0.752087)  (0.18,0.73508) (0.17,0.717267) (0.16,0.698591) (0.15,0.67899)  (0.14,0.658389) (0.13,0.636699)  (0.12,0.613815) (0.11,0.58961) (0.1,0.563926) (0.09,0.536565) (0.08,0.507276) (0.07,0.475723)  (0.06,0.44145) (0.05,0.403802) (0.04,0.361777) (0.03,0.313694) (0.02,0.25629) (0.01,0.181165)  (0.,0) 			              	 											
	};


		\addplot[ForestGreen,thick,smooth] coordinates {
(0.,0.) (0.01,0.198997) (0.02,0.28) (0.03,0.341174) (0.04,0.391918) (0.05,0.43589) (0.06,0.474974) (0.07,0.510294) (0.08,0.542586) (0.09,0.572364) (0.1,0.6) (0.11,0.62578) (0.12,0.649923) (0.13,0.672607) (0.14,0.693974) (0.15,0.714143) (0.16,0.733212) (0.17,0.751266) (0.18,0.768375) (0.19,0.784602) (0.2,0.8) (0.21,0.814616) (0.22,0.828493) (0.23,0.841665) (0.24,0.854166) (0.25,0.866025) (0.26,0.877268) (0.27,0.887919) (0.28,0.897998) (0.29,0.907524) (0.3,0.916515) (0.31,0.924986) (0.32,0.932952) (0.33,0.940425) (0.34,0.947418) (0.35,0.953939) (0.36,0.96) (0.37,0.965609) (0.38,0.970773) (0.39,0.9755) (0.4,0.979796) (0.41,0.983667) (0.42,0.987117) (0.43,0.990152) (0.44,0.992774) (0.45,0.994987) (0.46,0.996795) (0.47,0.998198) (0.48,0.9992) (0.49,0.9998) (0.5,1.)									
	};
	\addlegendentry{$\mathcal{R}_{\varepsilon}^n(\W)\subseteq \mathcal{R}_{\varepsilon}^n(\V)$ (level 0)}

		\addplot[red,thick] coordinates {
		(0.,0) (0.01,0.181165) (0.02,0.25629) (0.03,0.313694) (0.04,0.361777) (0.05,0.403802) (0.06,0.44145) (0.07,0.475723) (0.08,0.507276) (0.09,0.536565) (0.1,0.563926) (0.11,0.58961) (0.12,0.613815) (0.13,0.636699) (0.14,0.658389) (0.15,0.67899) (0.16,0.698591) (0.17,0.717267) (0.18,0.73508) (0.19,0.752087) (0.2,0.768335) (0.21,0.783865) (0.22,0.798714) (0.23,0.812913) (0.24,0.826491) (0.25,0.839473) (0.26,0.851881) (0.27,0.863734) (0.28,0.875049) (0.29,0.885842) (0.3,0.896125) (0.31,0.90591) (0.32,0.915206) (0.33,0.92402) (0.34,0.93236) (0.35,0.940229) (0.36,0.947632) (0.37,0.954568) (0.38,0.961039) (0.39,0.967042) (0.4,0.972573) (0.41,0.977628) (0.42,0.982198) (0.43,0.986276) (0.44,0.989849) (0.45,0.992907) (0.46,0.995435) (0.47,0.99742) (0.48,0.998849) (0.49,0.999712) (0.5,1.)													
	};
	\addlegendentry{$|\mathcal{R}_{\varepsilon}^n(\W) \cap \mathcal{D}_{\varepsilon}^n(\V)| = \Omega(n)$ (level $4$)}		

	\addplot[black,thick,smooth] coordinates {
(0.,0) (0.01,0.0807931) (0.02,0.141441) (0.03,0.194392) (0.04,0.242292) (0.05,0.286397) (0.06,0.327445) (0.07,0.365924) (0.08,0.402179) (0.09,0.43647) (0.1,0.468996) (0.11,0.499916) (0.12,0.529361) (0.13,0.557438) (0.14,0.584239) (0.15,0.60984) (0.16,0.63431) (0.17,0.657705) (0.18,0.680077) (0.19,0.701471) (0.2,0.721928) (0.21,0.741483) (0.22,0.760168) (0.23,0.778011) (0.24,0.79504) (0.25,0.811278) (0.26,0.826746) (0.27,0.841465) (0.28,0.855451) (0.29,0.868721) (0.3,0.881291) (0.31,0.893173) (0.32,0.904381) (0.33,0.914926) (0.34,0.924819) (0.35,0.934068) (0.36,0.942683) (0.37,0.950672) (0.38,0.958042) (0.39,0.9648) (0.4,0.970951) (0.41,0.9765) (0.42,0.981454) (0.43,0.985815) (0.44,0.989588) (0.45,0.992774) (0.46,0.995378) (0.47,0.997402) (0.48,0.998846) (0.49,0.999711) (0.5,1.)											
	};
	\addlegendentry{$\I{X}{Y}=\I{X}{Z}$}


		\addplot[YellowOrange,thick,smooth] coordinates {
(0.,0) (0.01,0.0585613) (0.02,0.103412) (0.03,0.144193) (0.04,0.182559) (0.05,0.219203) (0.06,0.254488) (0.07,0.288624) (0.08,0.321744) (0.09,0.353931) (0.1,0.385241) (0.11,0.41571) (0.12,0.445364) (0.13,0.474219) (0.14,0.502285) (0.15,0.529571) (0.16,0.55608) (0.17,0.581816) (0.18,0.60678) (0.19,0.630975) (0.2,0.6544) (0.21,0.677057) (0.22,0.698946) (0.23,0.720066) (0.24,0.740418) (0.25,0.760002) (0.26,0.778818) (0.27,0.796865) (0.28,0.814145) (0.29,0.830657) (0.3,0.846401) (0.31,0.861377) (0.32,0.875585) (0.33,0.889025) (0.34,0.901697) (0.35,0.913601) (0.36,0.924736) (0.37,0.935105) (0.38,0.944704) (0.39,0.953538) (0.4,0.961579) (0.41,0.968917) (0.42,0.975145) (0.43,0.982) (0.44,0.989) (0.45,0.990453) (0.46,0.99865) (0.47,0.9993) (0.48,0.99995) (0.49,1) (0.5,1)
	};
		\addlegendentry{$|\mathcal{D}_{\varepsilon}^n(\W) \cap \mathcal{R}_{\varepsilon}^n(\V)| = \Omega(n)$ (level 3) }

		\addplot[blue,thick,smooth] coordinates {
(0.,0.) (0.01,0.0396) (0.02,0.0784) (0.03,0.1164) (0.04,0.1536) (0.05,0.19) (0.06,0.2256) (0.07,0.2604) (0.08,0.2944) (0.09,0.3276) (0.1,0.36) (0.11,0.3916) (0.12,0.4224) (0.13,0.4524) (0.14,0.4816) (0.15,0.51) (0.16,0.5376) (0.17,0.5644) (0.18,0.5904) (0.19,0.6156) (0.2,0.64) (0.21,0.6636) (0.22,0.6864) (0.23,0.7084) (0.24,0.7296) (0.25,0.75) (0.26,0.7696) (0.27,0.7884) (0.28,0.8064) (0.29,0.8236) (0.3,0.84) (0.31,0.8556) (0.32,0.8704) (0.33,0.8844) (0.34,0.8976) (0.35,0.91) (0.36,0.9216) (0.37,0.9324) (0.38,0.9424) (0.39,0.9516) (0.4,0.96) (0.41,0.9676) (0.42,0.9744) (0.43,0.9804) (0.44,0.9856) (0.45,0.99) (0.46,0.9936) (0.47,0.9964) (0.48,0.9984) (0.49,0.9996) (0.5,1.)																
	};
		\addlegendentry{$\mathcal{D}_{\varepsilon}^n(\W)$ $\scriptsize{\overset{\boldsymbol{\cdot}}{\subseteq}}$ $\mathcal{D}_{\varepsilon}^n(\V)$ and $\mathcal{R}_{\varepsilon}^n(\W)$ $\scriptsize{\overset{\boldsymbol{\cdot}}{\supseteq}}$ $\mathcal{R}_{\varepsilon}^n(\V)$ }	

	\end{axis}  
\node[blue] at (10,4.5) {$\mathcal{D}_{\varepsilon}^n(\W)$ $\scriptsize{\overset{\boldsymbol{\cdot}}{\subseteq}}$ $\mathcal{D}_{\varepsilon}^n(\V)$};
\node[blue] at (10,3.8) {$\mathcal{R}_{\varepsilon}^n(\W)$ $\scriptsize{\overset{\boldsymbol{\cdot}}{\supseteq}}$ $\mathcal{R}_{\varepsilon}^n(\V)$};
\node[ForestGreen] at (2,6) {$\mathcal{R}_{\varepsilon}^n(\W)\subseteq \mathcal{R}_{\varepsilon}^n(\V)$};
\node[rotate=36,red] at (2.9,3.75) {{\footnotesize $|\mathcal{R}_{\varepsilon}^n(\W) \cap \mathcal{D}_{\varepsilon}^n(\V)| = \Omega(n)$}};
\node[rotate=38,YellowOrange] at (3.2,3.35) {{\footnotesize $|\mathcal{D}_{\varepsilon}^n(\W) \cap \mathcal{R}_{\varepsilon}^n(\V)| = \Omega(n)$}};

\end{tikzpicture}

\caption{Alignment of the polarized sets defined in \eqref{eq:polarizedSets} for $\W=\BSC(\alpha)$, $\V=\BEC(\beta)$ with $\alpha \in [0,\tfrac{1}{2}]$ and $\beta \in [0,1]$ and a uniform input distribution. The black line shows the region where the two channels have the same capacity, $\beta = \Hb(\alpha)$. In the blue region, $\V$ is less noisy than $\W$ and hence  Proposition~\ref{prop:LN} ensures $\mathcal{D}_{\varepsilon}^n(\W)$ $\scriptsize{\overset{\boldsymbol{\cdot}}{\subseteq}}$ $\mathcal{D}_{\varepsilon}^n(\V)$ and $\mathcal{R}_{\varepsilon}^n(\W)$ $\scriptsize{\overset{\boldsymbol{\cdot}}{\supseteq}}$ $\mathcal{R}_{\varepsilon}^n(\V)$. The remaining colored regions are determined using the conditions given in Theorems~\ref{thm:LB} and \ref{thm:UB} evaluated for different levels.}
\label{fig:BSCBEC}
\end{figure}
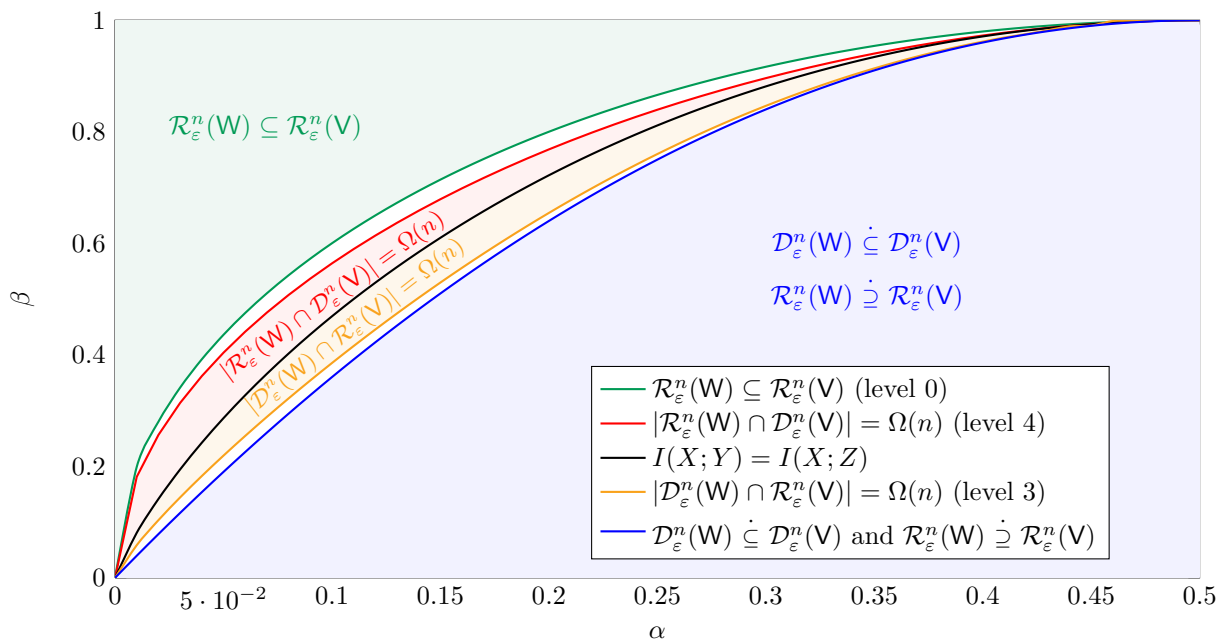

\subsection{BSC-BEC wiretap channel}\label{sec:BscBecWTC}
Consider a discrete memoryless wiretap channel where the channel from Alice to Bob, $\W:\mathcal{X}\to \mathcal{Y}$ is a $\BSC(\alpha)$ with $\alpha \in [0,\tfrac{1}{2}]$ and the channel from Alice to Eve $\V:\mathcal{X} \to \mathcal{Z}$ is a $\BEC(\beta)$ with $\beta \in [0,1]$. As discussed in Remark~\ref{rmk:BSC_BEC}, $\W$ is not more capable than $\V$ for any $\alpha \in [0,\tfrac{1}{2}]$ and $\beta \in [0,1]$. For $0\leq \beta \leq 4 \alpha (1-\alpha)$, $\V$ is less noisy than $\W$ (see Remark~\ref{rmk:BSC_BEC}) which implies that the secrecy capacity of the wiretap channel is zero \cite[Cor.\ 3]{csiszar78}. Therefore, we consider the setup where $4 \alpha (1-\alpha) <\beta \leq 1$. In this setup the secrecy capacity is positive as $\V$ is not less noisy than $\W$. It has been shown \cite[Sec.\ 5]{ulukus13} that for this model the secrecy capacity given in Theorem~\ref{thm:Cs} simplifies to
\begin{equation}
C_s(\W,\V) = \max \limits_{\gamma \in [0,\tfrac{1}{2}]} \I{U}{Y} - \I{U}{Z} 
\end{equation}
where 
\begin{equation}
P_{X|U} = \BSC(\gamma), \,\textnormal{ and } \, \Prob{U=0} = \Prob{U=1} = \frac{1}{2}.
\end{equation}
For $\gamma^\star_{\alpha,\beta}=\arg \max_{\gamma \in [0,1/2]}  \I{U}{Y} - \I{U}{Z}$, we define a wiretap channel that includes the optimal preprocessing\footnote{By preprocessing we mean the distribution $P_{X|U}$ which is a $\BSC(\gamma)$ for this case.} as $\bar \W_{\gamma^\star_{\alpha,\beta}} := \W \circ \BSC(\gamma^\star_{\alpha,\beta})$ being the channel from Alice to Bob and $\bar \V_{\gamma^\star_{\alpha,\beta}}:=\V \circ  \BSC(\gamma^\star_{\alpha,\beta}) $ the channel to the eavesdropper. We also know as stated above that the optimal input distribution for the wiretap channel $(\bar \W_{\gamma^\star_{\alpha,\beta}}, \bar \V_{\gamma^\star_{\alpha,\beta}})$ is the uniform.

Using the notation defined in \eqref{eq:polarizedSets} and following the idea introduced in \cite{vardy11}, we can derive a coding scheme based on polar codes that achieves the secrecy capacity of the $(\W,\V)$ wiretap channel by inserting message bits to the indices specified by the set $\mathcal{M}_{\varepsilon}^n(\bar \W_{\gamma^\star_{\alpha,\beta}},\bar \V_{\gamma^\star_{\alpha,\beta}}):=\mathcal{D}_{\varepsilon}^n(\bar \W_{\gamma^\star_{\alpha,\beta}}) \cap \mathcal{R}_{\varepsilon}^n(\bar \V_{\gamma^\star_{\alpha,\beta}})$. The indices given by $\mathcal{A}_{\varepsilon}^n(\bar \W_{\gamma^\star_{\alpha,\beta}},\bar \V_{\gamma^\star_{\alpha,\beta}}):=\mathcal{D}_{\varepsilon}^n(\bar \W_{\gamma^\star_{\alpha,\beta}}) \cap \overline{\mathcal{R}_{\varepsilon}^n(\bar \V_{\gamma^\star_{\alpha,\beta}})}$ are filled with random bits and the indices of $\mathcal{F}_{\varepsilon}^n (\bar \W_{\gamma^\star_{\alpha,\beta}},\bar \V_{\gamma^\star_{\alpha,\beta}}):=\overline{\mathcal{D}_{\varepsilon}^n(\bar \W_{\gamma^\star_{\alpha,\beta}})} \cap \mathcal{R}_{\varepsilon}^n(\bar \V_{\gamma^\star_{\alpha,\beta}})$ are frozen to $0$. Finally, the indices specified by the set $\mathcal{K}_{\varepsilon}^n(\bar \W_{\gamma^\star_{\alpha,\beta}},\bar \V_{\gamma^\star_{\alpha,\beta}}):=\overline{\mathcal{D}_{\varepsilon}^n(\bar \W_{\gamma^\star_{\alpha,\beta}})} \cap \overline{\mathcal{R}_{\varepsilon}^n(\bar \V_{\gamma^\star_{\alpha,\beta}})}$ have to be filled with a secret key that is shared between Alice and Bob. As it may be difficult to provide a (large) secret key shared between Alice and Bob, it is desirable to have an understanding of how large $|\mathcal{K}_{\varepsilon}^n(\bar \W_{\gamma^\star_{\alpha,\beta}},\bar \V_{\gamma^\star_{\alpha,\beta}})|$ is. In particular it is interesting to locate choices of $(\alpha,\beta)$ for which essentially no secret key is necessary and such for which secret key is clearly needed. This can be done with the help of Theorems~\ref{thm:LB} and \ref{thm:UB} derived in Sections~\ref{sec:LB} and \ref{sec:UB}. Recall that by definition of the polarized sets given in \eqref{eq:polarizedSets} we have $\overline{\mathcal{D}_{\varepsilon}^n(\W)}=\mathcal{R}_{1-\varepsilon}^n(\W)\supsetneq \mathcal{R}_{\varepsilon}^n(\W)$ for every DMC $\W$ and $\varepsilon \in (0,\frac{1}{2})$. Furthermore, by the polarization phenomenon \cite{arikan09} the following relation $|\overline{\mathcal{D}_{\varepsilon}^n(\W)} \cap \mathcal{R}_{\varepsilon}^n(\W)| = o(n)$ holds. Therefore an alignment result of the form $\mathcal{R}_{\varepsilon}^n(\W) \subseteq  \mathcal{R}_{\varepsilon}^n(\V)$ implies that $|\overline{\mathcal{D}_{\varepsilon}^n(\W)} \cap \overline{\mathcal{R}_{\varepsilon}^n(\V)}| = o(n)$.

The level $0$ condition of Theorem~\ref{thm:UB} ensures that for any pair $(\alpha,\beta)\in [0,\tfrac{1}{2}] \times [0,1]$ where $\F{\bar \W_{\gamma^\star_{\alpha,\beta}}}+\F{(\bar \V_{\gamma^\star_{\alpha,\beta}})^{\setC}} \leq 1$ we have $\mathcal{R}_{\varepsilon}^n(\bar \W_{\gamma^\star_{\alpha,\beta}}) \subseteq \mathcal{R}_{\varepsilon}^n(\bar \V_{\gamma^\star_{\alpha,\beta}})$ which as explained above implies that $|\mathcal{K}_{\varepsilon}^n(\bar \W_{\gamma^\star_{\alpha,\beta}},\bar \V_{\gamma^\star_{\alpha,\beta}})|=o(n)$. Recall that the counterpart channel $(\bar \V_{\gamma^\star_{\alpha,\beta}})^{\setC}$ has been derived in Example~\ref{ex:counterpartBSC}. We note that the conditions given in Theorem~\ref{thm:LB} evaluated for high levels seems to always lie inside the region where $C_s=0$, i.e., Theorem~\ref{thm:LB} does not provide any useful information.
Figure~\ref{fig:WTC} determines pairs $(\alpha,\beta)$ for which essentially no key-assistance is needed, i.e., $|\mathcal{K}_{\varepsilon}^n(\bar \W_{\gamma^\star_{\alpha,\beta}},\bar \V_{\gamma^\star_{\alpha,\beta}})|=o(n)$.

\subsection{BEC-BSC wiretap channel}\label{sec:BecBscWTC}
Consider a discrete memoryless wiretap channel where the channel from Alice to Bob $\V:\mathcal{X}\to \mathcal{Y}$ is a $\BEC(\beta)$ with $\beta \in [0,1]$ and the channel from Alice to Eve $\W:\mathcal{X} \to \mathcal{Z}$ is a $\BSC(\alpha)$ with $\alpha \in [0,\tfrac{1}{2}]$. As discussed in Remark~\ref{rmk:BSC_BEC} for $\beta \leq \Hb(\alpha)$, $\V$ is more capable than $\W$ and thus by Corollary~\ref{cor:CsMoreCapable} $C_s(\W,\V)= \max_{P_X} \I{X}{Y}-\I{X}{Z}$, i.e., no preprocessing is needed to achieve the secrecy capacity and therefore it is straightforward to build a coding scheme using polar codes that achieves $C_s(\W,\V)$ with $\mathcal{K}_{\varepsilon}^n(\W, \V):=\overline{\mathcal{R}_\varepsilon^n(\W)} \cap\overline{\mathcal{D}_\varepsilon^n(\V)}$ representing the set where key assistance is needed. If $\beta \leq 4 \alpha(1-\alpha)$, $\V$ is less noisy than $\W$ and by Proposition~\ref{prop:LN} this implies $|\mathcal{K}_{\varepsilon}^n(\W, \V)| = o(n)$. For $4 \alpha (1-\alpha) <\beta \leq \Hb(\alpha)$, $V$ is more capable than $\W$ but not less noisy. Proposition~\ref{prop:MC} implies that for the capacity achieving input distribution we have $|\mathcal{K}_{\varepsilon}^n(\W, \V)| = o(n)$. Therefore for $0\leq \beta \leq \Hb(\alpha)$ the key recycling protocol introduced in \cite{sasoglu13} can be used to achieve the secrecy capacity without the need of initial preshared key.

For $\beta \geq \Hb(\alpha)$ it has been shown \cite{ulukus13} that the secrecy capacity is given by
\begin{align} \label{eq:capacityUlu}
C_s(\W,\V) = \left\{
\begin{array}{lll}
			&\max\limits_{r,\gamma} 		& f((1-r)\gamma)-r f(0) - (1-r)f(\gamma) \\
			&\st					& 0\leq r \leq 1\\
			& 						& 0\leq \gamma \leq 1,
	\end{array} \right.
\end{align}
with $[0,1] \ni p \mapsto f(p):=\I{X}{Y}-\I{X}{Z} \in [-1,1]$ for $p:=\Prob{X=0}$. Let $r_{\alpha,\beta}^{\star}$ and $\gamma_{\alpha,\beta}^{\star}$ denote the optimizers of \eqref{eq:capacityUlu}, it has been shown in \cite{ulukus13} that $C_s(\W,\V)=\I{U}{Y}-\I{U}{Z}$ with a preprocessing $U \in \{0,1 \}$ such that $\Prob{U=0|X=0}=0$, $\Prob{U=0|X=1}=\gamma^{\star}_{\alpha,\beta}$ and $\Prob{U=0}= r^{\star}_{\alpha,\beta}$. For $\Rc_{\gamma^{\star}_{\alpha,\beta}}:\mathcal{U}\to \mathcal{X}$ being a channel that describes the preprocessing $U \markovDavid X \markovDavid (Y,Z)$ explained above we define $\bar \W_{\gamma^\star_{\alpha,\beta}}:=\W \circ \Rc_{\gamma^{\star}_{\alpha,\beta}}$ and $\bar \V_{\gamma^\star_{\alpha,\beta}}:=\V \circ \Rc_{\gamma^{\star}_{\alpha,\beta}}$. Considering an input distribution $\Prob{U=0}= r^{\star}_{\alpha,\beta}$, we can derive a coding scheme based on polar codes that achieves the secrecy capacity of the $(\V,\W)$ wiretap channel by inserting message bits to the indices specified by the set $\mathcal{M}_{\varepsilon}^n(\bar \W_{\gamma^\star_{\alpha,\beta}},\bar \V_{\gamma^\star_{\alpha,\beta}}):=\mathcal{R}_{\varepsilon}^n(\bar \W_{\gamma^\star_{\alpha,\beta}}) \cap \mathcal{D}_{\varepsilon}^n(\bar \V_{\gamma^\star_{\alpha,\beta}})$, filling the indices given by $\mathcal{A}_{\varepsilon}^n(\bar \W_{\gamma^\star_{\alpha,\beta}},\bar \V_{\gamma^\star_{\alpha,\beta}}):=\overline{\mathcal{R}_{\varepsilon}^n(\bar \W_{\gamma^\star_{\alpha,\beta}})} \cap \mathcal{D}_{\varepsilon}^n(\bar \V_{\gamma^\star_{\alpha,\beta}})$ with random bits and freeze the indices corresponding to $\mathcal{F}_{\varepsilon,\varepsilon}^n (\bar \W_{\gamma^\star_{\alpha,\beta}},\bar \V_{\gamma^\star_{\alpha,\beta}}):=\mathcal{R}_{\varepsilon}^n(\bar \W_{\gamma^\star_{\alpha,\beta}}) \cap \overline{\mathcal{D}_{\varepsilon}^n(\bar \V_{\gamma^\star_{\alpha,\beta}})}$ to $0$. The indices specified by the set $\mathcal{K}_{\varepsilon}^n(\bar \W_{\gamma^\star_{\alpha,\beta}},\bar \V_{\gamma^\star_{\alpha,\beta}}):=\overline{\mathcal{R}_{\varepsilon}^n(\bar \W_{\gamma^\star_{\alpha,\beta}})} \cap \overline{\mathcal{D}_{\varepsilon}^n(\bar \V_{\gamma^\star_{\alpha,\beta}})}$ have to be filled with secret key that is shared between Alice and Bob. As depicted in Figure~\ref{fig:WTC} the condition to apply Theorem~\ref{thm:LB} evaluated at level $3$ can be used to determine a region where $|\mathcal{K}_{\varepsilon}^n(\bar \W_{\gamma^\star_{\alpha,\beta}},\bar \V_{\gamma^\star_{\alpha,\beta}})|=\Omega(n)$, i.e., preshared key is needed to achieve the secrecy capacity using the protocol given in \cite{sasoglu13}. Recall that Theorem~\ref{thm:LB} detect cases where $|\mathcal{D}_{\varepsilon}^n(\bar \W_{\gamma^\star_{\alpha,\beta}})\cap \mathcal{R}_{\varepsilon}^n(\bar \V_{\gamma^\star_{\alpha,\beta}})|=\Omega(n)$ which implies that $|\overline{\mathcal{R}_{\varepsilon}^n(\bar \W_{\gamma^\star_{\alpha,\beta}})} \cap \overline{\mathcal{D}_{\varepsilon}^n(\bar \V_{\gamma^\star_{\alpha,\beta}})}|=\Omega(n)$ as by the polarization phenomenon \cite{arikan09} we have $|\mathcal{D}_{\varepsilon}^n(\bar \W_{\gamma^\star_{\alpha,\beta}}) \cap \overline{\mathcal{R}_{\varepsilon}^n(\bar \W_{\gamma^\star_{\alpha,\beta}})}|=o(n)$ and $|\mathcal{R}_{\varepsilon}^n(\bar \V_{\gamma^\star_{\alpha,\beta}}) \cap \overline{\mathcal{D}_{\varepsilon}^n(\bar \V_{\gamma^\star_{\alpha,\beta}})}|=o(n)$.

\begin{figure}[!htb]
\hspace{-7mm}
    \subfloat[$\BSC(\alpha)-\BEC(\beta)$ wiretap channel]{  \begin{tikzpicture}
	\begin{axis}[
		height=8cm,
		width=8cm,
		xlabel=$\alpha$,
		ylabel=$\beta$,
		xmin=0,
		xmax=0.5,
		ymax=1.0,
		ymin=0,
		legend style={at={(0.59,0.20)},anchor=north,legend cell align=left,font=\footnotesize} 
	]

\addplot [draw=none,smooth,fill=red!05,forget plot]coordinates {
(0.,0.) (0.01,0.0396) (0.02,0.0784) (0.03,0.1164) (0.04,0.1536) (0.05,0.19) (0.06,0.2256) (0.07,0.2604) (0.08,0.2944) (0.09,0.3276) (0.1,0.36) (0.11,0.3916) (0.12,0.4224) (0.13,0.4524) (0.14,0.4816) (0.15,0.51) (0.16,0.5376) (0.17,0.5644) (0.18,0.5904) (0.19,0.6156) (0.2,0.64) (0.21,0.6636) (0.22,0.6864) (0.23,0.7084) (0.24,0.7296) (0.25,0.75) (0.26,0.7696) (0.27,0.7884) (0.28,0.8064) (0.29,0.8236) (0.3,0.84) (0.31,0.8556) (0.32,0.8704) (0.33,0.8844) (0.34,0.8976) (0.35,0.91) (0.36,0.9216) (0.37,0.9324) (0.38,0.9424) (0.39,0.9516) (0.4,0.96) (0.41,0.9676) (0.42,0.9744) (0.43,0.9804) (0.44,0.9856) (0.45,0.99) (0.46,0.9936) (0.47,0.9964) (0.48,0.9984) (0.49,0.9996) (0.5,1)	(0.5,0) (0,0)				
	};	

%

\addplot [draw=none,smooth,fill=ForestGreen!05,forget plot]coordinates {
(0.,0.) (0.005,0.175) (0.01,0.232) (0.015,0.275) (0.03,0.370) (0.04,0.418) (0.05,0.460) (0.08,0.562) (0.09,0.590)(0.1,0.617) (0.11,0.641) (0.12,0.665) (0.13,0.686) (0.14,0.707) (0.15,0.726) (0.16,0.744) (0.17,0.761) (0.18,0.778) (0.19,0.793)(0.20,0.808)(0.21,0.822)(0.22,0.836)(0.23,0.848)(0.24,0.860)(0.25,0.872)(0.26,0.882)(0.27,0.893)(0.28,0.902)(0.29,0.911)(0.30,0.920)(0.31,0.928)(0.32,0.936)(0.33,0.943)(0.34,0.950)(0.35,0.956)(0.36,0.962)(0.37,0.967)(0.38,0.972)(0.39,0.977)(0.40,0.981)(0.41,0.985)(0.42,0.988)(0.43,0.991)(0.44,0.994)(0.45,0.996)(0.46,0.997)(0.47,0.9983)(0.48,0.9993)(0.49,0.9997)(0.5,1)(0,1) (0,0)
	};		

		\addplot[ForestGreen,thick,smooth] coordinates {
(0.,0.) (0.005,0.175) (0.01,0.232) (0.015,0.275) (0.03,0.370) (0.04,0.418) (0.05,0.460) (0.08,0.562) (0.09,0.590)(0.1,0.617) (0.11,0.641) (0.12,0.665) (0.13,0.686) (0.14,0.707) (0.15,0.726) (0.16,0.744) (0.17,0.761) (0.18,0.778) (0.19,0.793)(0.20,0.808)(0.21,0.822)(0.22,0.836)(0.23,0.848)(0.24,0.860)(0.25,0.872)(0.26,0.882)(0.27,0.893)(0.28,0.902)(0.29,0.911)(0.30,0.920)(0.31,0.928)(0.32,0.936)(0.33,0.943)(0.34,0.950)(0.35,0.956)(0.36,0.962)(0.37,0.967)(0.38,0.972)(0.39,0.977)(0.40,0.981)(0.41,0.985)(0.42,0.988)(0.43,0.991)(0.44,0.994)(0.45,0.996)(0.46,0.997)(0.47,0.9983)(0.48,0.9993)(0.49,0.9997)(0.5,1)					
	};	
	\addlegendentry{$\mathcal{R}_{\varepsilon}^n(\bar \W_{\gamma^\star_{\alpha,\beta}}) \subseteq \mathcal{R}_{\varepsilon}^n(\bar \V_{\gamma^\star_{\alpha,\beta}})$}
		\addplot[red,thick,smooth] coordinates {
(0.,0.) (0.01,0.0396) (0.02,0.0784) (0.03,0.1164) (0.04,0.1536) (0.05,0.19) (0.06,0.2256) (0.07,0.2604) (0.08,0.2944) (0.09,0.3276) (0.1,0.36) (0.11,0.3916) (0.12,0.4224) (0.13,0.4524) (0.14,0.4816) (0.15,0.51) (0.16,0.5376) (0.17,0.5644) (0.18,0.5904) (0.19,0.6156) (0.2,0.64) (0.21,0.6636) (0.22,0.6864) (0.23,0.7084) (0.24,0.7296) (0.25,0.75) (0.26,0.7696) (0.27,0.7884) (0.28,0.8064) (0.29,0.8236) (0.3,0.84) (0.31,0.8556) (0.32,0.8704) (0.33,0.8844) (0.34,0.8976) (0.35,0.91) (0.36,0.9216) (0.37,0.9324) (0.38,0.9424) (0.39,0.9516) (0.4,0.96) (0.41,0.9676) (0.42,0.9744) (0.43,0.9804) (0.44,0.9856) (0.45,0.99) (0.46,0.9936) (0.47,0.9964) (0.48,0.9984) (0.49,0.9996) (0.5,1.)					
	};
		\addlegendentry{$C_s=0$ i.e.\ $\beta\leq4 \alpha (1-\alpha)$}	

	\end{axis}  
\node[red] at (5,4.0) {$C_s = 0$};
\node[ForestGreen] at (1.9,6.05) {{\footnotesize $\mathcal{R}_{\varepsilon}^n(\bar \W_{\gamma^\star_{\alpha,\beta}}) \subseteq \mathcal{R}_{\varepsilon}^n(\bar \V_{\gamma^\star_{\alpha,\beta}})$ }};
\end{tikzpicture} }
    \subfloat[$\BEC(\beta)-\BSC(\alpha)$ wiretap channel]{  \begin{tikzpicture}
	\begin{axis}[
		height=8cm,
		width=8cm,
		xlabel=$\alpha$,
		ylabel=$\beta$,
		xmin=0,
		xmax=0.5,
		ymax=1.0,
		ymin=0,
		legend style={at={(0.58,0.28)},anchor=north,legend cell align=left,font=\scriptsize} 
	]


	\addplot[draw=none,fill=YellowOrange!7,smooth,forget plot] coordinates {
(0.,0) (0.01,0.0807931) (0.02,0.141441) (0.03,0.194392) (0.04,0.242292) (0.05,0.286397) (0.06,0.327445) (0.07,0.365924) (0.08,0.402179) (0.09,0.43647) (0.1,0.468996) (0.11,0.499916) (0.12,0.529361) (0.13,0.557438) (0.14,0.584239) (0.15,0.60984) (0.16,0.63431) (0.17,0.657705) (0.18,0.680077) (0.19,0.701471) (0.2,0.721928) (0.21,0.741483) (0.22,0.760168) (0.23,0.778011) (0.24,0.79504) (0.25,0.811278) (0.26,0.826746) (0.27,0.841465) (0.28,0.855451) (0.29,0.868721) (0.3,0.881291) (0.31,0.893173) (0.32,0.904381) (0.33,0.914926) (0.34,0.924819) (0.35,0.934068) (0.36,0.942683) (0.37,0.950672) (0.38,0.958042) (0.39,0.9648) (0.4,0.970951) (0.41,0.9765) (0.42,0.981454) (0.43,0.985815) (0.44,0.989588) (0.45,0.992774) (0.46,0.995378) (0.47,0.997402) (0.48,0.998846) (0.49,0.999711) (0.5,1.) (0,1) (0,0)	
	};

\addplot [draw=none,smooth,fill=blue!05,forget plot]coordinates {
(0.,0.) (0.01,0.0396) (0.02,0.0784) (0.03,0.1164) (0.04,0.1536) (0.05,0.19) (0.06,0.2256) (0.07,0.2604) (0.08,0.2944) (0.09,0.3276) (0.1,0.36) (0.11,0.3916) (0.12,0.4224) (0.13,0.4524) (0.14,0.4816) (0.15,0.51) (0.16,0.5376) (0.17,0.5644) (0.18,0.5904) (0.19,0.6156) (0.2,0.64) (0.21,0.6636) (0.22,0.6864) (0.23,0.7084) (0.24,0.7296) (0.25,0.75) (0.26,0.7696) (0.27,0.7884) (0.28,0.8064) (0.29,0.8236) (0.3,0.84) (0.31,0.8556) (0.32,0.8704) (0.33,0.8844) (0.34,0.8976) (0.35,0.91) (0.36,0.9216) (0.37,0.9324) (0.38,0.9424) (0.39,0.9516) (0.4,0.96) (0.41,0.9676) (0.42,0.9744) (0.43,0.9804) (0.44,0.9856) (0.45,0.99) (0.46,0.9936) (0.47,0.9964) (0.48,0.9984) (0.49,0.9996) (0.5,1)	(0.5,0) (0,0)				
	};	

	\addplot[draw=none,fill=gray!20,smooth,forget plot] coordinates {
(0.,0) (0.01,0.0807931) (0.02,0.141441) (0.03,0.194392) (0.04,0.242292) (0.05,0.286397) (0.06,0.327445) (0.07,0.365924) (0.08,0.402179) (0.09,0.43647) (0.1,0.468996) (0.11,0.499916) (0.12,0.529361) (0.13,0.557438) (0.14,0.584239) (0.15,0.60984) (0.16,0.63431) (0.17,0.657705) (0.18,0.680077) (0.19,0.701471) (0.2,0.721928) (0.21,0.741483) (0.22,0.760168) (0.23,0.778011) (0.24,0.79504) (0.25,0.811278) (0.26,0.826746) (0.27,0.841465) (0.28,0.855451) (0.29,0.868721) (0.3,0.881291) (0.31,0.893173) (0.32,0.904381) (0.33,0.914926) (0.34,0.924819) (0.35,0.934068) (0.36,0.942683) (0.37,0.950672) (0.38,0.958042) (0.39,0.9648) (0.4,0.970951) (0.41,0.9765) (0.42,0.981454) (0.43,0.985815) (0.44,0.989588) (0.45,0.992774) (0.46,0.995378) (0.47,0.997402) (0.48,0.998846) (0.49,0.999711) (0.5,1.)	
(0.49,0.9996) (0.48,0.9984)(0.47,0.9964)(0.46,0.9936)(0.45,0.99) (0.44,0.9856)(0.43,0.9804)(0.42,0.9744) (0.41,0.9676)	(0.4,0.96) (0.39,0.9516)(0.38,0.9424)(0.37,0.9324) (0.36,0.9216)	(0.35,0.91)	(0.34,0.8976)(0.33,0.8844)(0.32,0.8704)(0.31,0.8556)  (0.3,0.84) (0.29,0.8236) (0.28,0.8064)(0.27,0.7884)(0.26,0.7696)(0.25,0.75)(0.24,0.7296)(0.23,0.7084)(0.22,0.6864) (0.21,0.6636) (0.2,0.64) (0.19,0.6156) 	(0.18,0.5904)	(0.17,0.5644) (0.16,0.5376) (0.15,0.51)(0.14,0.4816)(0.13,0.4524) (0.12,0.4224)(0.11,0.3916)(0.1,0.36)	(0.09,0.3276)	(0.08,0.2944)	(0.07,0.2604)(0.06,0.2256) (0.05,0.19)(0.04,0.1536)(0.03,0.1164)(0.02,0.0784)(0.01,0.0396) (0.,0.) 
	};

	\addplot[YellowOrange,thick,smooth] coordinates {
(0.,0) (0.01,0.0807931) (0.02,0.141441) (0.03,0.194392) (0.04,0.242292) (0.05,0.286397) (0.06,0.327445) (0.07,0.365924) (0.08,0.402179) (0.09,0.43647) (0.1,0.468996) (0.11,0.499916) (0.12,0.529361) (0.13,0.557438) (0.14,0.584239) (0.15,0.60984) (0.16,0.63431) (0.17,0.657705) (0.18,0.680077) (0.19,0.701471) (0.2,0.721928) (0.21,0.741483) (0.22,0.760168) (0.23,0.778011) (0.24,0.79504) (0.25,0.811278) (0.26,0.826746) (0.27,0.841465) (0.28,0.855451) (0.29,0.868721) (0.3,0.881291) (0.31,0.893173) (0.32,0.904381) (0.33,0.914926) (0.34,0.924819) (0.35,0.934068) (0.36,0.942683) (0.37,0.950672) (0.38,0.958042) (0.39,0.9648) (0.4,0.970951) (0.41,0.9765) (0.42,0.981454) (0.43,0.985815) (0.44,0.989588) (0.45,0.992774) (0.46,0.995378) (0.47,0.997402) (0.48,0.998846) (0.49,0.999711) (0.5,1.)											
	};
	\addlegendentry{\!$|\mathcal{D}_{\varepsilon}^n(\bar \W_{\gamma^\star_{\alpha,\beta}}) \!\cap\! \mathcal{R}_{\varepsilon}^n(\bar \V_{\gamma^\star_{\alpha,\beta}})| \!=\! \Omega(n)$\!\!\!\!}


	\addplot[gray,thick,smooth] coordinates {
(0.,0) (0.01,0.0807931) (0.02,0.141441) (0.03,0.194392) (0.04,0.242292) (0.05,0.286397) (0.06,0.327445) (0.07,0.365924) (0.08,0.402179) (0.09,0.43647) (0.1,0.468996) (0.11,0.499916) (0.12,0.529361) (0.13,0.557438) (0.14,0.584239) (0.15,0.60984) (0.16,0.63431) (0.17,0.657705) (0.18,0.680077) (0.19,0.701471) (0.2,0.721928) (0.21,0.741483) (0.22,0.760168) (0.23,0.778011) (0.24,0.79504) (0.25,0.811278) (0.26,0.826746) (0.27,0.841465) (0.28,0.855451) (0.29,0.868721) (0.3,0.881291) (0.31,0.893173) (0.32,0.904381) (0.33,0.914926) (0.34,0.924819) (0.35,0.934068) (0.36,0.942683) (0.37,0.950672) (0.38,0.958042) (0.39,0.9648) (0.4,0.970951) (0.41,0.9765) (0.42,0.981454) (0.43,0.985815) (0.44,0.989588) (0.45,0.992774) (0.46,0.995378) (0.47,0.997402) (0.48,0.998846) (0.49,0.999711) (0.5,1.)											
	};
	\addlegendentry{$\mathcal{D}_{\varepsilon}^n(\W)\!\!$ $\scriptsize{\overset{\boldsymbol{\cdot}}{\subseteq}}$ $\!\!\mathcal{D}_{\varepsilon}^n(\V)$, $\mathcal{R}_{\varepsilon}^n(\W)\!$ $\scriptsize{\overset{\boldsymbol{\cdot}}{\supseteq}}$ $\!\mathcal{R}_{\varepsilon}^n(\V)$\!\!\!}

		\addplot[blue,thick,smooth] coordinates {
(0.,0.) (0.01,0.0396) (0.02,0.0784) (0.03,0.1164) (0.04,0.1536) (0.05,0.19) (0.06,0.2256) (0.07,0.2604) (0.08,0.2944) (0.09,0.3276) (0.1,0.36) (0.11,0.3916) (0.12,0.4224) (0.13,0.4524) (0.14,0.4816) (0.15,0.51) (0.16,0.5376) (0.17,0.5644) (0.18,0.5904) (0.19,0.6156) (0.2,0.64) (0.21,0.6636) (0.22,0.6864) (0.23,0.7084) (0.24,0.7296) (0.25,0.75) (0.26,0.7696) (0.27,0.7884) (0.28,0.8064) (0.29,0.8236) (0.3,0.84) (0.31,0.8556) (0.32,0.8704) (0.33,0.8844) (0.34,0.8976) (0.35,0.91) (0.36,0.9216) (0.37,0.9324) (0.38,0.9424) (0.39,0.9516) (0.4,0.96) (0.41,0.9676) (0.42,0.9744) (0.43,0.9804) (0.44,0.9856) (0.45,0.99) (0.46,0.9936) (0.47,0.9964) (0.48,0.9984) (0.49,0.9996) (0.5,1.)					
	};
		\addlegendentry{\!$\mathcal{D}_{\varepsilon}^n(\W)\!$ $\scriptsize{\overset{\boldsymbol{\cdot}}{\subseteq}}$\! $\mathcal{D}_{\varepsilon}^n(\V)$, $\mathcal{R}_{\varepsilon}^n(\W)\!$ $\scriptsize{\overset{\boldsymbol{\cdot}}{\supseteq}}$ $\! \mathcal{R}_{\varepsilon}^n(\V)$\!\!\!}	

	\end{axis}  
\node[blue] at (5,4.0) {\footnotesize$\mathcal{D}_{\varepsilon}^n(\W)$ $\scriptsize{\overset{\boldsymbol{\cdot}}{\subseteq}}$ $\mathcal{D}_{\varepsilon}^n(\V)$};
\node[blue] at (5,3.2) {\footnotesize $\mathcal{R}_{\varepsilon}^n(\W)$ $\scriptsize{\overset{\boldsymbol{\cdot}}{\supseteq}}$ $\mathcal{R}_{\varepsilon}^n(\V)$};
\node[rotate=54,gray,scale=0.9] at (1.85,3.53) {\footnotesize$\mathcal{D}_{\varepsilon}^n(\W)$ $\scriptsize{\overset{\boldsymbol{\cdot}}{\subseteq}}$ $\mathcal{D}_{\varepsilon}^n(\V)$ };
\node[YellowOrange] at (1.99,5.95) {\footnotesize $|\mathcal{D}_{\varepsilon}^n(\bar \W_{\gamma^\star_{\alpha,\beta}}) \cap \mathcal{R}_{\varepsilon}^n(\bar \V_{\gamma^\star_{\alpha,\beta}})| $};
\node[YellowOrange] at (1.5,5.4) {\footnotesize $= \Omega(n)$};

\end{tikzpicture}}\\
    \caption{Alignment for wiretap channels involving a BEC and a BSC. The left plot considers a $\BSC(\alpha)-\BEC(\beta)$ wiretap channel, the right $\BEC(\beta)-\BSC(\alpha)$. For channels in the green region on the left, the coding scheme described in the text can achieve the secrecy capacity $C_s$ and requires essentially no key-assistance, i.e.\  $|\mathcal{K}_{\varepsilon}^n(\bar \W_{\gamma^\star_{\alpha,\beta}},\bar \V_{\gamma^\star_{\alpha,\beta}})|=o(n)$. The region is determined using Theorem~\ref{thm:UB} for level $0$. As much holds for the blue and gray regions on the right, but now the conclusion follows from the fact that in the blue (gray) region $\V$ is less noisy (more capable) than $\W$. The boundaries of these regions are given by $4\alpha (1-\alpha) < \beta \leq \Hb(\alpha)$ and  $\beta \leq 4 \alpha (1-\alpha)$, respectively. $\V$ is more capable but not less noisy than $\W$, i.e., $4\alpha (1-\alpha) < \beta \leq \Hb(\alpha)$.  For $\beta > \Hb(\alpha)$ the conditions of Theorem~\ref{thm:LB} evaluated at level $3$ can be used to infer that key-assistance is required for channels in the orange region, as $|\overline{\mathcal{R}_{\varepsilon}^n(\bar \W_{\gamma^\star_{\alpha,\beta}})} \cap \overline{\mathcal{D}_{\varepsilon}^n(\bar \V_{\gamma^\star_{\alpha,\beta}})}| = \Omega(n)$.
}
    \label{fig:WTC}
\end{figure}
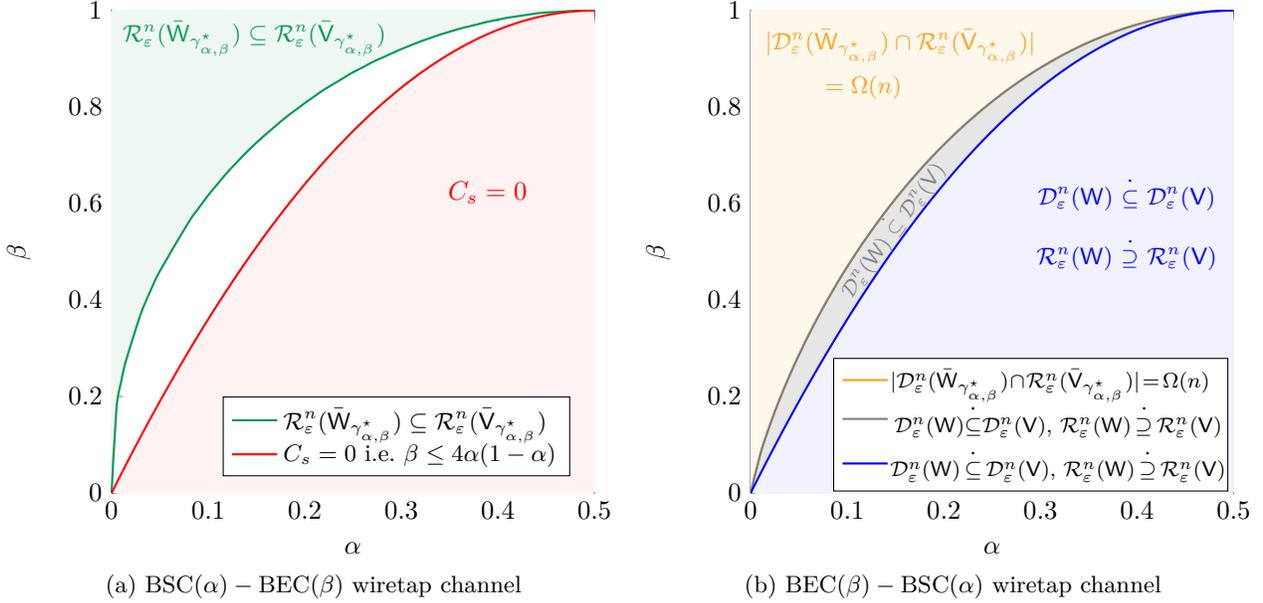

\begin{myremark} 
Constructing a polar coding scheme for wiretap channels with $|\mathcal K_{\varepsilon}^n(\W,\V)|=o(n)$ would nominally require preshared key for inputs associated with this set, in order to ensure strong security of the messages from the eavesdropper. However, \cite{sasoglu13} presents a bootstrapping technique whereby the coding procedure is repeated and key required in the current round is generated in previous rounds, without affecting the security statement. 

A different approach, which avoids bootstrapping altogether, is to construct a coding scheme for the wiretap channel by modifying a polar code for transmitting quantum information. The idea, sketched in \cite{renes12}, is that the resulting polar code for the wiretap channel could be thought of as virtually implementing the quantum code, and thus inherits all its coding and security properties. The results of Section~\ref{sec:entanglement} can be used to infer that absolutely no preshared key is required for the quantum code, and therefore the same holds for the classical wiretap code. A more detailed description of this construction will be presented elsewhere, but let us comment on the difference between the two approaches. The present construction builds a polar code for the wiretap channel purely in the classical domain, only resorting to quantum arguments (the uncertainty relation of Proposition~\ref{prop:UR}) to find channel pairs for which the bootstrapping coding scheme of \cite{sasoglu13} can be applied. In contrast, the argument via quantum coding constructs the code for the classical wiretap channel directly from the quantum code, avoiding the uncertainty relation in this form. 
%
\end{myremark}

\subsection{BSC/BEC broadcast channel}
In this section we consider a broadcast channel consisting of a $\BSC(\alpha)$ and a $\BEC(\beta)$ for $\alpha \in [0,\tfrac{1}{2}]$ and $\beta \in [0,1]$, where a sender wants to transmit two messages to two receivers (i.e., there is no common message) a setup that is explained in detail in \cite{nair10}. It has been shown recently that superposition coding is optimal in all regimes \cite{nair10} and hence the inner bound described by Theorem~\ref{thm:superposition} coincides with the capacity region. Furthermore it has been shown that the optimal preprocessing is a $\BSC(s)$ with $s \in [0,\tfrac{1}{2}]$, i.e., $P_{X|U}= \BSC(\gamma)$ for $\gamma \in[0,\tfrac{1}{2}$] and that the optimal input distribution is uniform, i.e., $\Prob{U=0}=\Prob{U=1}=\tfrac{1}{2}$ \cite{nair10}. To simplify notation, we define $\bar \W_\gamma := \BSC(\alpha) \circ \BSC(\gamma)$ and $\bar \V_{\gamma}:=\BEC(\beta) \circ \BSC(\gamma)$. If the polarized sets for the channels $\bar \W_\gamma$ respectively $\bar \V_{\gamma}$ would be aligned (as it happens e.g., if the two channels $\bar \W_\gamma$ and $\bar \V_\gamma$ are less noisy ordered), the protocol introduced in  \cite{goela13} can be used to achieve the capacity region. 

For $0 \leq \beta \leq 4 \alpha (1-\alpha)$ $\V$ is less noisy than $\W$ and thus for all $\gamma \in [0,\tfrac{1}{2}]$ the channel $\bar \V_{\gamma}$ is less noisy than $\bar \W_{\gamma}$. Proposition~\ref{prop:LN} then implies that for $0 \leq \beta \leq 4 \alpha (1-\alpha)$ the polarized sets are  essentially aligned, i.e., $\mathcal{R}_{\varepsilon}^n(\bar \W_{\gamma})$ $\scriptsize{\overset{\boldsymbol{\cdot}}{\supseteq}}$ $\mathcal{R}_{\varepsilon}^n(\bar \V_{\gamma})$ and $\mathcal{D}_{\varepsilon}^n(\bar \W_{\gamma})$ $\scriptsize{\overset{\boldsymbol{\cdot}}{\subseteq}}$ $\mathcal{D}_{\varepsilon}^n(\bar \V_{\gamma})$. As a result, the protocol explained in \cite{goela13} can be used to achieve the capacity region in  this regime.

For pairs $(\alpha,\beta)$ such that $I(\bar \V_{\gamma})\leq I(\bar \W_{\gamma})$, the conditions given in Theorems~\ref{thm:LB} and \ref{thm:UB} can be used to determine regions where $|\mathcal{R}_{\varepsilon}^n(\bar \W_{\gamma}) \cap \mathcal{D}_{\varepsilon}^n(\bar \V_{\gamma})| =\Omega(n)$ and where $\mathcal{R}_{\varepsilon}^n(\bar \W_{\gamma}) \subseteq \mathcal{R}_{\varepsilon}^n(\bar \V_{\gamma})$. Whenever encountering a setup of $(\alpha,\beta)$ where $\mathcal{R}_{\varepsilon}^n(\bar \W_{\gamma}) \subseteq \mathcal{R}_{\varepsilon}^n(\bar \V_{\gamma})$, i.e., the polarized sets are aligned one can use the protocol introduced in \cite{goela13} to achieve the capacity region.
Figure~\ref{fig:BCoverview} shows an overview about the alignment characteristics for the polarized set of the channels $\bar \W_{\gamma}$ and $\bar \V_{\gamma}$ for different values of $\gamma$.


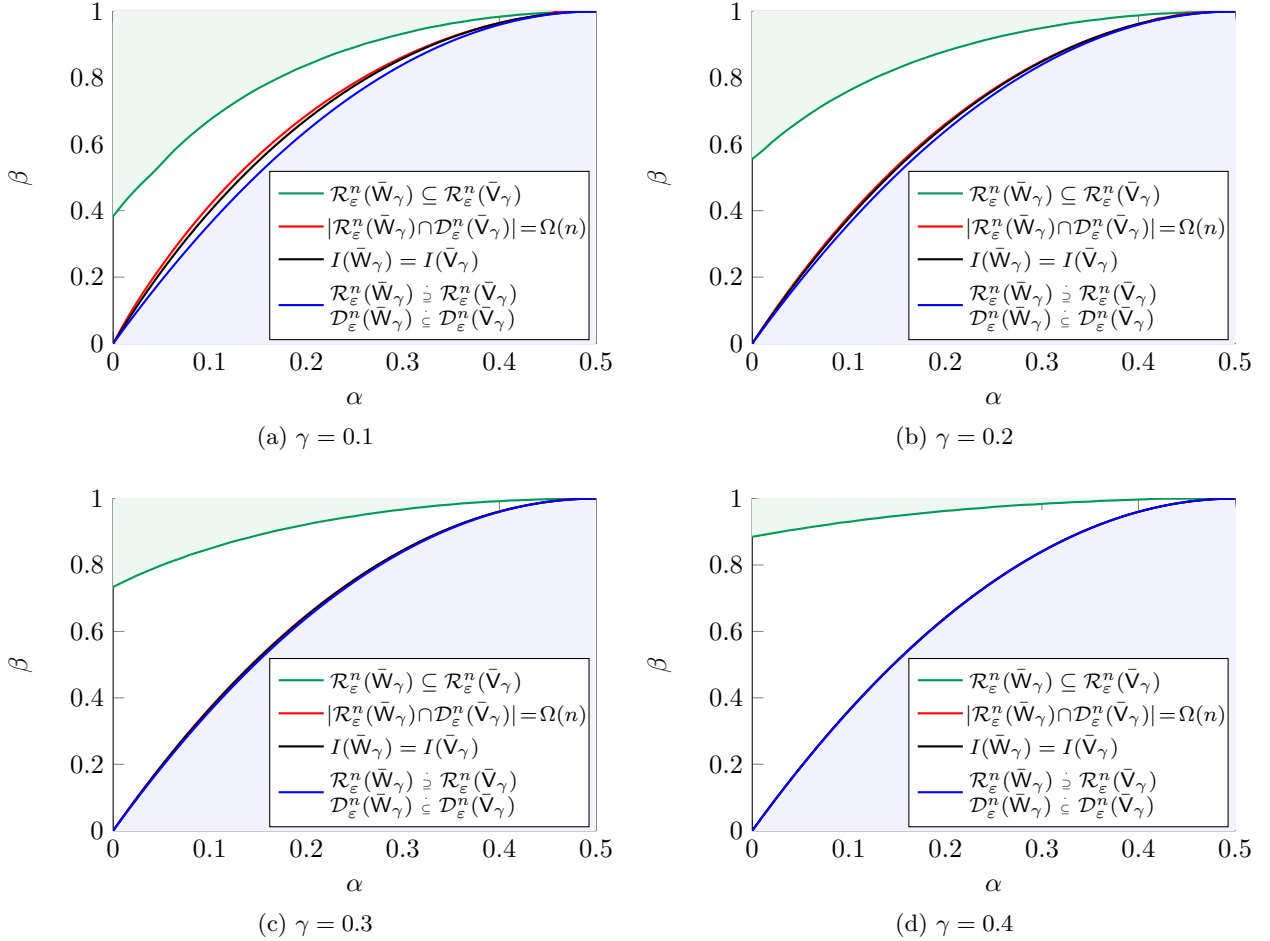
\begin{figure}[!htb]
\hspace{-10mm}
    \subfloat[$\gamma=0.1$]{   \begin{tikzpicture}
	\begin{axis}[
		height=6cm,
		width=8cm,
		xlabel=$\alpha$,
		ylabel=$\beta$,
		xmin=0,
		xmax=0.5,
		ymax=1.0,
		ymin=0,
		legend style={at={(0.655,0.52)},anchor=north,legend cell align=left,font=\scriptsize} 
	]

\addplot [draw=none,smooth,fill=ForestGreen!05,forget plot]coordinates {
(0.,0.383) (0.01,0.420) (0.02,0.453) (0.03,0.483) (0.04,0.511) (0.05,0.541) (0.06,0.572) (0.07,0.600) (0.08,0.626) (0.09,0.650) (0.1,0.673) (0.11,0.694) (0.12,0.714) (0.13,0.733) (0.14,0.751) (0.15,0.768) (0.16,0.783) (0.17,0.798) (0.18,0.812) (0.19,0.826) (0.2,0.838) (0.21,0.850) (0.22,0.862) (0.23,0.873) (0.24,0.883) (0.25,0.892) (0.26,0.902) (0.27,0.910) (0.28,0.918) (0.29,0.926) (0.3,0.933) (0.31,0.940) (0.32,0.947) (0.33,0.953) (0.34,0.958) (0.35,0.964) (0.36,0.968) (0.37,0.973) (0.38,0.977) (0.39,0.981) (0.4,0.984) (0.41,0.987) (0.42,0.990) (0.43,0.9922) (0.44,0.9943) (0.45,0.9960) (0.46,0.9975) (0.47,0.9986) (0.48,0.9995) (0.49,1) (0.5,1.)	(0,1) (0.,0.383)
};

		\addplot[draw=none,smooth,fill=red!10,forget plot] coordinates {
(0.,0)(0.01,0.0529982)(0.02,0.102378)(0.03,0.148514)(0.04,0.191781)(0.05,0.232724)(0.06,0.272772)(0.07,0.311078)(0.08,0.347749)(0.09,0.382883)(0.1,0.416564)(0.11,0.448867)(0.12,0.479862)(0.13,0.509608)(0.14,0.538161)(0.15,0.56557)(0.16,0.591881)(0.17,0.617135)(0.18,0.64137)(0.19,0.664621)(0.2,0.686919)(0.21,0.708295)(0.22,0.728776)(0.23,0.748387)(0.24,0.767152)(0.25,0.785093)(0.26,0.80223)(0.27,0.818582)(0.28,0.834165)(0.29,0.848996)(0.3,0.863089)(0.31,0.876456)(0.32,0.889109)(0.33,0.901058)(0.34,0.912312)(0.35,0.922877)(0.36,0.932759)(0.37,0.941964)(0.38,0.950494)(0.39,0.95835)(0.4,0.965534)(0.41,0.972046)(0.42,0.977884)(0.43,0.982154)(0.44,0.987529)(0.45,0.99133)(0.46,0.999999)(0.47,0.999999)(0.48,0.999999)(0.49,0.999652)(0.5,1.)			
(0.5,1)(0.49,0.999652)(0.48,0.998609)(0.47,0.996869)(0.46,0.994432)(0.45,0.991297)(0.44,0.987461)(0.43,0.982924)(0.42,0.977682) (0.41,0.971733)(0.4,0.965074)(0.39,0.9577) (0.38,0.949609)(0.37,0.940796)(0.36,0.931256)	(0.35,0.920984)(0.34,0.909973) (0.33,0.898218)(0.32,0.885712)(0.31,0.872448)(0.3,0.858417)(0.29,0.843611) (0.28,0.828021) (0.27,0.811637) (0.26,0.794449)(0.25,0.776444) (0.24,0.757612) (0.23,0.737938) (0.22,0.717409) (0.21,0.69601) (0.2,0.673725)(0.19,0.650535)	(0.18,0.626423) 	(0.17,0.601368)(0.16,0.575349) (0.15,0.548342)(0.14,0.520321)(0.13,0.491259)	(0.12,0.461126)(0.11,0.42989)(0.1,0.397514)(0.09,0.363959)(0.08,0.329184)(0.07,0.293139)	(0.06,0.255774)(0.05,0.217029)(0.04,0.176839)	(0.03,0.13513)	(0.02,0.0918204)(0.01,0.0468134)(0.,0)						
	};

\addplot [draw=none,smooth,fill=blue!05,forget plot]coordinates {
(0.,0.) (0.01,0.0396) (0.02,0.0784) (0.03,0.1164) (0.04,0.1536) (0.05,0.19) (0.06,0.2256) (0.07,0.2604) (0.08,0.2944) (0.09,0.3276) (0.1,0.36) (0.11,0.3916) (0.12,0.4224) (0.13,0.4524) (0.14,0.4816) (0.15,0.51) (0.16,0.5376) (0.17,0.5644) (0.18,0.5904) (0.19,0.6156) (0.2,0.64) (0.21,0.6636) (0.22,0.6864) (0.23,0.7084) (0.24,0.7296) (0.25,0.75) (0.26,0.7696) (0.27,0.7884) (0.28,0.8064) (0.29,0.8236) (0.3,0.84) (0.31,0.8556) (0.32,0.8704) (0.33,0.8844) (0.34,0.8976) (0.35,0.91) (0.36,0.9216) (0.37,0.9324) (0.38,0.9424) (0.39,0.9516) (0.4,0.96) (0.41,0.9676) (0.42,0.9744) (0.43,0.9804) (0.44,0.9856) (0.45,0.99) (0.46,0.9936) (0.47,0.9964) (0.48,0.9984) (0.49,0.9996) (0.5,1.) (0.5,0) (0,0)	
};


		\addplot[ForestGreen,thick,smooth] coordinates {
(0.,0.383) (0.01,0.420) (0.02,0.453) (0.03,0.483) (0.04,0.511) (0.05,0.541) (0.06,0.572) (0.07,0.600) (0.08,0.626) (0.09,0.650) (0.1,0.673) (0.11,0.694) (0.12,0.714) (0.13,0.733) (0.14,0.751) (0.15,0.768) (0.16,0.783) (0.17,0.798) (0.18,0.812) (0.19,0.826) (0.2,0.838) (0.21,0.850) (0.22,0.862) (0.23,0.873) (0.24,0.883) (0.25,0.892) (0.26,0.902) (0.27,0.910) (0.28,0.918) (0.29,0.926) (0.3,0.933) (0.31,0.940) (0.32,0.947) (0.33,0.953) (0.34,0.958) (0.35,0.964) (0.36,0.968) (0.37,0.973) (0.38,0.977) (0.39,0.981) (0.4,0.984) (0.41,0.987) (0.42,0.990) (0.43,0.9922) (0.44,0.9943) (0.45,0.9960) (0.46,0.9975) (0.47,0.9986) (0.48,0.9995) (0.49,1) (0.5,1.)									
	};
	\addlegendentry{$\mathcal{R}_{\varepsilon}^n(\bar \W_\gamma) \subseteq \mathcal{R}_{\varepsilon}^n(\bar \V_\gamma)$}



		\addplot[red,thick,smooth] coordinates {
(0.,0)(0.01,0.0529982)(0.02,0.102378)(0.03,0.148514)(0.04,0.191781)(0.05,0.232724)(0.06,0.272772)(0.07,0.311078)(0.08,0.347749)(0.09,0.382883)(0.1,0.416564)(0.11,0.448867)(0.12,0.479862)(0.13,0.509608)(0.14,0.538161)(0.15,0.56557)(0.16,0.591881)(0.17,0.617135)(0.18,0.64137)(0.19,0.664621)(0.2,0.686919)(0.21,0.708295)(0.22,0.728776)(0.23,0.748387)(0.24,0.767152)(0.25,0.785093)(0.26,0.80223)(0.27,0.818582)(0.28,0.834165)(0.29,0.848996)(0.3,0.863089)(0.31,0.876456)(0.32,0.889109)(0.33,0.901058)(0.34,0.912312)(0.35,0.922877)(0.36,0.932759)(0.37,0.941964)(0.38,0.950494)(0.39,0.95835)(0.4,0.965534)(0.41,0.972046)(0.42,0.977884)(0.43,0.982154)(0.44,0.987529)(0.45,0.99133)(0.46,0.999999)(0.47,0.999999)(0.48,0.999999)(0.49,0.999652)(0.5,1.)									
	};
	\addlegendentry{\!$|\mathcal{R}_{\varepsilon}^n(\bar \W_\gamma)\! \cap \! \mathcal{D}_{\varepsilon}^n(\bar \V_\gamma)|\! = \!\Omega(n)$\!\!\!}		
	


	\addplot[black,thick,smooth] coordinates {
(0.,0)(0.01,0.0468134)(0.02,0.0918204)(0.03,0.13513)(0.04,0.176839)(0.05,0.217029)(0.06,0.255774)(0.07,0.293139)(0.08,0.329184)(0.09,0.363959)(0.1,0.397514)(0.11,0.42989)(0.12,0.461126)(0.13,0.491259)(0.14,0.520321)(0.15,0.548342)(0.16,0.575349)(0.17,0.601368)(0.18,0.626423)(0.19,0.650535)(0.2,0.673725)(0.21,0.69601)(0.22,0.717409)(0.23,0.737938)(0.24,0.757612)(0.25,0.776444)(0.26,0.794449)(0.27,0.811637)(0.28,0.828021)(0.29,0.843611)(0.3,0.858417)(0.31,0.872448)(0.32,0.885712)(0.33,0.898218)(0.34,0.909973)(0.35,0.920984)(0.36,0.931256)(0.37,0.940796)(0.38,0.949609)(0.39,0.9577)(0.4,0.965074)(0.41,0.971733)(0.42,0.977682)(0.43,0.982924)(0.44,0.987461)(0.45,0.991297)(0.46,0.994432)(0.47,0.996869)(0.48,0.998609)(0.49,0.999652)(0.5,1)										
	};
	\addlegendentry{$I(\bar \W_{\gamma})=I(\bar \V_{\gamma})$}

		\addplot[blue,thick,smooth] coordinates {
(0.,0.) (0.01,0.0396) (0.02,0.0784) (0.03,0.1164) (0.04,0.1536) (0.05,0.19) (0.06,0.2256) (0.07,0.2604) (0.08,0.2944) (0.09,0.3276) (0.1,0.36) (0.11,0.3916) (0.12,0.4224) (0.13,0.4524) (0.14,0.4816) (0.15,0.51) (0.16,0.5376) (0.17,0.5644) (0.18,0.5904) (0.19,0.6156) (0.2,0.64) (0.21,0.6636) (0.22,0.6864) (0.23,0.7084) (0.24,0.7296) (0.25,0.75) (0.26,0.7696) (0.27,0.7884) (0.28,0.8064) (0.29,0.8236) (0.3,0.84) (0.31,0.8556) (0.32,0.8704) (0.33,0.8844) (0.34,0.8976) (0.35,0.91) (0.36,0.9216) (0.37,0.9324) (0.38,0.9424) (0.39,0.9516) (0.4,0.96) (0.41,0.9676) (0.42,0.9744) (0.43,0.9804) (0.44,0.9856) (0.45,0.99) (0.46,0.9936) (0.47,0.9964) (0.48,0.9984) (0.49,0.9996) (0.5,1.)															
	};
		\addlegendentry[align=left]{$\mathcal{R}_{\varepsilon}^n(\bar \W_\gamma)$ $ {\fontsize{3.8}{1}\selectfont \overset{\boldsymbol{\cdot}}{\supseteq}}$ $\mathcal{R}_{\varepsilon}^n(\bar \V_\gamma)$ \\
		 $\mathcal{D}_{\varepsilon}^n(\bar \W_\gamma)$ $ {\fontsize{3.8}{1}\selectfont \overset{\boldsymbol{\cdot}}{\subseteq}}$ $\mathcal{D}_{\varepsilon}^n(\bar \V_\gamma)$ }

	\end{axis} 	



\end{tikzpicture} }
    \subfloat[$\gamma=0.2$]{  \begin{tikzpicture}
	\begin{axis}[
		height=6cm,
		width=8cm,
		xlabel=$\alpha$,
		ylabel=$\beta$,
		xmin=0,
		xmax=0.5,
		ymax=1.0,
		ymin=0,
		legend style={at={(0.655,0.52)},anchor=north,legend cell align=left,font=\scriptsize} 
	]


\addplot [draw=none,smooth,fill=ForestGreen!05,forget plot]coordinates {
(0.,0.556) (0.01,0.579) (0.02,0.605) (0.03,0.629) (0.04,0.651) (0.05,0.672) (0.06,0.692) (0.07,0.711) (0.08,0.729) (0.09,0.745) (0.1,0.761) (0.11,0.776) (0.12,0.790) (0.13,0.804) (0.14,0.816) (0.15,0.828) (0.16,0.840) (0.17,0.851) (0.18,0.861) (0.19,0.871) (0.2,0.880) (0.21,0.889) (0.22,0.897) (0.23,0.905) (0.24,0.913) (0.25,0.920) (0.26,0.927) (0.27,0.933) (0.28,0.939) (0.29,0.945) (0.3,0.950) (0.31,0.955) (0.32,0.960) (0.33,0.965) (0.34,0.969) (0.35,0.973) (0.36,0.976) (0.37,0.980) (0.38,0.983) (0.39,0.986) (0.4,0.988) (0.41,0.991) (0.42,0.993) (0.43,0.9941) (0.44,0.9957) (0.45,0.9970) (0.46,0.9981) (0.47,0.9990) (0.48,0.9999) (0.49,1) (0.5,1.)	(0,1) (0.,0.556)
};

\addplot [draw=none,smooth,fill=blue!05,forget plot]coordinates {
(0.,0.) (0.01,0.0396) (0.02,0.0784) (0.03,0.1164) (0.04,0.1536) (0.05,0.19) (0.06,0.2256) (0.07,0.2604) (0.08,0.2944) (0.09,0.3276) (0.1,0.36) (0.11,0.3916) (0.12,0.4224) (0.13,0.4524) (0.14,0.4816) (0.15,0.51) (0.16,0.5376) (0.17,0.5644) (0.18,0.5904) (0.19,0.6156) (0.2,0.64) (0.21,0.6636) (0.22,0.6864) (0.23,0.7084) (0.24,0.7296) (0.25,0.75) (0.26,0.7696) (0.27,0.7884) (0.28,0.8064) (0.29,0.8236) (0.3,0.84) (0.31,0.8556) (0.32,0.8704) (0.33,0.8844) (0.34,0.8976) (0.35,0.91) (0.36,0.9216) (0.37,0.9324) (0.38,0.9424) (0.39,0.9516) (0.4,0.96) (0.41,0.9676) (0.42,0.9744) (0.43,0.9804) (0.44,0.9856) (0.45,0.99) (0.46,0.9936) (0.47,0.9964) (0.48,0.9984) (0.49,0.9996) (0.5,1.) (0.5,0) (0,0)	
};


		\addplot[ForestGreen,thick,smooth] coordinates {
(0.,0.556) (0.01,0.579) (0.02,0.605) (0.03,0.629) (0.04,0.651) (0.05,0.672) (0.06,0.692) (0.07,0.711) (0.08,0.729) (0.09,0.745) (0.1,0.761) (0.11,0.776) (0.12,0.790) (0.13,0.804) (0.14,0.816) (0.15,0.828) (0.16,0.840) (0.17,0.851) (0.18,0.861) (0.19,0.871) (0.2,0.880) (0.21,0.889) (0.22,0.897) (0.23,0.905) (0.24,0.913) (0.25,0.920) (0.26,0.927) (0.27,0.933) (0.28,0.939) (0.29,0.945) (0.3,0.950) (0.31,0.955) (0.32,0.960) (0.33,0.965) (0.34,0.969) (0.35,0.973) (0.36,0.976) (0.37,0.980) (0.38,0.983) (0.39,0.986) (0.4,0.988) (0.41,0.991) (0.42,0.993) (0.43,0.9941) (0.44,0.9957) (0.45,0.9970) (0.46,0.9981) (0.47,0.9990) (0.48,0.9999) (0.49,1) (0.5,1.)									
	};
	\addlegendentry{$\mathcal{R}_{\varepsilon}^n(\bar \W_\gamma) \subseteq \mathcal{R}_{\varepsilon}^n(\bar \V_\gamma)$}



		\addplot[red,smooth,thick] coordinates {
(0.,0)(0.01,0.0440092)(0.02,0.0866269)(0.03,0.127927)(0.04,0.167963)(0.05,0.206775)(0.06,0.2444)(0.07,0.280869)(0.08,0.316212)(0.09,0.350457)(0.1,0.383629)(0.11,0.415754)(0.12,0.446854)(0.13,0.476951)(0.14,0.506064)(0.15,0.534214)(0.16,0.561417)(0.17,0.587692)(0.18,0.613053)(0.19,0.637517)(0.2,0.661097)(0.21,0.683808)(0.22,0.70566)(0.23,0.726667)(0.24,0.746839)(0.25,0.766188)(0.26,0.784722)(0.27,0.802451)(0.28,0.819384)(0.29,0.835529)(0.3,0.850893)(0.31,0.865482)(0.32,0.879303)(0.33,0.892362)(0.34,0.904662)(0.35,0.916207)(0.36,0.927003)(0.37,0.93705)(0.38,0.946352)(0.39,0.954911)(0.4,0.962726)(0.41,0.969799)(0.42,0.9785257)(0.43,0.9825515)(0.44,0.9891258)(0.45,0.992999)(0.46,0.995999)(0.47,0.998999)(0.48,0.999999)(0.49,0.999999)(0.5,1.)									
	};
	\addlegendentry{\!$|\mathcal{R}_{\varepsilon}^n(\bar \W_\gamma)\! \cap \! \mathcal{D}_{\varepsilon}^n(\bar \V_\gamma)|\! = \!\Omega(n)$\!\!\!}

	\addplot[black,thick,smooth] coordinates {
(0.,0)(0.01,0.0425749)(0.02,0.0840078)(0.03,0.124323)(0.04,0.163542)(0.05,0.201686)(0.06,0.238776)(0.07,0.274829)(0.08,0.309865)(0.09,0.343899)(0.1,0.376947)(0.11,0.409024)(0.12,0.440144)(0.13,0.470322)(0.14,0.499568)(0.15,0.527896)(0.16,0.555317)(0.17,0.581842)(0.18,0.60748)(0.19,0.632242)(0.2,0.656137)(0.21,0.679174)(0.22,0.70136)(0.23,0.722705)(0.24,0.743214)(0.25,0.762896)(0.26,0.781757)(0.27,0.799803)(0.28,0.81704)(0.29,0.833475)(0.3,0.849111)(0.31,0.863955)(0.32,0.87801)(0.33,0.891282)(0.34,0.903775)(0.35,0.915491)(0.36,0.926436)(0.37,0.936611)(0.38,0.946021)(0.39,0.954668)(0.4,0.962555)(0.41,0.969683)(0.42,0.976056)(0.43,0.981674)(0.44,0.986541)(0.45,0.990656)(0.46,0.994021)(0.47,0.996637)(0.48,0.998506)(0.49,0.999626)(0.5,1.)									
	};
	\addlegendentry{$I(\bar \W_{\gamma})=I(\bar \V_{\gamma})$ }

		\addplot[blue,thick,smooth] coordinates {
(0.,0.) (0.01,0.0396) (0.02,0.0784) (0.03,0.1164) (0.04,0.1536) (0.05,0.19) (0.06,0.2256) (0.07,0.2604) (0.08,0.2944) (0.09,0.3276) (0.1,0.36) (0.11,0.3916) (0.12,0.4224) (0.13,0.4524) (0.14,0.4816) (0.15,0.51) (0.16,0.5376) (0.17,0.5644) (0.18,0.5904) (0.19,0.6156) (0.2,0.64) (0.21,0.6636) (0.22,0.6864) (0.23,0.7084) (0.24,0.7296) (0.25,0.75) (0.26,0.7696) (0.27,0.7884) (0.28,0.8064) (0.29,0.8236) (0.3,0.84) (0.31,0.8556) (0.32,0.8704) (0.33,0.8844) (0.34,0.8976) (0.35,0.91) (0.36,0.9216) (0.37,0.9324) (0.38,0.9424) (0.39,0.9516) (0.4,0.96) (0.41,0.9676) (0.42,0.9744) (0.43,0.9804) (0.44,0.9856) (0.45,0.99) (0.46,0.9936) (0.47,0.9964) (0.48,0.9984) (0.49,0.9996) (0.5,1.)																
	};
		\addlegendentry[align=left]{$\mathcal{R}_{\varepsilon}^n(\bar \W_\gamma)$ $ {\fontsize{3.8}{1}\selectfont \overset{\boldsymbol{\cdot}}{\supseteq}}$ $\mathcal{R}_{\varepsilon}^n(\bar \V_\gamma)$ \\
		 $\mathcal{D}_{\varepsilon}^n(\bar \W_\gamma)$ $ {\fontsize{3.8}{1}\selectfont \overset{\boldsymbol{\cdot}}{\subseteq}}$ $\mathcal{D}_{\varepsilon}^n(\bar \V_\gamma)$ }



	\end{axis} 

\end{tikzpicture}}\\
 \hspace{-10mm}   
     \subfloat[$\gamma=0.3$]{   \begin{tikzpicture}
	\begin{axis}[
		height=6cm,
		width=8cm,
		xlabel=$\alpha$,
		ylabel=$\beta$,
		xmin=0,
		xmax=0.5,
		ymax=1.0,
		ymin=0,
		legend style={at={(0.655,0.52)},anchor=north,legend cell align=left,font=\scriptsize} 
	]


\addplot [draw=none,smooth,fill=ForestGreen!05,forget plot]coordinates {
(0.,0.734) (0.01,0.748) (0.02,0.762) (0.03,0.775) (0.04,0.787) (0.05,0.799) (0.06,0.810) (0.07,0.820) (0.08,0.831) (0.09,0.840) (0.1,0.849) (0.11,0.858) (0.12,0.867) (0.13,0.875) (0.14,0.882) (0.15,0.890) (0.16,0.897) (0.17,0.904) (0.18,0.910) (0.19,0.916) (0.2,0.922) (0.21,0.928) (0.22,0.933) (0.23,0.938) (0.24,0.943) (0.25,0.947) (0.26,0.952) (0.27,0.956) (0.28,0.960) (0.29,0.964) (0.3,0.967) (0.31,0.971) (0.32,0.974) (0.33,0.977) (0.34,0.980) (0.35,0.982) (0.36,0.985) (0.37,0.987) (0.38,0.989) (0.39,0.991) (0.4,0.992) (0.41,0.994) (0.42,0.9949) (0.43,0.9961) (0.44,0.9972) (0.45,0.9980) (0.46,0.9996) (0.47,0.9999) (0.48,1) (0.49,1) (0.5,1.) (0,1) (0.,0.734)
};

\addplot [draw=none,smooth,fill=blue!05,forget plot]coordinates {
(0.,0.) (0.01,0.0396) (0.02,0.0784) (0.03,0.1164) (0.04,0.1536) (0.05,0.19) (0.06,0.2256) (0.07,0.2604) (0.08,0.2944) (0.09,0.3276) (0.1,0.36) (0.11,0.3916) (0.12,0.4224) (0.13,0.4524) (0.14,0.4816) (0.15,0.51) (0.16,0.5376) (0.17,0.5644) (0.18,0.5904) (0.19,0.6156) (0.2,0.64) (0.21,0.6636) (0.22,0.6864) (0.23,0.7084) (0.24,0.7296) (0.25,0.75) (0.26,0.7696) (0.27,0.7884) (0.28,0.8064) (0.29,0.8236) (0.3,0.84) (0.31,0.8556) (0.32,0.8704) (0.33,0.8844) (0.34,0.8976) (0.35,0.91) (0.36,0.9216) (0.37,0.9324) (0.38,0.9424) (0.39,0.9516) (0.4,0.96) (0.41,0.9676) (0.42,0.9744) (0.43,0.9804) (0.44,0.9856) (0.45,0.99) (0.46,0.9936) (0.47,0.9964) (0.48,0.9984) (0.49,0.9996) (0.5,1.) (0.5,0) (0,0)	
};


		\addplot[ForestGreen,thick,smooth] coordinates {
(0.,0.734) (0.01,0.748) (0.02,0.762) (0.03,0.775) (0.04,0.787) (0.05,0.799) (0.06,0.810) (0.07,0.820) (0.08,0.831) (0.09,0.840) (0.1,0.849) (0.11,0.858) (0.12,0.867) (0.13,0.875) (0.14,0.882) (0.15,0.890) (0.16,0.897) (0.17,0.904) (0.18,0.910) (0.19,0.916) (0.2,0.922) (0.21,0.928) (0.22,0.933) (0.23,0.938) (0.24,0.943) (0.25,0.947) (0.26,0.952) (0.27,0.956) (0.28,0.960) (0.29,0.964) (0.3,0.967) (0.31,0.971) (0.32,0.974) (0.33,0.977) (0.34,0.980) (0.35,0.982) (0.36,0.985) (0.37,0.987) (0.38,0.989) (0.39,0.991) (0.4,0.992) (0.41,0.994) (0.42,0.9949) (0.43,0.9961) (0.44,0.9972) (0.45,0.9980) (0.46,0.9996) (0.47,0.9999) (0.48,1) (0.49,1) (0.5,1.)									
	};
	\addlegendentry{$\mathcal{R}_{\varepsilon}^n(\bar \W_\gamma) \subseteq \mathcal{R}_{\varepsilon}^n(\bar \V_\gamma)$}



		\addplot[red,smooth,thick] coordinates {
(0.,0)(0.01,0.0411076)(0.02,0.0812429)(0.03,0.120417)(0.04,0.158641)(0.05,0.195924)(0.06,0.232276)(0.07,0.267706)(0.08,0.302222)(0.09,0.335833)(0.1,0.368547)(0.11,0.400372)(0.12,0.431315)(0.13,0.461383)(0.14,0.490583)(0.15,0.518921)(0.16,0.546404)(0.17,0.573038)(0.18,0.598827)(0.19,0.623779)(0.2,0.647897)(0.21,0.671187)(0.22,0.693654)(0.23,0.715301)(0.24,0.736134)(0.25,0.756155)(0.26,0.775369)(0.27,0.793779)(0.28,0.811389)(0.29,0.828201)(0.3,0.844217)(0.31,0.859442)(0.32,0.873876)(0.33,0.887522)(0.34,0.900382)(0.35,0.912458)(0.36,0.92375)(0.37,0.93426)(0.38,0.943989)(0.39,0.952938)(0.4,0.961107)(0.41,0.968498)(0.42,0.975093)(0.43,0.980934)(0.44,0.985994)(0.45,0.990275)(0.46,0.993777)(0.47,0.9965)(0.48,0.998444)(0.49,0.999611)(0.5,1.)		};
	\addlegendentry{\!$|\mathcal{R}_{\varepsilon}^n(\bar \W_\gamma)\! \cap \! \mathcal{D}_{\varepsilon}^n(\bar \V_\gamma)|\! = \!\Omega(n)$\!\!\!}


	\addplot[black,thick,smooth] coordinates {
(0.,0)(0.01,0.0407277)(0.02,0.0805363)(0.03,0.119433)(0.04,0.157423)(0.05,0.194514)(0.06,0.230711)(0.07,0.26602)(0.08,0.300447)(0.09,0.333998)(0.1,0.366676)(0.11,0.398488)(0.12,0.429439)(0.13,0.459532)(0.14,0.488773)(0.15,0.517166)(0.16,0.544715)(0.17,0.571423)(0.18,0.597296)(0.19,0.622336)(0.2,0.646548)(0.21,0.669934)(0.22,0.692498)(0.23,0.714243)(0.24,0.735173)(0.25,0.755289)(0.26,0.774595)(0.27,0.793094)(0.28,0.810787)(0.29,0.827678)(0.3,0.843768)(0.31,0.859061)(0.32,0.873557)(0.33,0.887258)(0.34,0.900167)(0.35,0.912286)(0.36,0.923615)(0.37,0.934157)(0.38,0.943912)(0.39,0.952882)(0.4,0.961068)(0.41,0.968472)(0.42,0.975093)(0.43,0.980934)(0.44,0.985994)(0.45,0.990275)(0.46,0.993777)(0.47,0.9965)(0.48,0.998444)(0.49,0.999611)(0.5,1.)								
	};
	\addlegendentry{$I(\bar \W_{\gamma})=I(\bar \V_{\gamma})$}

		\addplot[blue,thick,smooth] coordinates {
(0.,0.) (0.01,0.0396) (0.02,0.0784) (0.03,0.1164) (0.04,0.1536) (0.05,0.19) (0.06,0.2256) (0.07,0.2604) (0.08,0.2944) (0.09,0.3276) (0.1,0.36) (0.11,0.3916) (0.12,0.4224) (0.13,0.4524) (0.14,0.4816) (0.15,0.51) (0.16,0.5376) (0.17,0.5644) (0.18,0.5904) (0.19,0.6156) (0.2,0.64) (0.21,0.6636) (0.22,0.6864) (0.23,0.7084) (0.24,0.7296) (0.25,0.75) (0.26,0.7696) (0.27,0.7884) (0.28,0.8064) (0.29,0.8236) (0.3,0.84) (0.31,0.8556) (0.32,0.8704) (0.33,0.8844) (0.34,0.8976) (0.35,0.91) (0.36,0.9216) (0.37,0.9324) (0.38,0.9424) (0.39,0.9516) (0.4,0.96) (0.41,0.9676) (0.42,0.9744) (0.43,0.9804) (0.44,0.9856) (0.45,0.99) (0.46,0.9936) (0.47,0.9964) (0.48,0.9984) (0.49,0.9996) (0.5,1.)																
	};
		\addlegendentry[align=left]{$\mathcal{R}_{\varepsilon}^n(\bar \W_\gamma)$ $ {\fontsize{3.8}{1}\selectfont \overset{\boldsymbol{\cdot}}{\supseteq}}$ $\mathcal{R}_{\varepsilon}^n(\bar \V_\gamma)$ \\
		 $\mathcal{D}_{\varepsilon}^n(\bar \W_\gamma)$ $ {\fontsize{3.8}{1}\selectfont \overset{\boldsymbol{\cdot}}{\subseteq}}$ $\mathcal{D}_{\varepsilon}^n(\bar \V_\gamma)$ }



	\end{axis} 

\end{tikzpicture} }
    \subfloat[$\gamma=0.4$]{  \begin{tikzpicture}
	\begin{axis}[
		height=6cm,
		width=8cm,
		xlabel=$\alpha$,
		ylabel=$\beta$,
		xmin=0,
		xmax=0.5,
		ymax=1.0,
		ymin=0,
		legend style={at={(0.655,0.52)},anchor=north,legend cell align=left,font=\scriptsize} 
	]


\addplot [draw=none,smooth,fill=ForestGreen!05,forget plot]coordinates {
(0.,0.885) (0.01,0.890) (0.02,0.895) (0.03,0.900) (0.04,0.905) (0.05,0.909) (0.06,0.914) (0.07,0.918) (0.08,0.922) (0.09,0.927) (0.1,0.930) (0.11,0.934) (0.12,0.938) (0.13,0.941) (0.14,0.945) (0.15,0.948) (0.16,0.951) (0.17,0.954) (0.18,0.957) (0.19,0.960) (0.2,0.963) (0.21,0.965) (0.22,0.968) (0.23,0.970) (0.24,0.972) (0.25,0.975) (0.26,0.977) (0.27,0.979) (0.28,0.981) (0.29,0.982) (0.3,0.984) (0.31,0.986) (0.32,0.987) (0.33,0.989) (0.34,0.990) (0.35,0.991) (0.36,0.993) (0.37,0.994) (0.38,0.995) (0.39,0.996) (0.4,0.997) (0.41,0.9982) (0.42,0.9997) (0.43,0.9999) (0.44,1) (0.45,1) (0.46,1) (0.47,1) (0.48,1) (0.49,1) (0.5,1.)	 (0,1) (0.,0.885)
};

\addplot [draw=none,smooth,fill=blue!05,forget plot]coordinates {
(0.,0.) (0.01,0.0396) (0.02,0.0784) (0.03,0.1164) (0.04,0.1536) (0.05,0.19) (0.06,0.2256) (0.07,0.2604) (0.08,0.2944) (0.09,0.3276) (0.1,0.36) (0.11,0.3916) (0.12,0.4224) (0.13,0.4524) (0.14,0.4816) (0.15,0.51) (0.16,0.5376) (0.17,0.5644) (0.18,0.5904) (0.19,0.6156) (0.2,0.64) (0.21,0.6636) (0.22,0.6864) (0.23,0.7084) (0.24,0.7296) (0.25,0.75) (0.26,0.7696) (0.27,0.7884) (0.28,0.8064) (0.29,0.8236) (0.3,0.84) (0.31,0.8556) (0.32,0.8704) (0.33,0.8844) (0.34,0.8976) (0.35,0.91) (0.36,0.9216) (0.37,0.9324) (0.38,0.9424) (0.39,0.9516) (0.4,0.96) (0.41,0.9676) (0.42,0.9744) (0.43,0.9804) (0.44,0.9856) (0.45,0.99) (0.46,0.9936) (0.47,0.9964) (0.48,0.9984) (0.49,0.9996) (0.5,1.) (0.5,0) (0,0)	
};


		\addplot[ForestGreen,thick,smooth] coordinates {
(0.,0.885) (0.01,0.890) (0.02,0.895) (0.03,0.900) (0.04,0.905) (0.05,0.909) (0.06,0.914) (0.07,0.918) (0.08,0.922) (0.09,0.927) (0.1,0.930) (0.11,0.934) (0.12,0.938) (0.13,0.941) (0.14,0.945) (0.15,0.948) (0.16,0.951) (0.17,0.954) (0.18,0.957) (0.19,0.960) (0.2,0.963) (0.21,0.965) (0.22,0.968) (0.23,0.970) (0.24,0.972) (0.25,0.975) (0.26,0.977) (0.27,0.979) (0.28,0.981) (0.29,0.982) (0.3,0.984) (0.31,0.986) (0.32,0.987) (0.33,0.989) (0.34,0.990) (0.35,0.991) (0.36,0.993) (0.37,0.994) (0.38,0.995) (0.39,0.996) (0.4,0.997) (0.41,0.9982) (0.42,0.9997) (0.43,0.9999) (0.44,1) (0.45,1) (0.46,1) (0.47,1) (0.48,1) (0.49,1) (0.5,1.)									
	};
	\addlegendentry{$\mathcal{R}_{\varepsilon}^n(\bar \W_\gamma) \subseteq \mathcal{R}_{\varepsilon}^n(\bar \V_\gamma)$}		




		\addplot[red,smooth,thick] coordinates {
(0.,0)(0.01,0.0399028)(0.02,0.0789727)(0.03,0.117212)(0.04,0.154621)(0.05,0.191204)(0.06,0.22696)(0.07,0.261893)(0.08,0.296003)(0.09,0.329292)(0.1,0.361762)(0.11,0.393414)(0.12,0.424249)(0.13,0.454269)(0.14,0.483475)(0.15,0.511868)(0.16,0.53945)(0.17,0.566221)(0.18,0.592183)(0.19,0.617337)(0.2,0.641684)(0.21,0.665224)(0.22,0.68796)(0.23,0.70989)(0.24,0.731017)(0.25,0.751341)(0.26,0.770863)(0.27,0.789584)(0.28,0.807503)(0.29,0.824623)(0.3,0.840943)(0.31,0.856463)(0.32,0.871186)(0.33,0.88511)(0.34,0.898236)(0.35,0.910565)(0.36,0.922097)(0.37,0.932832)(0.38,0.942771)(0.39,0.952438)(0.4,0.960261)(0.41,0.967813)(0.42,0.974568)(0.43,0.980529)(0.44,0.985696)(0.45,0.990067)(0.46,0.993643)(0.47,0.996424)(0.48,0.998411)(0.49,0.999603)(0.5,1)	};
	\addlegendentry{\!$|\mathcal{R}_{\varepsilon}^n(\bar \W_\gamma)\! \cap \! \mathcal{D}_{\varepsilon}^n(\bar \V_\gamma)|\! = \!\Omega(n)$\!\!\!}	

	
	\addplot[black,thick,smooth] coordinates {
(0.,0)(0.01,0.03986)(0.02,0.0788936)(0.03,0.117102)(0.04,0.154487)(0.05,0.191049)(0.06,0.226791)(0.07,0.261712)(0.08,0.295814)(0.09,0.329099)(0.1,0.361567)(0.11,0.393219)(0.12,0.424057)(0.13,0.454082)(0.14,0.483294)(0.15,0.511695)(0.16,0.539286)(0.17,0.566066)(0.18,0.592038)(0.19,0.617203)(0.2,0.64156)(0.21,0.665111)(0.22,0.687856)(0.23,0.709797)(0.24,0.730934)(0.25,0.751267)(0.26,0.770798)(0.27,0.789527)(0.28,0.807454)(0.29,0.824581)(0.3,0.840907)(0.31,0.856433)(0.32,0.871161)(0.33,0.885089)(0.34,0.89822)(0.35,0.910552)(0.36,0.922087)(0.37,0.932825)(0.38,0.942766)(0.39,0.95191)(0.4,0.960259)(0.41,0.967811)(0.42,0.974568)(0.43,0.980529)(0.44,0.985696)(0.45,0.990067)(0.46,0.993643)(0.47,0.996424)(0.48,0.998411)(0.49,0.999603)(0.5,1)							
	};
	\addlegendentry{$I(\bar \W_{\gamma})=I(\bar \V_{\gamma})$}

		\addplot[blue,thick,smooth] coordinates {
(0.,0.) (0.01,0.0396) (0.02,0.0784) (0.03,0.1164) (0.04,0.1536) (0.05,0.19) (0.06,0.2256) (0.07,0.2604) (0.08,0.2944) (0.09,0.3276) (0.1,0.36) (0.11,0.3916) (0.12,0.4224) (0.13,0.4524) (0.14,0.4816) (0.15,0.51) (0.16,0.5376) (0.17,0.5644) (0.18,0.5904) (0.19,0.6156) (0.2,0.64) (0.21,0.6636) (0.22,0.6864) (0.23,0.7084) (0.24,0.7296) (0.25,0.75) (0.26,0.7696) (0.27,0.7884) (0.28,0.8064) (0.29,0.8236) (0.3,0.84) (0.31,0.8556) (0.32,0.8704) (0.33,0.8844) (0.34,0.8976) (0.35,0.91) (0.36,0.9216) (0.37,0.9324) (0.38,0.9424) (0.39,0.9516) (0.4,0.96) (0.41,0.9676) (0.42,0.9744) (0.43,0.9804) (0.44,0.9856) (0.45,0.99) (0.46,0.9936) (0.47,0.9964) (0.48,0.9984) (0.49,0.9996) (0.5,1.)																
	};
		\addlegendentry[align=left]{$\mathcal{R}_{\varepsilon}^n(\bar \W_\gamma)$ $ {\fontsize{3.8}{1}\selectfont \overset{\boldsymbol{\cdot}}{\supseteq}}$ $\mathcal{R}_{\varepsilon}^n(\bar \V_\gamma)$ \\
		 $\mathcal{D}_{\varepsilon}^n(\bar \W_\gamma)$ $ {\fontsize{3.8}{1}\selectfont \overset{\boldsymbol{\cdot}}{\subseteq}}$ $\mathcal{D}_{\varepsilon}^n(\bar \V_\gamma)$ }

	\end{axis} 	



\end{tikzpicture}}\\
    \caption{Alignment for $\BSC$/$\BEC$ broadcast channel. This figure is an overview about the alignment properties of the polarized sets for a superposition coding scheme that achieves the capacity region of a $\BSC(\alpha)/\BEC(\beta)$ broadcast channel. As discussed above the optimal preprocessing $P_{X|U}$ is a $\BSC(\gamma)$ which defines the channels $\bar \W_\gamma := \BSC(\alpha) \circ \BSC(\gamma)$ and $\bar \V_\gamma:=\BEC(\beta)\circ \BSC(\gamma)$. The graphic depicts alignment results for these two channels for different values of $\gamma$. Recall that the case $\gamma=0$ has been analyzed in Figure~\ref{fig:BSCBEC}. The blue are shows the region where $\bar \V_{\gamma}$ is less noisy than $\bar \W_{\gamma}$, i.e., $0\leq \beta \leq 4\alpha(1-\alpha)$, which implies that $\mathcal{R}_{\varepsilon}^n(\bar \W_{\gamma})$ $\scriptsize{\overset{\boldsymbol{\cdot}}{\supseteq}}$ $\mathcal{R}_{\varepsilon}^n(\bar \V_{\gamma})$ and $\mathcal{D}_{\varepsilon}^n(\bar \W_{\gamma})$ $\scriptsize{\overset{\boldsymbol{\cdot}}{\subseteq}}$ $\mathcal{D}_{\varepsilon}^n(\bar \V_{\gamma})$. The green area depicts scenarios where $\mathcal{R}_{\varepsilon}^n(\bar \W_{\gamma}) \subseteq \mathcal{R}_{\varepsilon}^n(\bar \V_{\gamma})$ that are determined by evaluating the conditions given in Theorem~\ref{thm:UB} at level $2$. The conditions in Theorem~\ref{thm:LB} evaluated for level $4$ determine a region (ploted in red) where there is no proper alignment, i.e., $|\mathcal{R}_{\varepsilon}^n(\bar \W_{\gamma}) \cap \mathcal{D}_{\varepsilon}^n(\bar \V_{\gamma})|= \Omega(n)$.
    }
    \label{fig:BCoverview}
\end{figure}
\section{Entanglement assistance for quantum polar codes} \label{sec:entanglement}
Suppose we are given a quantum channel $\Phi:\mathcal{S}(\mathcal{H}) \to \mathcal{S}(\mathcal{H})$ and would like to use it to transmit quantum information. Consider a fixed orthonormal basis for the qubits at the input, call it \emph{amplitude basis}, which induces a classical-quantum channel $\W^{(A)}:\mathcal{X} \to \mathcal{D}(\mathcal{H})$, i.e., a channel that maps a classical input to a quantum mechanical output. Fixing a complementary basis at the input, call it \emph{phase basis}, induces another classical-quantum channel $\W^{(P)}:\mathcal{X} \to \mathcal{D}(\mathcal{H})$.
The central insight of \cite{renes12} is that the polarization phenomenon occurs simultaneously for $\W^{(A)}$ and $\W^{(P)}$ which allows us to reliably transmit quantum information over a noisy quantum channel. Note that the polarization in the phase basis occurs in the reversed order as the polarization in the amplitude basis. A more detailed discussion can be found in \cite{renes12}.

When considering $n$ copies of the original channel $\Phi$, this induces as explained above two classical-quantum channels $\W^{(A),n}$ and $\W^{(P),n}$ which polarize simultaneously. For $\varepsilon \in(0,1)$, four polarized sets can be defined as
\begin{subequations}
\begin{align}
\mathcal{Q}_{\varepsilon}^n(\Phi)&:=\left \lbrace i \in [n]: F(\W_{b(i-1)}^{(A),n})\leq \varepsilon \, \wedge \, F(\W_{\bar b(i-1)}^{(P),n})\leq \varepsilon \right \rbrace\\
\mathcal{A}_{\varepsilon}^n(\Phi)&:=\left \lbrace i \in [n]: F(\W_{b(i-1)}^{(A),n})\geq 1-\varepsilon \, \wedge \, F(\W_{\bar b(i-1)}^{(P),n})\leq \varepsilon \right \rbrace\\
\mathcal{P}_{\varepsilon}^n(\Phi)&:=\left \lbrace i \in [n]: F(\W_{b(i-1)}^{(A),n})\leq \varepsilon \, \wedge \, F(\W_{\bar b(i-1)}^{(P),n})\geq 1-\varepsilon \right \rbrace\\
\mathcal{E}_{\varepsilon}^n(\Phi)&:=\left \lbrace i \in [n]: F(\W_{b(i-1)}^{(A),n})\geq 1-\varepsilon \, \wedge \, F(\W_{\bar b(i-1)}^{(P),n})\geq 1- \varepsilon \right \rbrace,
\end{align}
\end{subequations}
where $b(i)$ for $i \in [n]$ denotes the binary representation of the integer $i$ with $\log n$ bits. As mentioned before, for most applications it is convenient to choose $\varepsilon$ as small as possible which is $\varepsilon =O(2^{-n^{\nu}})$ for $\nu < \frac{1}{2}$. 
 $\mathcal{Q}_{\varepsilon}^n(\Phi)$ denotes the set of synthesized channels that are good in both bases. The set $\mathcal{A}_{\varepsilon}^n(\Phi)$ contains synthesized channels that are bad in the amplitude and good in the phase basis and $\mathcal{P}_{\varepsilon}^n(\Phi)$ characterizes the synthesized channels that are bad in the phase and good in the amplitude basis. Finally the set $\mathcal{E}_{\varepsilon}^n(\Phi)$ contains the indices corresponding to synthesized channels that are bad in both bases. As explained in \cite{renes12}, the inputs characterized by $\mathcal{Q}_{\varepsilon}^n(\Phi)$ are used to send the quantum data and the inputs corresponding to $\mathcal{A}_{\varepsilon}^n(\Phi)$ respectively $\mathcal{P}_{\varepsilon}^n(\Phi)$ are frozen in the amplitude respectively phase basis. The inputs given by $\mathcal{E}_{\varepsilon}^n(\Phi)$ must be entangled with the decoder to ensure proper decoding, which is the reason quantum polar codes are entanglement-assisted codes whenever $|\mathcal{E}_{\varepsilon}^n(\Phi)|=\Omega(n)$. In the following, we introduce two conditions that for an arbitrary quantum channel $\Phi$ can be used to determine if $|\mathcal{E}_{\varepsilon}^n(\Phi)|=0$ or $|\mathcal{E}_{\varepsilon}^n(\Phi)|=\Omega(n)$.
\subsection{Induced channels of qubit Pauli channels} \label{sec:pauliChannels}
A Pauli qubit channel applies a random Pauli operator to its input. In its most general form it can be written as the mapping $\Phi:\mathcal{S}(\mathcal{H}) \to \mathcal{S}(\mathcal{H})$ with $\dim \mathcal{H}=2$ such that $\rho \mapsto \sum_{u,v \in \{0,1\}} p_{u,v} \sigma_X^u \sigma_Z^v\, \rho\, \sigma_Z^v \sigma_X^u$. The corresponding induced amplitude channel $\W^{(A)}:\{0,1\} \to \mathcal{D}(\mathcal{H})$ is described by a $\BSC(p_u)$ and thus $\F{\BSC(p_u)}=2\sqrt{p_0\, p_1}$. The induced phase channel $\W^{(P)}:\{0,1\} \to \mathcal{D}(\mathcal{H})$ is described by the mapping $x\mapsto \Trp{E}{\braket{\psi_x}{\psi_x}^{ABE}}=:\rho_x^{AB}$ with
\begin{align}
\ket{\psi_x}^{ABE}= \frac{1}{\sqrt{2}} \sum_{z=0}^1 (-1)^{xz} \ket{z}_A \sum_{u,v \in \{0,1\}} \sqrt{p_{u,v}}X^u Z^v \ket{z}_B \ket{u,v}_E
\end{align}
being the output state of the circuit depicted in Figure~\ref{fig:phasePauli}. Thus the fidelity of channel $\W^{(P)}$ is given by $\F{\W^{(P)}}=\|\sqrt{\rho_0^{AB}}\sqrt{\rho_1^{AB}}\|_1$.
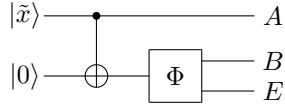
\begin{figure}[!htb]
\centering
\def \x{0.7}
\def \xb{0.7}
\def \y{0.8}
\def \yb{0.35}
\def \s{0.20}
\def \la{0.25}

\def \r{0.15} 

\begin{tikzpicture}[scale=1,auto, node distance=1cm,>=latex']
	

 \draw[] (0,0) -- (3*\x+\xb,0);   
\draw[] (0,-\y) -- (2*\x,-\y); 

\draw[] (2*\x+\xb,-\y-\s) -- (3*\x+\xb,-\y-\s); 
\draw[] (2*\x+\xb,-\y+\s) -- (3*\x+\xb,-\y+\s); 

 \draw[] (2*\x,-\y-\yb) -- (2*\x,-\y+\yb);   
 \draw[] (2*\x+\xb,-\y-\yb) -- (2*\x+\xb,-\y+\yb);  
 \draw[] (2*\x,-\y-\yb) -- (2*\x+\xb,-\y-\yb);  
 \draw[] (2*\x,-\y+\yb) -- (2*\x+\xb,-\y+\yb);
 \node at (2*\x+0.5*\xb,-\y) {$\Phi$};
   
\draw[] (\x,-\y) circle(\r); 
  \fill (\x,0) circle (1.5pt);
\draw[] (\x,0) -- (\x,-\y-\r); 

\node at (-\la,0) {$\ket{\tilde x}$};
\node at (-\la,-\y) {$\ket{0}$};
\node at (3*\x+\xb+\la,-\y+\s) {$B$};
\node at (3*\x+\xb+\la,-\y-\s) {$E$};
\node at (3*\x+\xb+\la,0) {$A$};

\end{tikzpicture}
\caption{The induced phase channel for an arbitrary Pauli channel $\Phi$, with $ \ket{\tilde x}=\tfrac{1}{\sqrt{2}} \sum_{z \in \{0,1 \}} (-1)^{xz} \ket{z}$ for $x \in \{0,1 \}$.}
\label{fig:phasePauli}
\end{figure} 

\subsection{Sufficient conditions for the need of entanglement-assistance}\label{sec:bounds}
Inspired by the techniques presented in Section~\ref{sec:alignmentBounds} we can define sufficient conditions for $|\mathcal{E}_{\varepsilon}^n(\Phi)|$ being small or large on every level of the polarization tree.
\begin{myprop}[Level $k$ condition for $|\mathcal{E}_{\varepsilon}^n(\Phi)|$ being large]\label{prop:Elb}
Let $k \in \mathbb{N}_0$ and $\varepsilon \in (0,1)$. If $I(\W^{(A)}_b) + I(\W^{(P)}_{\bar b}) \leq 1$ for some $b\in \{0,1\}^k$ and for all possible choices of an amplitude basis, then $|\mathcal{E}_{\varepsilon}^n(\Phi)| = \Omega(n)$.
\end{myprop}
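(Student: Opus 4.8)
\emph{Proof strategy.} The plan is to imitate the counting argument behind Remark~\ref{rmk:doubleBadUB} and Theorem~\ref{thm:LB}, but to run it simultaneously in the amplitude and phase bases. First I would fix an arbitrary amplitude basis for the input qubits; the complementary basis then fixes the phase basis, and together these determine the cq channels $\W^{(A)}$ and $\W^{(P)}$ as well as all of their synthesized descendants. The structural fact I would invoke is the one underlying the quantum polar coding scheme of \cite{renes12}: the two polarization processes proceed in lockstep but with \emph{reversed} ordering, so that the coordinate $i\in[n]$ is attached to the synthesized channels $\W_{b(i-1)}^{(A),n}$ and $\W_{\bar b(i-1)}^{(P),n}$. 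Consequently the $n/2^k$ coordinates whose amplitude index $b(i-1)$ has length-$k$ prefix $b$ are precisely those whose phase index $\bar b(i-1)$ has length-$k$ prefix $\bar b$; write $\mathcal{I}_b$ for this block of coordinates.

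Second, I would work entirely inside $\mathcal{I}_b$ for the string $b$ supplied by the hypothesis. Restricted to $\mathcal{I}_b$, the synthesized amplitude channels are exactly the descendants of $\W^{(A)}_b$ and the synthesized phase channels are exactly the descendants of $\W^{(P)}_{\bar b}$; applying the (cq) polarization phenomenon \cite{arikan09,arikan10,wilde_polar_2011,renes12} to each of these two cq channels gives
\begin{align}
\big|\{i\in\mathcal{I}_b:\ \F{\W_{b(i-1)}^{(A),n}}\geq 1-\varepsilon\}\big| &=\tfrac{n}{2^k}\big(1-I(\W^{(A)}_b)\big)+o(n),\\
\big|\{i\in\mathcal{I}_b:\ \F{\W_{\bar b(i-1)}^{(P),n}}\geq 1-\varepsilon\}\big| &=\tfrac{n}{2^k}\big(1-I(\W^{(P)}_{\bar b})\big)+o(n).
\end{align}
By definition, $\mathcal{E}_\varepsilon^n(\Phi)\cap\mathcal{I}_b$ is the intersection of these two subsets of $\mathcal{I}_b$, and since $|\mathcal{I}_b|=n/2^k$, inclusion--exclusion yields
\[
|\mathcal{E}_\varepsilon^n(\Phi)|\ \geq\ |\mathcal{E}_\varepsilon^n(\Phi)\cap\mathcal{I}_b|\ \geq\ \tfrac{n}{2^k}\big(1-I(\W^{(A)}_b)-I(\W^{(P)}_{\bar b})\big)+o(n).
\]
Under the hypothesis $I(\W^{(A)}_b)+I(\W^{(P)}_{\bar b})\leq 1$ the right-hand side is $\Omega(n)$, exactly as Remark~\ref{rmk:doubleBadUB} feeds into Theorem~\ref{thm:LB}. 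Since the hypothesis is assumed for every amplitude basis, the same lower bound holds for the quantum polar code built from any one of them, so entanglement assistance cannot be avoided by any choice of encoding basis.

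The step I expect to be the main obstacle is the bookkeeping that ties the two reversed polarization trees to the common coordinate block: one has to read off from the circuit description in \cite{renes12} that, on $\mathcal{I}_b$, the realized amplitude channels are exactly the descendants of $\W^{(A)}_b$ while the realized phase channels are exactly the descendants of $\W^{(P)}_{\bar b}$, so that the inclusion--exclusion is legitimately applied to two channel families indexed by the same $n/2^k$ positions. A minor point, inherited from Theorem~\ref{thm:LB}, is that the estimate is only informative when the hypothesis holds with room to spare: summing $I(\W^{(A)}_{bc})+I(\W^{(P)}_{\overline{bc}})$ over $c\in\{0,1\}$ gives $2\big(I(\W^{(A)}_b)+I(\W^{(P)}_{\bar b})\big)$, so descending the tree makes the inequality strict at some descendant (where the block argument again applies) unless the sum is identically $1$ along an entire branch.
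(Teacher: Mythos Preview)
Your proposal is correct and follows essentially the same counting argument as the paper: the paper first establishes the level-$0$ case via the polarization phenomenon (exactly your inclusion--exclusion on the two ``bad'' sets) and then simply states that the same argument applied at any node of the polarization tree yields the level-$k$ statement, together with the observation about basis choice. Your write-up is in fact more explicit than the paper's about the block $\mathcal{I}_b$ and the reversed indexing that ties the amplitude subtree rooted at $b$ to the phase subtree rooted at $\bar b$, and you correctly flag the equality edge case, which is present (and unaddressed) in the paper's proof as well.
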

\begin{proof}
Let $\mathcal{L}_{\varepsilon}^n (\Phi):=\left \lbrace i \in [n]: F(\W_{b(i)}^{(A),n})\leq \varepsilon \right \rbrace$ and $\mathcal{M}_{\varepsilon}^n (\Phi):=\left \lbrace i \in [n]: F(\W_{\bar b(i)}^{(P),n})\leq \varepsilon \right \rbrace$.
The polarization phenomenon \cite{arikan09,arikan10,wilde_polar_2011} ensures that $n I(\W^{(A)})=|\mathcal{L}_{\varepsilon}^n(\Phi)|-o(n)$ and $n I(\W^{(P)})=|\mathcal{M}_{\varepsilon}^n(\Phi)|-o(n)$. Therefore $I(\W^{(A)}) + I(\W^{(P)}) \leq 1$ directly implies $|\mathcal{E}_{\varepsilon}^n(\Phi)|=\Omega(n)$. Applying the same argument at every level of the polarization tree proves the assertion. Note that as we are free to choose the basis which we call amplitude basis, we have to verify the condition $I(\W^{(A)}_b) + I(\W^{(P)}_{\bar b}) \leq 1$ for all possible choices of this basis.
\end{proof}

\begin{myprop}[Level $k$ condition for $|\mathcal{E}_{\varepsilon}^n(\Phi)|$ being small] \label{prop:Eub}
Let $k \in \mathbb{N}_0$ and $\varepsilon \in (0,1)$. If $\F{\W^{(A)}_b} + \F{\W^{(P)}_{\bar b}} \leq 1$ for all $b\in \{0,1\}^k$, then $|\mathcal{E}_{\varepsilon}^n(\Phi)|=0$.
\end{myprop}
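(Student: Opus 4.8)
The plan is to show that the level-$k$ hypothesis propagates all the way to the bottom of the polarization tree and then to read the conclusion directly off the definition of $\mathcal{E}_\varepsilon^n(\Phi)$. First I would fix $n=2^m$ and establish that $\F{\W^{(A)}_d}+\F{\W^{(P)}_{\bar d}}\le 1$ for every $d\in\{0,1\}^{\log n}$. When $m\ge k$ this is simply Corollary~\ref{cor:Hamed} applied separately to each of the $2^k$ cq channel pairs $(\W^{(A)}_b,\W^{(P)}_{\bar b})$ with $b\in\{0,1\}^k$ (each of which satisfies the fidelity-sum bound by hypothesis): for any suffix $c$ of length $m-k$ the corollary gives $\F{(\W^{(A)}_b)_c}+\F{(\W^{(P)}_{\bar b})_{\bar c}}\le 1$, and since $(\W^{(A)}_b)_c=\W^{(A)}_{bc}$ and $(\W^{(P)}_{\bar b})_{\bar c}=\W^{(P)}_{\bar b\bar c}=\W^{(P)}_{\overline{bc}}$, this is exactly the claimed inequality for the length-$m$ string $bc$; every length-$m$ string arises this way. (For the degenerate case $m<k$ the level-$m$ inequality still follows from the hypothesis by the monotonicity argument of Remark~\ref{rmk:UBworse}, using $\F{\W_1}=\F{\W}^2$ and $\F{\W_0}\le 2\F{\W}-\F{\W}^2$, so the bound holds for all $n$.) The one thing to keep straight here is that both in the definition of $\mathcal{E}_\varepsilon^n$ and in Proposition~\ref{prop:Hamed}/Corollary~\ref{cor:Hamed} the phase channel carries the \emph{complemented} index, reflecting that amplitude and phase polarize in opposite orders; once this is matched up, the descent through the tree is automatic.

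Second, suppose toward a contradiction that some $i\in\mathcal{E}_\varepsilon^n(\Phi)$ and set $d=b(i-1)$. By definition $\F{\W^{(A),n}_d}\ge 1-\varepsilon$ and $\F{\W^{(P),n}_{\bar d}}\ge 1-\varepsilon$ hold simultaneously, so adding these and invoking the inequality just established gives $2(1-\varepsilon)\le \F{\W^{(A)}_d}+\F{\W^{(P)}_{\bar d}}\le 1$, i.e.\ $\varepsilon\ge\tfrac12$. Hence in the regime $\varepsilon<\tfrac12$ — the only one of interest, since quantum polar coding takes $\varepsilon=O(2^{-n^{\nu}})$ with $\nu<\tfrac12$ — no such index $i$ exists, so $\mathcal{E}_\varepsilon^n(\Phi)=\emptyset$ and $|\mathcal{E}_\varepsilon^n(\Phi)|=0$.

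I do not expect a genuine obstacle: modulo the bookkeeping of complement indices, the argument is a one-line counting observation that parallels the proof of Theorem~\ref{thm:UB}. Note that, in contrast to that proof, no uncertainty relation (Proposition~\ref{prop:UR}) is needed here, because $\W^{(P)}$ is fed directly into Corollary~\ref{cor:Hamed} as the second channel of the pair, rather than being reached through a counterpart. The only subtlety worth flagging is that the final step tacitly uses the strict inequality $\varepsilon<\tfrac12$ to turn $2(1-\varepsilon)\le 1$ into a contradiction; stating the proposition for $\varepsilon\in(0,1)$ is harmless in practice, but this point should be made explicit.
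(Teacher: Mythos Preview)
Your proposal is correct and follows essentially the same route as the paper: invoke Corollary~\ref{cor:Hamed} to push the level-$k$ fidelity-sum bound down to any $d\in\{0,1\}^{\log n}$, then observe that $\F{\W^{(A)}_d}\ge 1-\varepsilon$ forces $\F{\W^{(P)}_{\bar d}}\le\varepsilon$, so no index can lie in $\mathcal{E}_\varepsilon^n(\Phi)$. The paper's proof is a two-line version of exactly this argument (and, like yours, tacitly needs $\varepsilon<\tfrac12$; your explicit flagging of this point is warranted).
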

\begin{proof}
Let $n\geq k$ and $d\in \{0,1\}^n$ such that $F(\W_d^{(A)}) \geq 1-\varepsilon$. Corollary~\ref{cor:Hamed} together with the assumption of the proposition implies that $F(\W_{\bar d}^{(P)})\leq \varepsilon$ and thus $|\mathcal{E}_{\varepsilon}^n(\Phi)|=0$. 
\end{proof}

\begin{myremark}[Conditions get stronger for higher levels] 
By the same arguments that are given in Remark~\ref{rmk:LBworse} and Remark~\ref{rmk:UBworse} one can prove that the conditions to conclude if enganglement-assistance is needed or not given in Propositions~\ref{prop:Elb} and \ref{prop:Eub} get stronger by increasing the level, i.e., if their assumption is satisfied for level $k$ it is also satisfied for all levels $\ell \leq k$. 
\end{myremark}

\subsection{Examples}
In this section we show the performance of the conditions given in Propositions~\ref{prop:Elb} and \ref{prop:Eub} on three examples. In addition with these examples, we will prove that there exist both channels for which $|\mathcal{E}_{\varepsilon}^n(\Phi)|=0$ however also channels such that $|\mathcal{E}_{\varepsilon}^n(\Phi)|=\Omega(n)$, implying that quantum polar codes sometimes do and sometimes do not require preshared entanglement.
\begin{myex}[Depolarizing channel]
Consider a qubit depolarizing channel $\Phi:\mathcal{S}(\mathcal{H}) \to \mathcal{S}(\mathcal{H})$ with $\dim \mathcal{H}=2$ that maps $\rho \mapsto (1-p)\rho +\tfrac{p}{3} (\sigma_X \rho\, \sigma_X + \sigma_Z \rho \, \sigma_Z + \sigma_Y\rho \, \sigma_Y)$. The channel coherent information can be computed to be $Q^{(1)}(\Phi)=1+(1-p) \log(1-p) + p \log(\tfrac{p}{3})$ \cite{wilde_book}. As we are interested in a region where the depolarizing channel has a nonnegative channel coherent information,  we can restrict ourselves to $p \in [0,0.18929]$. The conditions given in Proposition~\ref{prop:Eub} for level 0, ensure that for $p$ such that $\F{\W^{(A)}}+\F{W^{(P)}} \leq 1$ we have $|\mathcal{E}_{\varepsilon}^n(\Phi) | = 0$. As explained in Section~\ref{sec:pauliChannels} we have $\F{\W^{(A)}}= 2 \sqrt{\tfrac{2p}{3}(1-\tfrac{2p}{3})}$ and $\F{\W^{(P)}}=\tfrac{2}{3}(p + \sqrt{3}\sqrt{p(1-p)})$ which gives $|\mathcal{E}_{\varepsilon}^n(\Phi)|=0$ if $p \in [0,0.120535]$. We can improve the condition by considering higher levels as discussed in Section~\ref{sec:bounds}, such that for level 2, we obtain that $|\mathcal{E}_{\varepsilon}^n(\Phi)|=0$ if $p \in [0,0.149062]$. The condition of Proposition~\ref{prop:Elb} evaluated for level 3 shows that for $p \in [0.187757,0.18929]$ we have $|\mathcal{E}_{\varepsilon}^n(\Phi)|=\Omega(n)$, i.e., entanglement assistance is needed. Note that due to the symmetry of the depolarizing channel it is sufficient to consider an amplitude basis being $\sigma_Z$. This example disproves the conjecture stated in \cite{renes12} saying that entanglement assistance is not needed for Pauli channels.
\end{myex}
\begin{myex}[BB84 channel]
Consider a qubit Pauli channel with independent bit flip and phase flip error probability where $q_X \in [0,\tfrac{1}{2}]$ denotes the bit flip and $q_Z\in [0,\tfrac{1}{2}]$ the phase flip probability. More formally this is a channel $\Phi:\mathcal{S}(\mathcal{H}) \to \mathcal{S}(\mathcal{H})$ with $\dim \mathcal{H}=2$ that maps $\rho \mapsto (1-q_X-q_Z+q_X q_Z)\rho +(q_X - q_X q_Z )\sigma_X \rho\, \sigma_X + (q_Z-q_Z q_X)\sigma_Z \rho\, \sigma_Z + q_X q_Z \sigma_Y \rho\, \sigma_Y$. The channel coherent information of the BB84 channel is given by $Q^{(1)}(\Phi)=1-\Hb(q_X)-\Hb(q_Z)$ \cite{wilde_book}. As shown in \cite{renes12}, the induced amplitude channel --- when considering $\sigma_Z$ being the amplitude basis --- $\W^{(A)}$ is a $\BSC(q_X)$ and the induced phase channel $\W^{(P)}$ is a $\BSC(q_Z)$. Applying the conditions given in Theorem~\ref{thm:UB} for these two channels allows us to determine a region, that is depicted in Figure~\ref{fig:examples}, where entanglement assistance is not needed. We note that the region where $|\mathcal{E}_{\varepsilon}^n(\Phi)|=0$ evaluated for level $4$ (cf.\ Figure~\ref{fig:examples}) is strictly larger than the level $0$ bound which was already mentioned in \cite{renes12}.
\end{myex}
\begin{myex}[Two-Pauli channel]
Consider a qubit two-Pauli channel which is described by the mapping $\Phi:\mathcal{S}(\mathcal{S}) \to \mathcal{S}(\mathcal{H})$, $\rho \mapsto (1-q_X-q_Z) \rho + q_X \sigma_X \rho\, \sigma_X + q_Z \sigma_Z \rho\, \sigma_Z$ for $q_X,q_Z \in [0,\tfrac{1}{2}]$. Using Proposition~\ref{prop:Eub} a region, that is shown in Figure~\ref{fig:examples}, where no entanglement assistance is needed can be determined. We note that the coherent information of $\Phi$ is given by $Q^{(1)}(\Phi)=1+(1-q_X-q_Z)\log(1-q_X-q_Z) + q_X \log q_X + q_Z \log q_Z$ \cite{wilde_book}.
\end{myex}

\begin{figure}[!htb]
\hspace{-5mm}
    \subfloat[BB84 channel]{   \begin{tikzpicture}
	\begin{axis}[
		height=8cm,
		width=8cm,
		xlabel=$q_X$,
		ylabel=$q_Z$,
		xmin=0,
		xmax=0.5,
		ymax=0.5,
		ymin=0,
		legend style={at={(0.59,0.95)},anchor=north,legend cell align=left} 
	]
		
\addplot [draw=none,smooth,fill=ForestGreen!05,forget plot]coordinates {
(0.,0.5) (0.001,0.364) (0.0025,0.3292) (0.005,0.2970) (0.01,0.2594) (0.015,0.2345) (0.02,0.2155) (0.025,0.1999) (0.03,0.1866) (0.035,0.1750) (0.04,0.1646) (0.045,0.1554) (0.05,0.1469) (0.055,0.1392) (0.06,0.1320) (0.07,0.1192) (0.08,0.1078) (0.09,0.0971) (0.1,0.0872) (0.11,0.0781) (0.12,0.0694) (0.13,0.0615) (0.14,0.0544) (0.15,0.0481) (0.16,0.0424) (0.17,0.0373) (0.18,0.0327) (0.19,0.0286) (0.2,0.0249) (0.21,0.0216) (0.22,0.0187) (0.23,0.0160) (0.24,0.0137) (0.25,0.0117) (0.26,0.0099) (0.27,0.0083) (0.28,0.0069) (0.29,0.0057) (0.3,0.0047) (0.31,0.0038) (0.32,0.003) (0.33,0.0024) (0.34,0.0019) (0.35,0.0014) (0.36,0.0011) (0.37,0.0008) (0.38,0.0006) (0.39,0.0004) (0.4,0.0002) (0.41,0.0001) (0.42,0) (0.43,0) (0.44,0) (0.45,0) (0.46,0) (0.47,0) (0.48,0) (0.49,0) (0.5,0) (0,0) (0,0.5)			
	};

\addplot [draw=none,smooth,fill=black!05,forget plot]coordinates {
(0.,0.5) (0.001,0.364) (0.0025,0.3292) (0.005,0.2970) (0.01,0.2594) (0.015,0.2345) (0.02,0.2155) (0.025,0.1999) (0.03,0.1866) (0.035,0.1750) (0.04,0.1646) (0.045,0.1554) (0.05,0.1469) (0.055,0.1392) (0.06,0.1320) (0.07,0.1192) (0.08,0.1078) (0.09,0.0971) (0.1,0.0872) (0.11,0.0781) (0.12,0.0694) (0.13,0.0615) (0.14,0.0544) (0.15,0.0481) (0.16,0.0424) (0.17,0.0373) (0.18,0.0327) (0.19,0.0286) (0.2,0.0249) (0.21,0.0216) (0.22,0.0187) (0.23,0.0160) (0.24,0.0137) (0.25,0.0117) (0.26,0.0099) (0.27,0.0083) (0.28,0.0069) (0.29,0.0057) (0.3,0.0047) (0.31,0.0038) (0.32,0.003) (0.33,0.0024) (0.34,0.0019) (0.35,0.0014) (0.36,0.0011) (0.37,0.0008) (0.38,0.0006) (0.39,0.0004) (0.4,0.0002) (0.41,0.0001) (0.42,0) (0.43,0) (0.44,0) (0.45,0) (0.46,0) (0.47,0) (0.48,0) (0.49,0) (0.5,0)
(0.49,0.0000166647)(0.48,0.0000763557)(0.47,0.000188027) (0.46,0.0003586)(0.45,0.000594261)(0.44,0.000900853)(0.43,0.00128408) (0.42,0.0017496)(0.41,0.00230315)(0.4,0.00295057) (0.39,0.00369787)(0.38,0.00455129)(0.37,0.00551734) (0.36,0.00660285)(0.35,0.00781501)(0.34,0.00916144)(0.33,0.0106502)(0.32,0.0122901) (0.31,0.0140901)(0.3,0.0160605)(0.29,0.0182118)(0.28,0.0205557)(0.27,0.0231049) (0.26,0.0258732)(0.25,0.0288757)(0.24,0.0321292)(0.23,0.035652) (0.22,0.0394647)(0.21,0.0435901) (0.2,0.0480539)(0.19,0.0528851)(0.18,0.0581167)  (0.17,0.0637863)(0.16,0.0699375)(0.15,0.076621)(0.14,0.0838962)(0.13,0.0918339)(0.12,0.100519) (0.11,0.110056)(0.1,0.120573)(0.09,0.132233)(0.08,0.145248)(0.07,0.159903)(0.06,0.176588)(0.05,0.195876) (0.04,0.218657)(0.03,0.246453) (0.02,0.282295) (0.01,0.334247)(0.009,0.341147)(0.008,0.348563) (0.007,0.356603) (0.006,0.365413)(0.005,0.375205)(0.004,0.386311) (0.003,0.399294)(0.002,0.415272) (0.001,0.437205)(0,0.5)
	};


	\addplot[black,thick,smooth] coordinates {
(0,0.5) (0.001,0.437205) (0.002,0.415272) (0.003,0.399294) (0.004,0.386311) (0.005,0.375205) (0.006,0.365413) (0.007,0.356603) (0.008,0.348563) (0.009,0.341147) (0.01,0.334247) (0.02,0.282295) (0.03,0.246453) (0.04,0.218657) (0.05,0.195876) (0.06,0.176588) (0.07,0.159903) (0.08,0.145248) (0.09,0.132233) (0.1,0.120573) (0.11,0.110056) (0.12,0.100519) (0.13,0.0918339) (0.14,0.0838962) (0.15,0.076621) (0.16,0.0699375) (0.17,0.0637863) (0.18,0.0581167) (0.19,0.0528851) (0.2,0.0480539) (0.21,0.0435901) (0.22,0.0394647) (0.23,0.035652) (0.24,0.0321292) (0.25,0.0288757) (0.26,0.0258732) (0.27,0.0231049) (0.28,0.0205557) (0.29,0.0182118) (0.3,0.0160605) (0.31,0.0140901) (0.32,0.0122901) (0.33,0.0106502) (0.34,0.00916144) (0.35,0.00781501) (0.36,0.00660285) (0.37,0.00551734) (0.38,0.00455129) (0.39,0.00369787) (0.4,0.00295057) (0.41,0.00230315) (0.42,0.0017496) (0.43,0.00128408) (0.44,0.000900853) (0.45,0.000594261) (0.46,0.0003586) (0.47,0.000188027) (0.48,0.0000763557) (0.49,0.0000166647) (0.5,2.1283*10^-10)											
	};
	\addlegendentry{$Q^{(1)}(\Phi) \geq 0$}

		\addplot[ForestGreen,thick,smooth] coordinates {
(0.,0.5) (0.001,0.364) (0.0025,0.3292) (0.005,0.2970) (0.01,0.2594) (0.015,0.2345) (0.02,0.2155) (0.025,0.1999) (0.03,0.1866) (0.035,0.1750) (0.04,0.1646) (0.045,0.1554) (0.05,0.1469) (0.055,0.1392) (0.06,0.1320) (0.07,0.1192) (0.08,0.1078) (0.09,0.0971) (0.1,0.0872) (0.11,0.0781) (0.12,0.0694) (0.13,0.0615) (0.14,0.0544) (0.15,0.0481) (0.16,0.0424) (0.17,0.0373) (0.18,0.0327) (0.19,0.0286) (0.2,0.0249) (0.21,0.0216) (0.22,0.0187) (0.23,0.0160) (0.24,0.0137) (0.25,0.0117) (0.26,0.0099) (0.27,0.0083) (0.28,0.0069) (0.29,0.0057) (0.3,0.0047) (0.31,0.0038) (0.32,0.003) (0.33,0.0024) (0.34,0.0019) (0.35,0.0014) (0.36,0.0011) (0.37,0.0008) (0.38,0.0006) (0.39,0.0004) (0.4,0.0002) (0.41,0.0001) (0.42,0) (0.43,0) (0.44,0) (0.45,0) (0.46,0) (0.47,0) (0.48,0) (0.49,0) (0.5,0)					
	};
		\addlegendentry{$|\mathcal{E}_{\varepsilon}^n(\Phi)| =0$ (level 4)}	
		
	\end{axis}  
\node[ForestGreen] at (1.1,0.30) {\footnotesize $|\mathcal{E}_{\varepsilon}^n(\Phi)|=0$};

\end{tikzpicture} }
    \subfloat[Two-Pauli channel]{\input{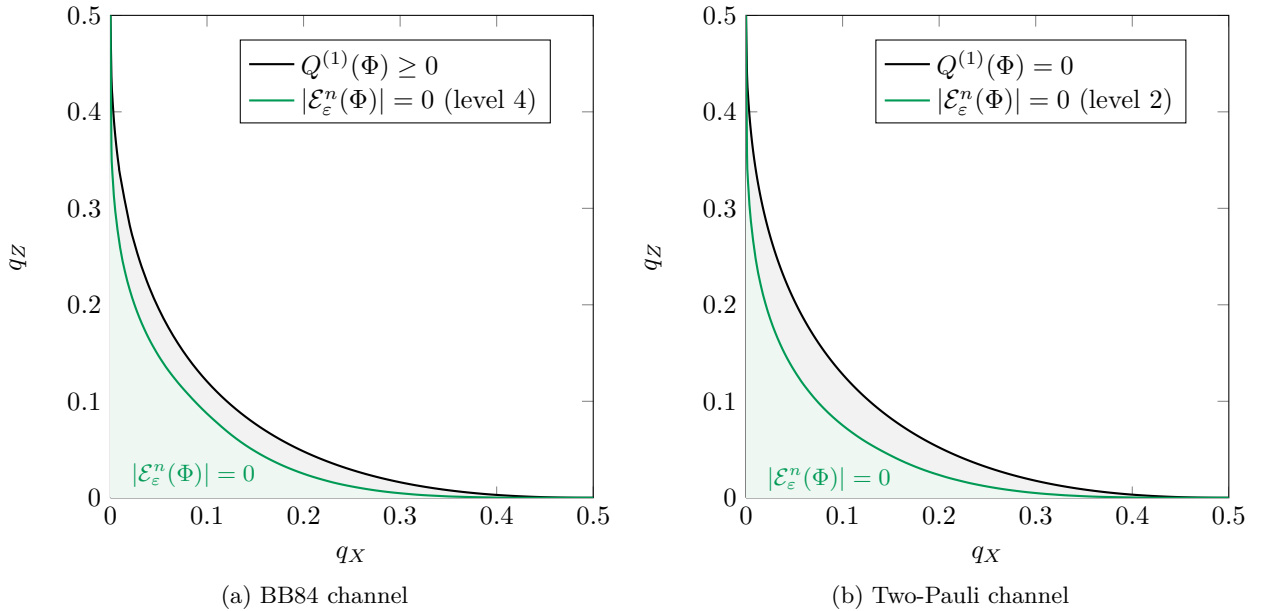}}
    \caption{Regions for a BB84 and a two-Pauli channel where $|\mathcal{E}_{\varepsilon}^n(\Phi)|=0$, i.e., no preshared entanglement is required. The black area shows the region of positive channel coherent information.}
    \label{fig:examples}
\end{figure}
\section{Conclusion} \label{sec:conclusion}
We derived two analytical conditions that can be used to determine the alignment of polarized sets between different DMCs. The condition of Theorem~\ref{thm:LB} that recognizes  situations where there is no alignment (not even essentially) uses a simple counting argument. The condition of Theorem~\ref{thm:UB}, which identifies scenarios where there is an alignment of the polarized sets, is based on the uncertainty principle of quantum mechanics. As the authors are not aware of a purely classical proof for this statement, this seems to display one of the rare incidences where a quantum argument is useful to prove a classical result which is hard to obtain with a purely classical proof technique. We demonstrated on the example of a BSC-BEC pair that the two conditions can be close in the sense that essentially every possible setup can be classified into a proper or improper alignment of the polarized sets.

As we discussed in the main text, it is important to understand the alignment of polarized sets as it is directly related to the universality question of (standard) polar codes. This is oftentimes needed for certain coding tasks, such as network coding problems. Whenever there is the need of a universal polar code, the conditions proposed in this paper will be useful to determine if there is a proper alignment of the polarized sets or not. If there is, the standard coding techniques can be used, if not one could use further universal polarization techniques (as discussed in \cite{hassani13,sasogluLele13}), at the expense of a worse scaling behaviour in the blocklength. Understanding the alignment of polarized sets is not however limited to universality considerations. For example, it could be also used for various (network) coding scenarios where alignment of some polarized sets is needed (see e.g. \cite{vardy11, goela13, mond14}).

Another interesting application of the proposed alignment conditions is to  offer a way to determine if the quantum polar codes introduced in \cite{renes12} do or do not need entanglement assistance. This is particularly relevant when using quantum polar codes in practice as distributing (noiseless) entangled states is difficult. 
For future work, it is of interest to analyze how the conditions of Theorem~\ref{thm:UB} change if one uses different versions of the channel counterpart that are for example more pure. 

\section*{Acknowledgments}
The authors thank Dominik Waldburger for helpful discussions.
JMR and DS were supported by the Swiss National Science Foundation (through the National Centre of Competence in Research `Quantum Science and Technology') and by the European Research Council (grant No.~258932).

\bibliography{./bibtex/header,./bibtex/bibliofile}


\clearpage



\end{document}